\definecolor{dgreen}{rgb}{0,.5,.0}
\definecolor{purple}{rgb}{.5,0,.5}
\definecolor{green}{rgb}{0,1,0}
\definecolor{lgreen}{rgb}{.5,1,.5}
\newcommand{\elll}{\ell^{\rightarrow}}
\newcommand{\ellrr}[1]{\ell_{#1}^{\leftarrow}}
\newcommand{\ESEll}[1]{{\mathcal{S}_{#1}^{\rightarrow}}}
\newcommand{\ESErr}[1]{{\mathcal{S}_{#1}^{\leftarrow}}}
\newcommand{\Wl}{{\mathcal{W}^{\rightarrow}}}
\newcommand{\Wrigth}{{\mathcal{W}^{\leftarrow}}}
\DeclareMathOperator{\len}{len}
\newtheorem{theorem}{Theorem}
\newtheorem{lemma}{Lemma}
\newtheorem{corollary}{Corollary}
\newtheorem{claim}{Claim}
\newtheorem{definition}{Definition}
\newtheorem{problem}{Problem}
\newtheorem{strategy}{Strategy}
\newtheorem{observation}{Observation}
\newcommand{\Section}[1]{\section{#1}}
\newcommand{\Subsection}[1]{~~\\\noindent {\bf #1.}~~}
\author{Oscar Morales-Ponce \\
  Department of Computer Engineering and Computer Science\\
  California State University Long Beach\\
  Long Beach, CA, 97840 \\
  \texttt{oscar.moralesponce@csulb.edu} \\
}
\date{}                      
\title{Optimal Patrolling of High Priority Segments While \\ Visiting the Unit Interval with a Set of Mobile Robots}
\begin{document}

\maketitle

\begin{abstract}
Consider a region that requires to be protected from unauthorized penetrations.  The border of the region, modeled as a unit line segment, consists of high priority segments that require the highest level of protection separated by low priority segments that require to be visited infinitely often. 
We study the problem of patrolling the border with a set of $k$ robots. The goal is to obtain a strategy that minimizes the maximum idle time (the time that a point is left unattended) of the high priority points while visiting the low priority points infinitely often. We use the concept of single lid cover (segments of fixed length) where each high priority point is covered with at least one lid, and then we extend it to strong double-lid cover where each high priority point is covered with at least two lids, and  the unit line segment is fully covered. Let $\lambda_{k-1}$ be the minimum lid length that accepts a single $\lambda_{k-1}$-lid cover with $k-1$ lids and $\Lambda_{2k}$ be the minimum lid length that accepts a strong double $\Lambda_{2k}$-lid cover with $2k$ lids. We show that $2\min(\Lambda_{2k}, \lambda_{k-1})$ is the lower bound of the idle time when the max speed of the robots is one. To compute $\Lambda_{2k}$ and $\lambda_{k-1}$, we present an algorithm with time complexity $O(\max(k, n)\log{n})$ where  $n$ is the number of high priority sections. Our algorithm improves by a factor of $\min(n,k)$ the previous $O(kn\log n)$ running time algorithm. For the upper bound, first we present a strategy with idle time $\lambda_{k-1}$ where one robot covers the unit line, and the remaining robots cover the lids of a single $\lambda_{k-1}$-lid cover with $k-1$ lids. Then, we present a simple strategy with idle time $3\Lambda_{2k}$ that splits the unit line into not-disjoint $k$ segments of equal length that robots synchronously cover, i.e., reaching the leftmost and rightmost point simultaneously. Then, we present a complex strategy that split the unit line into $k$ non-disjoint segments that robots asynchronously cover. We show that combining strategies one and two attain an approximation of $1.5$ the optimal idle time and combining strategy one and third attain optimal idle time.
\end{abstract}

\maketitle

\Section{Introduction}
Consider a region that requires to be protected from  unauthorized penetrations. A team of robots can patrol (perpetually move along) the border of the region looking for intruders. The problem is known as the patrolling problem and has been extensively studied. See for example \cite{chuangpishit2018patrolling,collins2013optimal,czyzowicz2017patrolmen,czyzowicz2011boundary,czyzowicz2016patrolling}. Most of the previous works assume that all the points in the border have the same priority to be visited. However, the border may consist of sections with different priority of patrolling such as  static guards protecting some sections of the border that are required to be visited regularly to detect points of failure. On the other side, every high priority point is required to be visited as often as possible. What strategy must the robots follow to give the maximum protection to the points of the high priority sections while visiting  infinitely often every point of the border? To answer this question, we consider the \emph{idle time} to measure the efficiency of a strategy. Intuitively, the idle time of a given strategy with $k$ robots measures the maximum period that any high priority point remains unvisited. We are interested in providing strategies that achieve optimal idle time.

\Subsection{Results of the Paper and Contributions}
In this paper, we study the problem of minimizing the idle time of high priority sections in a unit segment while low priority segments are visited infinitely often using $k$ mobile robots. Let $I^*$ denote the minimum idle time that any strategy can attain. We use the concept of lids (segment of fixed length)  to characterize the problem. We use the single lid cover proposed in~\cite{collins2013optimal} where each high priority point is covered by at least one lid, and then we extend to strong double lid cover where every high priority point is covered by at least two lids and the unit segment is fully covered. Unlike strong double lid covers a double lid cover would not need to fully cover the unit segment. Let $\Lambda_{2k}$ denote the minimum lid length such that the unit segment admits a strong double lid cover with $2k$ lids and let $\lambda_{k-1}$ denote the minimum lid length such that the unit segment admits a single lid cover with $k-1$ lids. We show that $I^* \geq 2\min(\lambda_{k-1}, \Lambda_{2k})$. The proof is based on finding $k+1$ points at distance at least $\min(\lambda, \Lambda)$ where $k$ points are high priority. 

Then, we present a $O(\max(k, n)\log n )$ running time algorithm that determines the minimum lid length of a strong double lid cover as well as the single lid cover where $n$ is the number of high priority segments and $k$ is the number of robots. First, we present a  $O(\max(k, n))$ running time algorithm that given a length $l$ it decides whether the unit segment accepts a  strong double cover with lid length $l_{min}\leq l$. Then, we use a binary search to find the optimal value. The algorithm improves by a factor of $\min(k,n)$ the running time of $O(kn\log n)$ in \cite{collins2013optimal}.

Regarding the upper bound, we present three strategies:  Given a single cover with $k-1$ lids of length $\lambda_{k-1}$ the first strategy with idle time $2\lambda_{k-1}$ assigns a robot to each lid, and one robot covers the unit segment. The second and third strategies rely on a strong double lid cover with 
$2k$ lids of length $\Lambda_{2k}$. In particular, the second strategy with idle time of at most $3\Lambda_{2k}$ assigns robots to $k$ equal segments of length at most $2\Lambda_{2k}$. Each robot is, then, assigned to one segment where they move synchronously back and forth. The third strategy with idle time $2  \Lambda_{2k}$ assigns each robot to two lids. Robots cover one of the lids and asynchronously move to the other lid. 

\Subsection{Organization of the Paper}
The remaining of the paper is organized as follows.  In Section~\ref{sec:model} we present the model, define the problem formally and present an illustrative example with one high priority section and two robots that provide the insights of the techniques to compute the lower and upper bounds. The related work is then presented in Section~\ref{sec:relatedwork}. In Section~\ref{sec:lid} we introduce the concept of strong double lid cover and provide important properties that we use in later sections. Then, the properties of the strong double and the single lid cover are used to provide the lower bounds in Section~\ref{sec:lower}.  An algorithm with running time  $O(\max(k,n)\log n )$ is presented in Section~\ref{sec:algorithm} where $k$ is the number of robots and $n$ is the number of high priority sections. We continue in Section~\ref{sec:upper} providing the strategies that attain optimal idle time and a strategy that approximates within $1.5$ the optimal idle time. The conclusion is then presented in Section~\ref{sec:conclusion}. 

\section{Model and Problem Statement}\label{sec:model}

Without loss of generality, we model the border as a segment of unit length $C=[0,1]$. The segment is partitioned with two subsets $H$ and $L$ where $H$ represents the sections of ``high priority", and $L$ the sections of ``low priority."  We take $H$ to be a finite union of closed intervals. 

We consider $k$ identical robots with a maximum speed of one. In our model, we assume that the acceleration is infinite. Hence, robots can change speed instantly. 

Each robot $r_i$ follows a continuous function $f_i(t)$ that defines the position of the robot $r_i$  in the unit interval $C$ at time $t$. A \emph{strategy} consists of $k$ continuous  functions. The idle time of the strategy is defined as the minimum-maximum time that any point in $H$ remains unvisited. Formally: 

\begin{definition}[Idle time]
Let $A$ be a strategy for $k$ robots. The \emph{idle time} induced by a strategy ${A}$ at a point $x \in H$, denoted by $I_{x} (A)$, is the maximum time interval that  $x$ remains unvisited by any robot while the low priority points are visited infinitely often: $$I_{x} (A) = \sup_{\{0\leq T_1 < T_2\colon \forall i \forall t\in (T_1,T_2)\ f_i(t) \neq x \}} (T_2 - T_1).$$ Let $I_A  = \sup_{\forall x \in H} (I_{x} (A))$ denote the idle time of the strategy $A$.
\end{definition}

We are interested in determining the minimum idle time that any strategy can attain. Formally, the problem that we address is.

\begin{problem} 
Given a partition $(H,L)$ of the unit interval $C=[0,1]$ and a set of $k$ robots with the same maximum speed of one, determine the optimal idle time that any strategy $A$ can attain, i.e., $$I_k^* = \inf_{\forall A}(I_A).$$
\end{problem}

\subsection{Two Robots One High Priority Segment} \label{tworobots}
Consider as an example a single high priority segment with two robots $r_1$ and $r_2$. Refer to Figure~\ref{fig:cap1}. Let $a< b$ be the left and right points of the high priority segment. Without loss of generality assume that $a \leq 1-b$.

\begin{figure}[htbp]
\centering
\includegraphics[scale=.45]{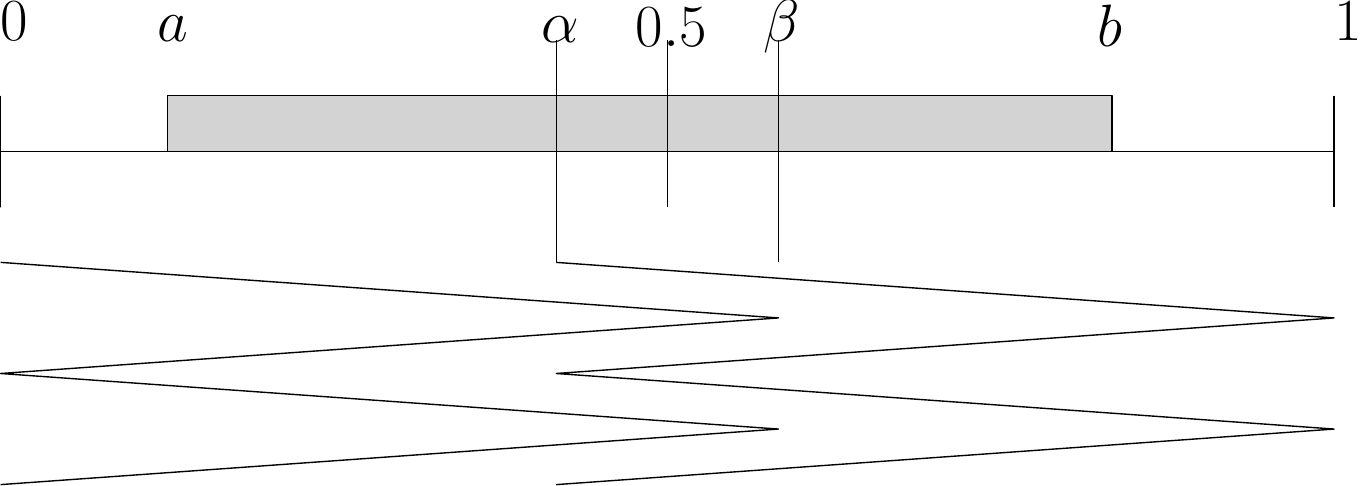}
\caption{One priority segment.}
\label{fig:cap1}
\end{figure}

\emph{Proposed Strategy}: $r_1$ goes back and forth between $0$ and $\beta$ (with an appropriately chosen $\beta\ge 0.5$), and $r_2$ goes back and forth between $\alpha = 1 - \beta$ and $1$. We are mostly concerned with the idle time at the following points: 
\begin{itemize}
\item Point $a$.  When $r_1$ visits $a$ after visiting  $\beta$, the idle time is $2(\beta-a) = 2(1-\alpha-a)$), 
\item Point $b$.  When  $r_2$ visits $\alpha$ after visiting $b$, the idle time is $2(b - (1 - \beta))= 2(b -\alpha)$.
\item Point $\alpha - \epsilon$. When $r_1$ visits $\alpha - \epsilon$ after visiting $0$. The idle time  is $2(\alpha - \epsilon)$. Taking the supremum, the idle time is $2\alpha$.
\item  Point $\beta + \epsilon$. When $r_2$ visits $\beta + \epsilon$ after visiting $1$. The idle time is $2(1 - (\beta  +  \epsilon)) =2(\alpha - \epsilon)$. Taking the supremum, the idle time is $2\alpha$.
\end{itemize}

Observe that since $a < 1 - b$,  the idle time $2(1-\alpha - a )$ is at least $2(b -\alpha)$. Therefore, we can calculate $\alpha = \frac{1-a}{2}$ by setting $2(1-\alpha -a ) = 2\alpha$   and factorizing $\alpha$. Thus, the idle time of the strategy is $1- a$. Note that the idle time of any other high priority point is not greater than $1- a$. A second strategy is that $r_1$ covers $[a,b]$ meanwhile $r_2$ covers $[0,1]$. Clearly, the idle time of the strategy is $2(b-a)$.

\begin{theorem}
The optimal idle time with two robots when there is a single high priority segment $[a,b]$ such that $a \leq 1-b$ is $\min(2(b-a), 1-a).$
\end{theorem}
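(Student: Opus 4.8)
The plan is to establish matching upper and lower bounds on $I_2^*$. The upper bound is already in hand: the two strategies described immediately before the theorem show $I_2^* \le 1-a$ (the first strategy, with the optimal choice $\alpha = \frac{1-a}{2}$) and $I_2^* \le 2(b-a)$ (the second strategy, where $r_1$ oscillates on $[a,b]$ and $r_2$ on $[0,1]$). Hence $I_2^* \le \min(2(b-a), 1-a)$, and all that remains is the lower bound $I_2^* \ge \min(2(b-a), 1-a)$.

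For the lower bound I would argue by contradiction: suppose some strategy $A$ achieves idle time $I_A < \min(2(b-a), 1-a)$. The idea is to exhibit three points $p_0 < p_1 < p_2$ with $p_1 \in H$ such that any two consecutive ones are at distance more than $I_A/2$, and also $p_0$ and $p_2$ are "boundary-like" in the sense that one of the robots must repeatedly return near $0$ and near $1$ — this is the $k+1 = 3$ points construction the paper alludes to for the general lower bound $I^* \ge 2\min(\lambda_{k-1},\Lambda_{2k})$, specialized to $k=2$, $n=1$. Concretely, since a single lid cover of $[a,b]$ with $k-1 = 1$ lid needs a lid of length $\lambda_1 = b-a$, and a strong double lid cover of $[0,1]$ with $2k = 4$ lids all containing $[a,b]$ forces lid length $\Lambda_4$, one checks directly that $\min(\lambda_1,\Lambda_4) = \frac{1}{2}\min(2(b-a),1-a)$ under the hypothesis $a \le 1-b$. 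So the contradiction target is exactly "idle time $< 2\min(\lambda_1,\Lambda_4)$".

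The core combinatorial step is the following: in any strategy with idle time $< D$, within any time window of length $D$ the point $a$ is visited, the point $b$ is visited, and (because $L$-points arbitrarily close to $0$ and to $1$ must be visited infinitely often, hence with finite — though possibly large — gaps, but more usefully because some robot must get within any $\epsilon$ of $0$ and some robot within any $\epsilon$ of $1$) one shows a single robot cannot by itself keep the idle time of $a$ below $D$ while the other robot handles a neighborhood of $1$, etc. I would case-split on whether one robot alone visits both $a$ and $b$ infinitely often: if so, that robot traverses $[a,b]$ back and forth, giving idle time $\ge 2(b-a) \ge \min(2(b-a),1-a)$, contradiction. If not, then $r_1$ (say) visits $a$ cofinitely without visiting $b$, and $r_2$ visits $b$ cofinitely; but then $r_2$ must also, since low-priority points near $0$ need visiting, either come near $0$ infinitely often — forcing $r_2$ to cross $a$, contradicting that only $r_1$ visits $a$ — or $r_1$ handles everything near $0$ and $r_2$ everything near $1$, and a standard "two synchronized oscillators" trace argument (exactly the four bullet points in the example, now run as necessary conditions rather than for a specific strategy) pins the idle time at $\ge 1-a$.

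\textbf{Main obstacle.} The delicate part is turning "visits $L$-points near $0$ and near $1$ infinitely often" into a usable geometric constraint: infinitely-often visits alone only bound the idle time of $H$, not of $L$, so I must be careful to extract the constraint from the *finiteness of the idle time of $a$ and $b$ together with continuity of the $f_i$*, not from any idle-time requirement on $L$. The clean way is: fix a large time $T$ and a window $[T, T+D']$ with $D' = \min(2(b-a),1-a) - \delta$; show that the set of robot-positions swept in this window, together with the constraint that both $a$ and $b$ lie in it, cannot leave a sub-window of $[0,1]$ of length $> D'/2$ "uncovered" adjacent to an endpoint unless some robot's trajectory has length exceeding what speed $1$ permits in time $D'$. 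Making the endpoint-adjacency argument rigorous (why a gap near $0$ or near $1$ is as bad as a gap in the middle, which is where the factor tying $1-a$ to $\Lambda_4$ comes from) is the step I expect to require the most care; everything else is the bookkeeping already rehearsed in the worked example.
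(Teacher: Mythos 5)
Your upper bound is fine and is exactly the paper's (the two strategies worked out before the theorem statement). The gap is in the lower bound, and it is concrete. Your case split is neither exhaustive nor does it deliver what you claim: ``one robot alone visits both $a$ and $b$ infinitely often'' does not imply idle time at least $2(b-a)$ unless you additionally assume the other robot never visits them (idle time is determined by all robots jointly), and the negation of that case does not force the tidy picture where $r_1$ visits $a$ cofinitely without visiting $b$ and $r_2$ visits $b$ cofinitely --- both robots could visit both points infinitely often in arbitrary interleavings. The concluding appeal to a ``two synchronized oscillators trace argument'' presupposes exactly the trajectory structure you would need to prove. Moreover the obstacle you flag at the end --- extracting a geometric constraint from the infinitely-often requirement on $L$ --- dissolves once you use that requirement the way the paper does: you need no bound on the idle time of low-priority points, only the existence of a \emph{single} time $t_0$ at which some robot sits at the point $1$, which is guaranteed because every point must be visited infinitely often.

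The paper's proof is the direct three-point reachability argument (the $k+1$ points idea of Theorem~\ref{generalLowerBound} specialized to $k=2$), with the points chosen explicitly. If $2(b-a)\le 1-a$, equivalently $b-a\le 1-b$, take $a<b<1$ with consecutive gaps at least $x=b-a$; if $1-a<2(b-a)$, equivalently $b>(1+a)/2$, take $a<(1+a)/2<1$ with gaps $x=(1-a)/2$, the middle point being high priority since it lies in $[a,b]$. At time $t_0$ the robot at $1$ cannot reach the middle point during a window of length $2x-\epsilon$ around $t_0$, so the other robot must visit it there; but then both robots are too far from $a$ to visit it for a span of about $2x$, contradicting an assumed idle time below $2x-\epsilon$. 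Your observation that $\min(\lambda_1,\Lambda_4)=\tfrac12\min(2(b-a),1-a)$ is the right intuition (it is Observation~\ref{obse:1}), but as written your argument does not close; you should replace the trajectory case analysis with the explicit three points and the speed/reachability bookkeeping.
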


\begin{proof}
We show that there is no strategy that attains idle time $\min(2(b-a), 1-a) - \epsilon$. Consider first when $2(b-a) \leq 1-a$. Therefore, rearranging we have that $b-a \leq 1-b$. Consider the time $t_0$ when one robot visits $1$.  Such a time exists since the low priority points are visited infinitely often. However, since $b-a \leq 1-b$, the same robot cannot visit $b$ in the time interval  $[t_0 - (b-a-\epsilon/2), t_0 + (b-a-\epsilon/2)]$. Therefore, the other robot must visit $b$ at time $t_1 \in [t_0 - (b-a-\epsilon/2), t_0 + (b-a-\epsilon/2)]$. However, $a$ remains being unvisited for $2(1-b) - \epsilon$ which contradicts the assumption.

Consider now when $1-a < 2(b-a)$. Therefore, rearranging we have that $b > (1+a)/2$. Take the points $a, (1+a)/2$ and $1$. Since  $b > (1+a)/2$, $(1+a)/2$ is a high priority point. Similar as before, there is a time that one robot visits $1$, say $t_0$.  However, the same robot cannot visit $b$ in the time interval  $t_0 -((1-a)/2 - \epsilon/2), t_0 + ((1-a)/2-\epsilon/2)$. Therefore, the other robot must visit the point $(1+a)/2$ at time $t_1 \in [t_0 -(1-a)/2 + \epsilon/2, t_0 + (1-a)/2-\epsilon/2]$. However, $a$ remains being unvisited  for $1-a + \epsilon$ time which contradicts the assumption.
\end{proof}

Let $\alpha = (1-a)/2$ and $\beta =  (1+a)/2$, we have the following observation.

\begin{observation}\label{obse:1}
The segments of length $\alpha$ (or $\alpha$-lids thereafter)  $[0, \alpha]$, $[a,  \beta]$, $[\alpha, 2\alpha]$ and $[\beta, 1]$ double cover $[a,b]$, i.e., every point in  $[a,b]$ is in at least two different lids and the union of the lids covers the unit line segment. 
\end{observation}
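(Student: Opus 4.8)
The plan is to verify the three defining properties of a strong double lid cover directly for the four explicit segments $[0,\alpha]$, $[a,\beta]$, $[\alpha,2\alpha]$, $[\beta,1]$, where $\alpha=(1-a)/2$ and $\beta=(1+a)/2$. First I would record the two elementary identities $\alpha+\beta=1$ and $2\alpha+a = 1$ (equivalently $2\alpha = 1-a = \beta - a + \alpha$, etc.), since every verification below reduces to one of them; note also $\beta = a+\alpha$, so $[a,\beta] = [a, a+\alpha]$ is genuinely a lid of length $\alpha$, and likewise $[0,\alpha],[\alpha,2\alpha]$ trivially have length $\alpha$, while $[\beta,1]$ has length $1-\beta = \alpha$. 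So all four segments are $\alpha$-lids.

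Next I would check that the union is the whole unit segment $[0,1]$. The lids $[0,\alpha]$ and $[\alpha,2\alpha]$ already cover $[0,2\alpha] = [0,1-a]$, and $[\beta,1]$ covers $[(1+a)/2,1]$; since $(1+a)/2 \le 1-a$ exactly when $a \le 1/3$, this pair alone may not suffice, but $[a,\beta]=[a,(1+a)/2]$ fills in and we have $2\alpha = 1-a \ge \beta$ because $1-a \ge (1+a)/2 \iff a \le 1/3$ — hmm, so I should instead just observe that $[0,2\alpha]\cup[\beta,1] = [0,1]$ whenever $\beta \le 2\alpha$, and when $\beta > 2\alpha$ use $[a,\beta]$ together with $2\alpha \ge a$ (true since $\alpha \ge a$ as $a \le 1-b \le 1$ forces $a\le 1/2 \le \alpha$... actually $a\le 1/2$ gives $\alpha = (1-a)/2 \ge 1/4$, and $a \le \alpha \iff a \le 1/3$). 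The cleanest route is: $[0,\alpha]\cup[a,\beta]\cup[\alpha,2\alpha]\cup[\beta,1]$ — the first and third cover $[0,2\alpha]$ contiguously (they overlap at $\alpha$), and since $2\alpha \ge \beta$ or $[a,\beta]$ bridges from $a\le \alpha$ up to $\beta$, the interval $[0,\max(2\alpha,\beta)] = [0,1]$ is covered because $\max(2\alpha,\beta)=1$ is false — so in fact I need $2\alpha \cup$ the reach of $[\beta,1]$: since $\beta \le 2\alpha$ is equivalent to $a\le 1/3$, split into that case (then $[0,2\alpha]\cup[\beta,1]=[0,1]$) and the case $a>1/3$ (then $\alpha < 2\alpha = 1-a < \beta$, and $[a,\beta]$ covers $[a,\beta]$ with $a\le\alpha<2\alpha$, so $[0,\alpha]\cup[a,\beta] = [0,\beta]$ and $[\beta,1]$ finishes). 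The main obstacle is exactly this coverage bookkeeping: making sure the overlaps line up in all sub-ranges of $a\in[0,1/2]$, which is why I would handle it by the clean case split on $a \lessgtr 1/3$.

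Finally I would check double coverage of $[a,b]$: every point of $[a,b]$ lies in at least two of the four lids. Take $x\in[a,b]$. Since $b \le 1-a = 2\alpha$ (from $a\le 1-b$), we have $x \le 2\alpha$, so $x$ is covered by whichever of $[0,\alpha]$, $[\alpha,2\alpha]$ contains it — and $x$ also lies in $[a,\beta]$ provided $x\le\beta$, or in... wait, I need the two-lid guarantee for all of $[a,b]$, and $b$ could exceed $\beta=(1+a)/2$. So: if $x\le\alpha$ then $x\in[0,\alpha]$ and $x\in[a,\beta]$ (since $x\ge a$ and $\alpha\le\beta$), two lids. If $\alpha\le x\le\beta$ then $x\in[\alpha,2\alpha]$ (as $x\le\beta\le 1$... need $\beta\le2\alpha$?) — here I'd rather argue: for $\alpha \le x \le 2\alpha$, $x\in[\alpha,2\alpha]$, and additionally $x\in[a,\beta]$ if $x\le\beta$ or $x\in[\beta,1]$ if $x\ge\beta$; either way a second lid. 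And $x\le 2\alpha$ always holds on $[a,b]$. So the only real content is the inequality $b\le 2\alpha$, i.e. $b\le 1-a$, which is the standing hypothesis, plus the trivial containments; I would present this as a short three-case argument ($x\le\alpha$; $\alpha\le x\le\beta$; $\beta\le x\le 2\alpha$) and that completes the proof of the observation.
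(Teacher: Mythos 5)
The paper states this observation without proof (it is presented as immediate from the two-robot example), so a direct verification like yours is exactly what is called for, and your double-coverage argument is correct: the whole content is $b\le 1-a=2\alpha$, after which every $x\in[a,b]$ lies in one of $[0,\alpha]$, $[\alpha,2\alpha]$ and also in one of $[a,\beta]$, $[\beta,1]$.

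There is, however, one step in your union-coverage argument that fails as written. In the branch $a>1/3$ you assert $a\le\alpha$ and conclude $[0,\alpha]\cup[a,\beta]=[0,\beta]$. But $a\le\alpha=(1-a)/2$ is equivalent to $a\le 1/3$ (a fact you yourself note earlier in the same paragraph), so in this branch $a>\alpha$ and the union $[0,\alpha]\cup[a,\beta]$ leaves the gap $(\alpha,a)$ uncovered. The conclusion is still true, but you must bring in the third lid: since $a\le 1/2$ gives $a\le 1-a=2\alpha$, the lid $[\alpha,2\alpha]$ contains $[\alpha,a]$, so $[0,\alpha]\cup[\alpha,2\alpha]\cup[a,\beta]=[0,\beta]$ and $[\beta,1]$ finishes. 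With that one-line repair (or, more cleanly, by always starting from $[0,\alpha]\cup[\alpha,2\alpha]=[0,1-a]$ and noting that $[a,\beta]$ and $[\beta,1]$ extend this to $1$ because $a\le 1-a$), the proof is complete. I would also encourage you to excise the several abandoned false starts; the final write-up should consist only of the two identities $\beta=a+\alpha$, $2\alpha=1-a$, the case split on $a\lessgtr 1/3$ for coverage of $C$, and the three-case check of double coverage on $[a,b]$.
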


In the sequel, we generalize the previous results and use Observation~\ref{obse:1} to characterize the optimal idle time for any number of high priority segments and any number of robots.

\section{Related Work}\label{sec:relatedwork}

Different variations of the patrolling problem have been studied recently. A closely related version was studied in~\cite{collins2013optimal} where the border is divided into two different types of segments. Namely, the vital sections that are required to be visited with the minimum time and the neutral sections that are not required to be visited, but robots can traverse to reach vital sections. They provide an optimal strategy for the unit segment and the ring. The problem studied in this paper requires that all points must be visited infinitely often unlike~\cite{collins2013optimal}.

Another closely related problem was recently studied in \cite{chuangpishit2018patrolling} where two robots are required to patrol a set of points on a unit segment. In that paper, points can be assigned different priorities. The problem asks to find strategies that guarantee that the maximum time that a point remains unvisited is at most the given priority. The authors provide a $\sqrt{3}$-approximation algorithm. A similar problem was studied in \cite{gkasieniec2017bamboo} where a single robot is required to visit a set of $n$ points with priorities. These priorities are updating at a steady rate, and the problem asks to find a strategy that minimizes the maximum priority ever observed. The authors study two different variants of the model and provide the upper bound.

Patrolling without priorities have been studied in different contexts. For example, in \cite{czyzowicz2016patrolling}, the authors studied the problem of patrolling the edges of a geometric tree. The problem asks to minimize the time that any point in the edges is left unvisited (idle time). They show an off-line strategy that attains optimal idle time for any number of robots. In \cite{czyzowicz2017patrolmen} the authors study the patrolling problem in a geometric graph where robots are unreliable. They propose an optimal algorithm for the line segments that are then used for the general graphs. Giving an Eulerian graph $G=(V,E)$, they show an idle time of $(f+1)|E|/k$ where $k$ is the number of robots, $f \leq k$ is the number of robots that can fail and $|E|$ is the sum of the lengths of the edges of $G$. They show that the general problem is NP-hard even with three robots where at most one can fail. In \cite{czyzowicz2011boundary} the authors study the patrolling problem on a unit cycle where robots have different speed. They obtain optimal results when the number of robots is 2, 3, and 4. Similarly in \cite{kawamura2012fence} the authors consider the same problem in a unit segment with $k$ robots. They show an optimal strategy that attains idle time of $(v_1 + v_2, + v_3)/2$ for three robots where $v_i$ is the max speed of robot $i$.

Patrolling with visibility has been studied in \cite{czyzowicz2014patrolling} where the authors consider that each robot has visibility $r_i$. They propose an optimal algorithm to patrol the unit interval when the robots have the same maximum speed. A survey of the patrolling problem was recently published in~\cite{czyzowicz2019patrolling}. The authors discuss strategies for various number of robots with different capabilities in different domains. 

\section{Lid Covers}~\label{sec:lid}
In this section, we introduce the concept of lids that will be used to derive the lower bound and to describe the strategies that the robots must follow. In  \cite{collins2013optimal} the authors study the patrolling problem when a continuous rectifiable curve contains vital sections and neutral sections. Unlike our problem, the neutral sections are not required to be visited. To describe the trajectory, they introduce the concept of lids. More formally, an \emph{$l$-lid} is a closed interval of length $l$ intersecting $C$. A lid cover is then defined as:

\begin{definition}\label{def:weakdouble}[Single $l_k$-lid cover.]
An \emph{ $l_k$-lid cover} is a set $\{\ell_1,\ell_2\dotsc,\ell_k\}$ of $k$ $l_k$-lids such that every high priority point in $p\in H$ is covered by at least one lid, i.e., for every point $p \in H$ there exists $\ell_i$ such that $p\in \ell_i$. 
\end{definition}

Let $Left(\ell)$ and $Right(\ell)$ be the leftmost and rightmost  point in the lid $\ell$. For any lid $\ell$, define the leftmost \emph{high-priority point} of $\ell$ by
$$L(\ell) =
\begin{cases}
\inf(\ell \cap H)  & \mbox{  if } \ell \cap H \not= \emptyset\\
Right(\ell) &\mbox{ if } \ell \cap H = \emptyset\\
\end{cases}
 $$ 
Analogously, we define $R(\ell)$.
 
We now introduce the concept of a double lid cover of $C$ that is used as the critical tool to prove the lower bounds and to provide the strategy that attains the optimal idle time. Essentially, the strong double $l$-lid cover requires that at least two lids cover each high priority segment and $C$ is fully covered.  The following definition formalizes the concept of strong double $l$-lid cover.

\begin{definition}\label{def:strongdouble}[Strong double $l_k$-lid cover.]
A strong double $l_k$-lid cover is a set $\{\ell_1,\ell_2\dotsc,\ell_k\}$ of $l_k$-lids such that the unit segment $C$ is fully covered and every high priority point in $p\in H$ is covered by at least two distinct lids, i.e., $\bigcup_{i=1}^k\ell_i = C$ and there exist $\ell_i$ and $\ell_{j\not=i}$ such that $p\in \ell_i\cap \ell_j$ for all $p\in H$.
\end{definition}

We generally write $\mathcal{W}_k(l)$ (respectively $\mathcal{S}_k(l)$) for an arbitrary single lid cover (respectively strong double lid cover) with $k$ lids of length $l$. Let $\mathbb{W}_k(l)$ (respectively $\mathbb{S}_k(l)$) denote the set of single lid covers (strong double lid covers). Next, we show the existence of a strong double and single cover with optimal length. For convenience, we  order the lids in $\mathcal{W}_k(l)$ and $\mathcal{S}_k(l)$ from left to right, i.e., $Left(\ell_i) \leq Left(\ell_{i+1})$ for all $i< k$. Ties are broken arbitrarily.

\begin{definition} \label{def:strong}
The \emph{$k$-single, $k$-strong lower bound} of the lid length $l$ is $\lambda_k = \inf \{l\mid \mathbb{W}_k(l) \neq \emptyset\}$, $\Lambda_k = \inf \{l\mid \mathbb{S}_k(l) \neq \emptyset\}$, respectively.
\end{definition}
We omit the subscript $k$ if it is understood from the context. 

\begin{lemma}\label{thm:existenceOfMin}
Both $\{l\mid \mathbb{W}_k(l) \neq \emptyset\}$ and $\{l\mid \mathbb{S}_k(l) \neq \emptyset\}$  have a minimum.
\end{lemma}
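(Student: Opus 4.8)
The statement asserts that the infima $\lambda_k$ and $\Lambda_k$ are actually attained, so the natural approach is a compactness argument: restrict attention to a bounded parameter space, observe that the ``admits a cover'' condition is closed, and conclude that the infimum is a minimum. First I would note that both sets $\{l\mid \mathbb{W}_k(l)\neq\emptyset\}$ and $\{l\mid \mathbb{S}_k(l)\neq\emptyset\}$ are upward closed (if a cover works with lids of length $l$, enlarging every lid slightly still gives a valid cover of the same type, since ``covered by at least one lid'', ``covered by at least two distinct lids'', and ``$\bigcup\ell_i = C$'' are all preserved under enlarging lids — one must only be mildly careful that lids stay inside $[0,1]$, but a lid of length $l\le 1$ intersecting $C$ can always be positioned inside $C$). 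Hence each set is an interval of the form $[\lambda_k,\infty)\cap[0,1]$ or $(\lambda_k,\infty)\cap[0,1]$, and the whole content of the lemma is to rule out the open case, i.e.\ to show a cover exists at length exactly $\lambda_k$ (and $\Lambda_k$).

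Next I would take a minimizing sequence $l_m \downarrow \lambda_k$ together with witnessing covers $\mathcal{W}_k(l_m) = \{\ell_1^m,\dots,\ell_k^m\}$, ordered left to right as in the paper's convention. Each lid $\ell_i^m$ is determined by its left endpoint $x_i^m \in [0,1]$ (its length being $l_m \le l_1 \le 1$, a bounded quantity). So the data of the $m$-th cover lives in the compact set $\{(l,x_1,\dots,x_k) : \lambda_k \le l \le l_1,\ 0 \le x_1 \le \dots \le x_k \le 1\}$. By Bolzano--Weierstrass, pass to a subsequence along which $l_m \to \lambda_k$ and $x_i^m \to x_i$ for each $i$; this yields a limiting family of lids $\ell_i = [x_i, x_i + \lambda_k]$ (truncated to $C$ if needed). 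The crux is then to verify that $\{\ell_1,\dots,\ell_k\}$ is still a single lid cover: given $p\in H$, for each $m$ some $\ell_{i(m)}^m \ni p$; by pigeonhole one index $i$ occurs infinitely often, and passing to that sub-subsequence, $x_i^m \le p \le x_i^m + l_m$ for infinitely many $m$, so in the limit $x_i \le p \le x_i + \lambda_k$, i.e.\ $p\in\ell_i$. Thus $\mathbb{W}_k(\lambda_k)\neq\emptyset$ and the infimum is attained. For $\Lambda_k$ the argument is identical, except one also carries along the ``at least two distinct lids'' clause (pigeonhole now produces two fixed indices $i\neq j$ that each recur infinitely often for a suitable — possibly $p$-dependent — diagonal subsequence; care is needed to extract a single subsequence handling all three requirements, which is fine since each is a closed condition) and the full-cover clause $\bigcup_i \ell_i = C$, which survives the limit because $C\setminus\bigcup_i \ell_i$ is open and being empty is preserved under the limit of the closed lids (equivalently: the finite union of limiting closed intervals is the limit of the unions).

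The main obstacle I anticipate is the bookkeeping needed to make the pigeonhole step uniform over all of $H$ at once: a priori the ``good'' index (or pair of indices) depends on the point $p$, and $H$ can be infinite, so one cannot naively diagonalize over points. The clean fix is to extract the subsequence \emph{first} (using only convergence of the finitely many parameters $l_m, x_1^m,\dots,x_k^m$), obtaining a fixed limiting configuration, and \emph{then} check the covering property pointwise for each $p\in H$ separately — each such check only requires that $p$ was covered for all large $m$, which it was, so the limiting lids cover $p$ by closedness of the condition ``$x_i \le p \le x_i+l$''. No diagonalization over $H$ is actually required once the configuration has converged. A secondary minor point is confirming the set is nonempty and bounded above at all (so that $\lambda_k,\Lambda_k$ are finite): taking $l=1$, the single lid $[0,1]$ covers everything, and $k$ copies of $[0,1]$ form a strong double cover, so both sets contain $1$ and the infima are well-defined reals in $[0,1]$.
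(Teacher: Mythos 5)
Your proposal is correct and follows essentially the same route as the paper: both take a minimizing sequence of covers, apply Bolzano--Weierstrass to the (boundedly many, bounded) lid endpoints to extract a convergent subsequence, and then verify that the single-cover, double-cover, and full-cover conditions are closed and hence hold for the limiting configuration. Your additional remarks on upward-closedness and on fixing the subsequence before the pointwise pigeonhole check are sound refinements of the same argument, not a different method.
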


\begin{proof}
We focus on strong double lid covers; the case for single lid cover is similar.  For any strong double lid cover $\mathcal{C} = \{\ell_1,\ell_2\dotsc,\ell_k\}$, let $\sigma(\mathcal{C}) = a_1,a_2,\dotsc,a_{2k}$ be the sequence of (not-necessarily distinct) endpoints of lids arranged in the order the lids appear in the sequence $\ell_1,\dotsc,\ell_k$. Thus for each $j$, $a_{2j-1}$ is the leftmost point of a lid whose rightmost point is $a_{2j}$.

For each $n$, let $\mathcal{C}^n$ be a strong double lid cover with lid length at most $\Lambda_k+1/n$. Write $\sigma(\mathcal{C}^n) = a_1^n,a_2^n,\dotsc,a_{2k}^n$. Now $(a_1^n)$ is a bounded sequence and by the Bolzano-Weierstrass theorem has a convergent subsequence $a_1^{j^1_n}$. Then we apply the Bolzano-Weierstrass theorem to $(a_2^{j^1_n})$ to get a convergent subsequence $a_2^{j^2_n}$, where $j^2_n = j_1^{k_n}$ for some increasing sequence $k$. We continue until we have a $(j^{2k}_n)$, which we abbreviate as $(r_n)$, such that $a_i^{r_n}$ converges for each $i$. For each $i$, let $\lim_{n\to \infty} a_i^{r_n} = a_i$.

We show that the sequence $a_1,a_2,\dotsc, a_{2k}$ is the sequence $\sigma(\mathcal{C})$ for some strong double lid cover $\mathcal{C}$ with lid-length $\Lambda_k$. Note that $a_{2j+2}-a_{2j+1} = \lim_{n\to \infty} a_{2j+2}^{r_n} - \lim_{n\to \infty} a_{2j+1}^{r_n} = \lim_{n\to \infty} (a_{2j+2}^{r_n} - a_{2j+1}^{r_n}) = \Lambda_k$. So the sequence $a_i$ consists of lids each of length $\Lambda_k$.

To see that every point is covered, suppose toward a contradiction that there is a point $z$ that is not in any lid. Let $\epsilon$ be the smallest distance from $z$ to any of the $a_i$. Then, there is an $N$ such that each $a_i^{r_n}$ is within $\epsilon$ of $a_i$ for each $i$ and $n\ge N$. It is then easy to see that $z$ would have to be left out of each of the $\mathcal{C}_n$ for $n\ge N$, a contradiction.

To show that every priority region is double covered, we use a similar strategy. Suppose $z$ is a high priority point not double covered. Let $S$ be the set of indices for $i$ such that $a_i$ is an endpoint of a lid not containing $z$ (there are at least $2k-2$ such $i$). Let $\epsilon$ be the smallest distance from $z$ to any of the $a_i$ for $i\in S$. Then, there is an $N$ such that each $a_i^{r_n}$ is within $\epsilon$ of $a_i$ for each $i\in S$ and $n\ge N$. It is then easy to see that $z$ would have to be left out of each $k-1$ lids of $\mathcal{C}_n$ for $n\ge N$ whose endpoints have indices in $S$, a contradiction. 
\end{proof}

Let $\mathcal{W}_k$ denote a single lid cover with $k$  lids of optimal length $\lambda_k$. Throughout the paper we consider only optimal single  lid covers. A \emph{block} of $\mathcal{W}_k$ is a tuple $B = \{\ell_{a(1)}, \ell_{a(2)}, ...\ell_{a(b)}\}$ of lids such $Right(\ell_{a(i)})  = Left(\ell_{a(i+1)})$ for all $i \in [1,b-1]$. Let $Left(B) = Left(\ell_{a(1)})$ and $Right(B) = Right(\ell_{a(b)})$. Let
\[
L(B) = \begin{cases}
\inf(H\cap (\bigcup_{k=1}^{b}\ell_{a(k)}))  & \mbox{  if } H\cap \bigcup_{k=1}^{b}\ell_{a(k)} \not= \emptyset\\
Right(\ell_{{a(b)}}) &\mbox{ if } H\cap \bigcup_{k=1}^{b}\ell_{a(k)} = \emptyset\\
\end{cases}
\]
and define $R(B)$ analogously. We define a \emph{critical block} $B$ of $\mathcal{W}_k$ as a block such that $Left(B)$ is the left endpoint of a high priority segment and $Right(B)$ is the right endpoint of a high priority segment.

Let $\mathbb{W}$ be an optimal single $\lambda_k$-lid cover with $k$ lids. Observe that we can shift all lids that the leftmost point does not cover a high priority point to the right until the leftmost point reaches a high priority point. Let $\Wl_k$ denote such a single lid cover.  Analogously, we define $\Wrigth_k$ where the rightmost point of every lid covers a high priority point.

\begin{lemma}\label{lem:criticalblock}
Every $\Wl_k$  has a  critical block.
\end{lemma}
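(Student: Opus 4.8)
The plan is to argue that in an optimal single $\lambda_k$-lid cover $\Wl_k$ — in which every lid has its leftmost point sitting on a high priority point (after the rightward shift described before the lemma) — at least one block must begin and end at endpoints of high priority segments. First I would observe that the lids of $\Wl_k$ partition into maximal blocks $B_1,\dots,B_m$ (ordered left to right), where consecutive lids within a block are end-to-end and there is a genuine gap, or at least non-adjacency, between the right end of $B_j$ and the left end of $B_{j+1}$. For each block $B$, $Left(B)$ is a high priority point by construction of $\Wl_k$ (it is the leftmost point of the leftmost lid of the block, and that lid was shifted until its leftmost point met $H$). So the only thing in question is whether $Right(B)$ lands on the right endpoint of a high priority segment for some block.

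**The key reduction.** The hard part is ruling out the possibility that for \emph{every} block, $Right(B)$ lies strictly inside a low priority segment (or strictly inside $H$, or past the end of $H$). I would handle this by a minimality/exchange argument: if no block is critical, then for each block $B$ the point $Right(B)$ is not the right endpoint of a high priority segment, which means there is room to "contract" — either the rightmost lid of the block can be shifted slightly left (if $Right(B)$ is in a low priority region, no high priority point near $Right(B)$ gets uncovered), or the whole configuration can be rescaled. More precisely, I would show that if every block fails the criticality condition then one can produce a single lid cover with the same number $k$ of lids but strictly smaller lid length, contradicting optimality of $\lambda_k$. The cleanest version: take the rightmost block $B_m$; since $Right(B_m)$ is not a right endpoint of $H$ and all high priority points are covered, either $Right(B_m) \geq \sup H$ already (so $B_m$'s last lid has slack), or $Right(B_m)$ falls in an $L$-segment with some positive gap before the next $H$-point, giving slack again — in both cases shrink all lids proportionally and re-pack, contradiction.

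**Executing the contradiction.** Concretely I would: (1) assume for contradiction that no block of $\Wl_k$ is critical; (2) for each block $B_j$, since $Left(B_j)\in H$ always and $Right(B_j)$ is by assumption not a right endpoint of an $H$-segment, deduce that either $Right(B_j)$ lies in the open interior of an $H$-segment or in (the closure of) an $L$-segment or to the right of all of $H$; (3) in the interior-of-$H$ case, note that the next lid to the right (start of $B_{j+1}$, which is non-adjacent) or the right side of the same lid still has to cover that $H$-point, and use this to glue $B_j$ to $B_{j+1}$ or to relocate — showing the blocks weren't maximal or weren't optimally placed; (4) in the $L$-segment / past-$H$ case, extract genuine slack $\delta>0$ at the right end of $B_j$; summing the slack over all blocks and redistributing lets us shrink the common lid length below $\lambda_k$, the desired contradiction. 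The main obstacle I anticipate is case (3): showing that $Right(B_j)$ cannot lie strictly inside a high priority segment without the blocks being mergeable — this requires carefully using the definition of "block" (maximality, end-to-end adjacency) together with the fact that the lid just past $B_j$ was shifted right until its own leftmost point hit $H$, so there is no lid covering the immediate right-neighborhood of $Right(B_j)$, forcing $Right(B_j)$ itself to be the supremum of a maximal run of covered $H$-points, i.e. effectively a right endpoint of $H$ restricted to that block — which is exactly what makes $B_j$ critical. I would phrase (3) so that "the $H$-point that would be uncovered just right of $Right(B_j)$" directly yields criticality of $B_j$ rather than a contradiction, turning the whole proof into: the leftmost block whose right end is not strictly interior to $H$ is critical, and such a block exists because $B_m$ qualifies.
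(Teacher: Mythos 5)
Your proposal is correct and follows essentially the same route as the paper: assume no block is critical, observe that every maximal block $B$ then has positive slack $Right(B)-R(B)$ between its right end and the rightmost high-priority point it must cover, and shrink the lids to contradict the optimality of $\lambda_k$. The paper makes the ``proportional shrink'' precise by reducing every lid length by $\min_{B}\bigl(Right(B)-R(B)\bigr)/|B|$ (so that a block of $|B|$ lids contracts by exactly its available slack), which is the only quantitative detail your sketch leaves implicit.
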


\begin{proof}
Assume by contradiction that $\Wl_k$  does not have a critical block. Therefore, $Right(B_i) - R(B_i) > 0$ for each block $B_i$. Let $$d_{min} = \min_{\forall B_i}\left(\frac{Right(B_i) - R(B_i)}{|B_i|}\right).$$ We can shrink each lid by  $d_{min}$   without uncovering any high priority segment, contradicting that the lid length of $\Wl_k$ is optimal.
\end{proof}

Let $\Lambda_{k}$ be the minimum lid length of a strong double lid cover with $k$ lids. Let $\mathcal{S}_k$ denote the optimal double $\Lambda_{k}$-lid cover. Throughout the paper, we merely consider $\mathcal{S}_k$. We observe that in $\mathcal{S}_k$ some priority points may be triple covered. We depict this situation in Figure~\ref{ex:tripleCov}. 
The next lemma guarantees that the high priority points are covered exactly by two lids unless they are endpoints of the lids, in which case they can be covered by at most three endpoints. 

\begin{figure}[htbp]
\centering
\includegraphics[scale=1.0]{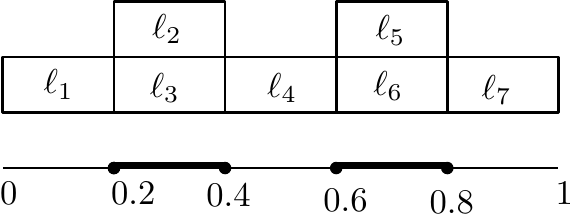}
\caption{Priority segments $[0.2,0.4]$ and $[0.6,0.8]$. Observe that  the priority points $0.2$,$0.4$, $0.6$ and $0.8$ are  triply covered.}
\label{ex:tripleCov}
\end{figure}

\begin{lemma}\label{thm:rightshiftlaxcover}
Every double $\Lambda_k$-lid cover can be transformed into a double $\Lambda_k$-lid cover satisfying:
\begin{enumerate}
\item There are no points covered more than three times, 
\item Any triply covered point is one of the endpoints of two distinct lids, and
\item For any overlapping segment that has a high-priority point, the leftmost point of the overlapping segment is a high-priority point. 
\end{enumerate}
\end{lemma}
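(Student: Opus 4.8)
The plan is to establish the three properties one at a time, using ``shift'' and ``swap'' operations on the lids of an arbitrary double $\Lambda_k$-lid cover, each of which preserves the double-cover property and the total coverage of $C$. Throughout I will keep the lids ordered left to right as in the paper (so $Left(\ell_i)\le Left(\ell_{i+1})$), and I will argue that finitely many operations suffice because $H$ is a finite union of closed intervals, so the ``interesting'' points (endpoints of priority intervals, and endpoints of lids) form a finite set whose order type can only change finitely often.

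First I would handle properties (1) and (2) together. Suppose some point $z$ is covered by at least four lids $\ell_{i_1},\dots,\ell_{i_4}$ (indices increasing). Since all lids have the same length $\Lambda_k$ and all contain $z$, their left endpoints lie in $(z-\Lambda_k, z]$ and their right endpoints in $[z, z+\Lambda_k)$; in particular the interval $[\,\max_j Left(\ell_{i_j}),\ \min_j Right(\ell_{i_j})\,]$ is a common sub-interval of all four, of positive length unless $z$ is simultaneously the right endpoint of one of them and the left endpoint of another. The idea is: whenever $z$ is covered four or more times and is \emph{not} a coincident endpoint situation, one of the four lids is redundant in a neighborhood of $z$ and can be translated (say, the leftmost of the four shifted right, or the rightmost shifted left) until either it stops covering $z$ or a new endpoint coincidence is created; each such shift strictly decreases a suitable integer potential (e.g. the number of $(z,\ell)$ incidences summed over the finite set of ``interesting'' points), so the process terminates. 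When it terminates, every point is covered at most three times, and any triply covered point must be a place where two distinct lids share an endpoint — i.e.\ a common endpoint $a_{2j}=a_{2j'-1}$ type coincidence — which is exactly property (2); here I'd note the subtlety that a point covered by three lids one of which contains it in its interior would again admit a shrink/shift, so at termination that cannot happen.

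Next, property (3): take any maximal overlapping segment $O = \ell_i \cap \ell_{i+1}$ (consecutive lids, using the left-to-right order) and suppose $O$ contains a high priority point but $Left(O)=Left(\ell_{i+1})$ is \emph{not} a high priority point. Then there is a gap $(Left(\ell_{i+1}), p)$ with $p = L(\ell_{i+1}\cap H)$ the leftmost high priority point in $\ell_{i+1}$, and on that gap $\ell_{i+1}$ covers no high priority point. The move is to shift $\ell_{i+1}$ to the right by $\delta = p - Left(\ell_{i+1})$ (or less, if shifting that far would uncover some point of $C$ not covered by any other lid, or would uncover a high priority point near $Right(\ell_{i+1})$ that only $\ell_{i+1}$ double-covered): shifting right cannot destroy the double coverage of any priority point that was in $\ell_{i+1}$ to the right of $p$ as long as it stays inside, and cannot destroy total coverage of $C$ as long as the left end of the shift is still covered by $\ell_i$. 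Provided we only ever shift to the first point where $Left$ of the overlap becomes a priority point, and we always use the smallest such admissible $\delta$, this terminates — again because it is controlled by the finite set of priority-interval endpoints — and on termination every overlapping segment containing a priority point has a priority point as its left endpoint.

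I expect the main obstacle to be bookkeeping: making precise that these shift operations can be scheduled so that fixing property (3) does not re-break (1)–(2), and that the whole procedure halts. The clean way to do this is to introduce a single lexicographic potential on the (finite) combinatorial type of the cover — first the multiset of coverage multiplicities of the interesting points, then a count of ``bad overlaps'' as in (3) — and check that each elementary shift strictly decreases it while never increasing an earlier component; the geometric facts needed at each step (that a shift of the stated magnitude preserves $\bigcup \ell_i = C$ and double coverage of $H$) are the routine verifications that I would only sketch. A secondary subtlety worth a sentence is the behaviour at the two ends $0$ and $1$ of $C$, where a lid may have no room to shift; there one argues instead that such a boundary lid's endpoint is forced to be at $0$ or $1$, so no priority point strictly interior to it is in an overlap that violates (3) without there being another lid to adjust.
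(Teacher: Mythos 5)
Your approach is genuinely different from the paper's: the paper does not perform iterative local surgery on the given cover at all. It simply discards the given cover and greedily rebuilds a canonical one from scratch with the same lid length $\Lambda_k$, placing $\ell_1=[0,\Lambda_k]$, then starting each subsequent lid at the leftmost high-priority point of the newly exposed part of the previous lid (or at the previous lid's right endpoint if that part contains no priority point). All three properties then hold by inspection of the construction, and the only thing left to check is that the greedy placement still double-covers $H$ and covers $C$. This one-shot reconstruction sidesteps every termination and interference issue your plan has to confront.

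Your plan has two concrete gaps beyond ``bookkeeping.'' First, the lexicographic potential you propose (coverage multiplicities first, bad overlaps second) is not monotone under your own moves: the right-shift you use to repair property (3) pushes $\ell_{i+1}$ deeper into $\ell_{i+2},\ell_{i+3},\dots$ and can therefore \emph{increase} a coverage multiplicity, i.e.\ increase the first component of the potential. Moreover the ``interesting points'' whose multiplicities you track include lid endpoints, which move under every shift, so the potential is not even defined on a fixed finite set; cycling is not ruled out. Second, when you remove a $4$-fold (or interior $3$-fold) coverage by translating one of the offending lids, you only argue it is redundant \emph{in a neighborhood of $z$}; that lid may be the unique second cover of a priority point elsewhere in its span, so the translation can destroy double coverage of $H$, and you give no argument that an admissible direction and magnitude of translation always exists. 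Until these two points are resolved, the procedure is not known to terminate in a configuration satisfying (1)--(3). I would recommend abandoning the surgery and adopting the greedy left-to-right reconstruction, which makes the lemma essentially a definition-check (though you should then supply the verification, omitted by the paper as well, that the greedy cover is still a strong double $\Lambda_k$-lid cover).
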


\begin{proof}
The proof is by construction.  Place lids $\ell_1,\ell_2,\dotsc,\ell_{k}$ such that:
\begin{itemize}
\item $\ell_1 = [0, \alpha_1]$ is the leftmost lid such that $\alpha_1 = \Lambda_k$.
\item $\ell_2 = [\alpha_2, \alpha_2+\Lambda_k]$ such that $\alpha_2 =  L(\ell_1)$.
\end{itemize}
For each $i\in \{3,\dotsc,k\}$, let  
\begin{itemize}
\item $\ell_i = [\alpha_i,\alpha_i+\Lambda_k]$ where 
$$\alpha_i =  
\begin{cases}
L(\ell_{i-1}\setminus\ell_{i-2}) & \mbox{ if } i < k \\ 
 \min\{1, Right(\ell_{i-1})\} &  \mbox{ otherwise }
\end{cases}
$$ 

\end{itemize}
By construction, the resulting double $\Lambda_k$-lid cover has the desired conditions. 
\end{proof}

We call a strong double $\Lambda_k$-lid cover that satisfies  conditions 1-3 of Lemma~\ref{thm:rightshiftlaxcover} a \emph{left  shifted  double $\Lambda_k$-lid cover} and denoted as $\ESEll{k}$. Observe that by reversing the order we can analogously obtain a \emph{right  shifted  double $\Lambda_k$-lid cover}  denoted as $\ESErr{k}$. Observe that $\ESEll{k}$ and  $\ESErr{k}$ are unique.

In the context of strong double covers, the endpoints $0$ and $1$ of the segment $C$ are of particular importance, and we define $H^* = H\cup \{0,1\}$. A block $B$ of a double cover is defined the same way as for single lid covers, and $Left(B)$ and $Right(B)$ are also defined in the same way; let $\mathcal{B}$ be the set of all  blocks in  $\ESEll{k}$. We define an order on $\mathcal{B}$ by setting $B_i<B_j$ if $Left(B_i) < Left(B_j)$. Ties are arbitrarily broken. We define $L^*(\ell)$, $R^*(\ell)$, $L^*(B)$, and $R^*(B)$ the same way we did for $L(\ell)$, $R(\ell)$, $L(B)$, $R(B)$, but with $H$ replaced with $H^*$.

Next, we show a property that we use for the lower bound in Section~\ref{sec:lower}. A block $B$ is \emph{critical} in $\ESEll{k}$ if $Left(B)$ is a left endpoint and $Right(B)$ is a right endpoint of a segment of $H$. 

\begin{lemma}\label{lem:dcritical}
$\ESEll{k}$  has a critical block. 
\end{lemma}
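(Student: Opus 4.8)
The statement to prove is Lemma~\ref{lem:dcritical}: the left-shifted strong double lid cover $\ESEll{k}$ has a critical block, i.e., a block $B$ whose left endpoint $Left(B)$ is the left endpoint of some high-priority segment and whose right endpoint $Right(B)$ is the right endpoint of some high-priority segment. The plan is to mirror the argument of Lemma~\ref{lem:criticalblock} but adapted to the double-cover setting, where the extra constraints are (a) the whole unit interval $C$ must stay covered and (b) every high-priority point must stay doubly covered. I would argue by contradiction: suppose no block of $\ESEll{k}$ is critical. By the construction in Lemma~\ref{thm:rightshiftlaxcover} (which defines $\ESEll{k}$), the leftmost block starts at $0$, and since $0$ need not be a high-priority point, the first thing to check is whether the \emph{first} block can fail to be critical — it starts at $Left(B_1)=0$. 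So the contradiction hypothesis forces, for every block $B$, either $Left(B)$ is not a left endpoint of an $H$-segment or $Right(B)$ is not a right endpoint of an $H$-segment; in particular there is strictly positive slack between $Right(B)$ and the rightmost high-priority point $R(B)$ it actually needs to cover, or between $Left(B)$ and $L(B)$.

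\textbf{Key steps.} First I would recall that in $\ESEll{k}$ the high-priority points are exactly doubly covered except at lid endpoints (Lemma~\ref{thm:rightshiftlaxcover}, condition~2), so the ``essential'' coverage of $H$ by a block $B$ is the interval $[L(B),R(B)]$ together with the overlap regions; the portion of $B$ strictly to the right of $R(B)$ (if positive) is used only to cover low-priority points or to meet $\bigcup \ell_i = C$. Second, for each block $B_i$ let $\delta_i = Right(B_i) - R(B_i) \geq 0$; the contradiction hypothesis combined with the fact that $Left(B_i)=L(B_i)$ for all blocks (this is the defining property of a \emph{left}-shifted cover — every lid's leftmost point is a high-priority point, inherited block-wise) forces $\delta_i > 0$ for \emph{every} block $B_i$. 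Third, I would set $d_{min} = \min_i \left(\delta_i / |B_i|\right) > 0$ and shrink every lid uniformly by the appropriate fraction so that each block shrinks by exactly $\delta_i$; I need to check that after this shrink (i) every high-priority point is still doubly covered — this follows because within each block the high-priority interval $[L(B_i),R(B_i)]$ is still covered and the doubled coverage of interior points is preserved since overlaps only shrink from the right within the budget $\delta_i$ — and (ii) the whole of $C$ is still covered. Point (ii) is the delicate one: shrinking blocks could in principle open a gap in the low-priority region or at the right end near $1$; I would handle this by observing that between consecutive blocks there is by definition no high-priority point in the gap (blocks are maximal runs of abutting lids), and argue that any uncovered low-priority sub-interval created by the shrink could instead be absorbed — or, more cleanly, argue that if shrinking breaks full coverage then $\bigcup \ell_i = C$ was only maintained by that very slack, which would mean the \emph{single}-cover optimal length already forces a tighter bound; alternatively I may need to re-invoke the explicit construction of Lemma~\ref{thm:rightshiftlaxcover} to redistribute the freed length. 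The cleanest route is: if after shrinking a gap appears in a low-priority stretch, slide the right part of the configuration leftward (possible since nothing high-priority lies in that stretch) to re-close it, strictly decreasing or keeping the lid length, contradicting optimality of $\Lambda_k$.

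\textbf{Main obstacle.} The routine part is the slack/shrink bookkeeping, exactly parallel to Lemma~\ref{lem:criticalblock}. The genuine difficulty, and the step I expect to spend the most care on, is reconciling the shrink with the requirement $\bigcup_{i=1}^k \ell_i = C$: in a single cover one may freely shrink and re-spread lids, but here full coverage of the low-priority gaps and of the point $1$ is a hard constraint, so I must show that the slack being removed is genuinely ``wasted'' (i.e., lies over already-redundantly-covered or purely low-priority territory that can be re-covered by translation) rather than structurally necessary. I anticipate needing a short case analysis on whether the offending block is the last block (touching $1$) or an interior block, using that $\ESEll{k}$ is the unique left-shifted cover and that its last lid is placed to end at $\min\{1, Right(\ell_{k-1})\}$, so any unused right-slack in interior blocks can be translated rightward to benefit the tail without increasing $\Lambda_k$ — contradicting minimality and closing the argument.
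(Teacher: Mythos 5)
Your overall strategy --- assume no critical block, extract positive slack $Right(B)-R(B)$ from every maximal block, shrink each lid by $d_{min}$, and contradict the minimality of $\Lambda_k$ --- is exactly the paper's proof, and you correctly flag full coverage of $C$ as the place where the double-cover setting bites. But there are two concrete gaps. First, your claim that double coverage of $H$ survives ``since overlaps only shrink from the right'' misses the one case that actually requires work: a high-priority point $p$ that is doubly covered only because it is the common abutment point $Right(\ell_i)=Left(\ell_{i+1})$ of two consecutive lids of the \emph{same} block. After the shrink those lids still abut, but at a point strictly to the left of $p$, so $p$ loses one of its two covers. The paper resolves this with a case split that re-uses the contradiction hypothesis: if $p$ is a right endpoint of an $H$-segment, then the sub-block ending at $\ell_i$ is itself critical (contradiction); if not, then points immediately left and right of $p$ are each singly covered inside the block, so their second cover --- and hence $p$'s --- must come from another block, which the shrink does not disturb. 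Without this step the shrink can genuinely break the double cover. Relatedly, your justification that $\delta_i>0$ for every block rests on ``$Left(B_i)=L(B_i)$ --- every lid's leftmost point is a high-priority point,'' which is false for $\ESEll{k}$: Lemma~\ref{lem:blocks} only gives $Left(B)=L^*(B)$ with $H^*=H\cup\{0,1\}$, and the first lid starts at $0$, which need not be high priority.

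Second, for preservation of $\bigcup\ell_i=C$, none of your three proposed fixes is the paper's argument, and the one you call ``cleanest'' --- sliding everything to the right of a newly opened gap leftward --- fails at the right end: the translation pulls the last lid off the point $1$, so coverage is not restored for free. The paper instead shows that no gap opens at all: the only points of a maximal block $B$ that the shrink can expose lie just below $Right(B)$; if $B$ contains a point of $H^*$ to the right of such a point $p$, then preservation of $H^*$-coverage already saves $p$, and if it does not, the explicit left-shifted construction forces the next lid to start exactly at $Right(\ell_j)$, contradicting the maximality of $B$. So the missing ingredient is to use maximality of the blocks together with the construction of Lemma~\ref{thm:rightshiftlaxcover}, rather than a post hoc repair by translation.
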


\begin{proof}
Let $\mathcal{B}$ be the set of all  maximal blocks $B$ in  $\ESEll{k}$. Note that every lid of $\ESEll{k}$ is either contained in some block in $\mathcal{B}$ or has a left endpoint of $1$. Assume by contradiction that $\ESEll{k}$ does not have a critical block. Then, $Right(B) - R(B) > 0$ for each $B\in \mathcal{B}$. Let $d_{min}  = \min_{B\in \mathcal{B}}\left(\frac{Right(B) - R(B)}{|B|}\right)$ and shrink each lid by $d_{min}$, fixing the left endpoint of each maximal block the lid is in, and alining the shrunken lids so that that they still form a block. Observe that after shrinking each lid, each block $B\in \mathcal{B}$ decreases by  $d_{min}|B|\leq (Right(B) - R(B))$. Therefore, each shrunken block $B$  still covers the segments that it originally covered in $H^*$ affecting only its right side. Thus, any high priority point originally covered by two blocks will still be covered by two blocks. Consider any high priority point $p$ that was originally doubly covered by a single block $B = (\ell_1,\ell_2,\dotsc,\ell_j)$ that is the right endpoint of two lids $\ell_i$ and $\ell_{i+1}$ in that block.  Suppose then that $p$ is not an endpoint of a high priority segment. Therefore, there is another block $B'$ that covers $p$ since $p-\epsilon$ and $p+\epsilon$ is single covered by $\ell_i$ and $\ell_{i+1}$, respectively. Suppose now that $p$ is an endpoint of a high priority segment. Therefore, the maximal block $B$ has a sub-block $(\ell_1,\ell_2,\dotsc,\ell_i)$ that is a critical block, contradicting our assumption that there are no critical blocks.

To see that the every low priority point is still covered by at least one lid after the lid reduction, suppose $p$ is a low priority point, and suppose $B$ is a maximal block containing $p$ in the original covering. If $B$ contains an element of $H^*$ greater than $p$, then, since we already saw that the shrinking of the lids does not uncover priority points, it will not uncover $p$. If $B$ contains no element of $H^*$ greater than $p$, let $\ell_j$ be the last lid in  $B$ containing $p$ involved in the construction of $\ESEll{k}$. By construction, if $Left(\ell_{j+1})>p$, then $Left(\ell_{j+1})=Right(\ell_j)$ since $\ell_j$ does not contain any point in $H^*$ greater than $p$ which contradicts the supposition that $B$ is maximal. 
\end{proof}

Observe that $Left(B)$ is less than $1$. In the following lemma, we show that the blocks in  $\ESEll{k}$ are equivalent when we replace  $H$ by $H^*$.
 
\begin{lemma}\label{lem:blocks}
Let $B$ be a maximal block in  $\ESEll{k}$. Then, $Left(B) = L^*(B)$ and $Left(B)$ is the left endpoint of a segment of $H^*$. 
\end{lemma}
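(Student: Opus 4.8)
The plan is to prove the two assertions for a maximal block $B$ in $\ESEll{k}$: first that $Left(B)$ is the left endpoint of a segment of $H^*$, and then that this forces $Left(B) = L^*(B)$. The two facts are really the same observation once we recall how $\ESEll{k}$ is constructed in Lemma~\ref{thm:rightshiftlaxcover}.

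First I would split into cases according to which lid of $\ESEll{k}$ is the leftmost lid $\ell_{a(1)}$ of the block $B$. If $\ell_{a(1)} = \ell_1$, then $Left(B) = Left(\ell_1) = 0 \in H^*$, and since $0$ is the global minimum of $C$ it is trivially the left endpoint of a segment of $H^*$, and nothing in $B$ lies to its left, so $Left(B) = L^*(B)$. Otherwise $\ell_{a(1)} = \ell_i$ for some $i \geq 2$. By construction $\alpha_i = L(\ell_{i-1}\setminus \ell_{i-2})$ (or $\alpha_2 = L(\ell_1)$ when $i=2$); I would argue that $\ell_{i-1}$ is \emph{not} in the same block as $\ell_i$, because $\ell_{a(1)}$ is by definition the leftmost lid of its block, so $Right(\ell_{i-1}) \neq Left(\ell_i)$, i.e.\ the lids $\ell_{i-1}$ and $\ell_i$ genuinely overlap. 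Then the overlapping segment $\ell_{i-1}\cap \ell_i$ has left endpoint $\alpha_i = Left(\ell_i) = Left(B)$, and by condition~3 of Lemma~\ref{thm:rightshiftlaxcover} — combined with the $i<k$ construction rule $\alpha_i = L(\ell_{i-1}\setminus\ell_{i-2})$, which picks the infimum of $H$ inside that overlap precisely when the overlap meets $H$, and picks $Right(\ell_{i-1})$ (contradicting overlap) otherwise — the point $\alpha_i$ is a high priority point, in fact the left endpoint of a segment of $H$, hence of $H^*$.

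Having established that $Left(B) \in H^*$ and is a left endpoint of an $H^*$-segment, the identity $Left(B) = L^*(B)$ follows: $L^*(B) = \inf(H^* \cap (\bigcup \ell_{a(k)}))$ (since the block certainly meets $H^*$ at its own left endpoint), and this infimum is at least $Left(B)$ because the block lies in $[Left(B), Right(B)]$, while it is at most $Left(B)$ because $Left(B)$ itself is in $H^*$ and in the block; hence equality. I would also remark that $Left(B) < 1$ as noted just before the lemma, so the "segment of $H^*$" in question is a genuine piece of $H$ or the degenerate segment $\{0\}$, not $\{1\}$.

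The main obstacle I anticipate is the careful bookkeeping around the construction formula for $\alpha_i$: one must verify that when $\ell_i$ starts a new block (so $\ell_{i-1}, \ell_i$ overlap nontrivially), the relevant overlap $\ell_{i-1}\setminus\ell_{i-2}$ — or more precisely the overlap of $\ell_i$ with the union of the earlier lids — actually contains a high priority point, so that $L(\cdot)$ returns an endpoint of $H$ rather than the fallback value $Right(\ell_{i-1})$; the fallback case has to be ruled out precisely because it would mean $Left(\ell_i) = Right(\ell_{i-1})$, contradicting that $\ell_i$ begins a new block. Edge cases $i = 2$ and $i = k$ (where $\alpha_k = \min\{1, Right(\ell_{k-1})\}$) should be checked separately but are straightforward: in the $i=k$ case, if $\ell_k$ starts its own block then $\alpha_k = 1$, and $Left(B) = 1$ would contradict the already-noted fact $Left(B) < 1$, so this situation does not arise.
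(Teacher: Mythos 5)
Your proposal follows essentially the same route as the paper: a case analysis on which lid of the construction of $\ESEll{k}$ is the leftmost lid of $B$, reading off $Left(B)$ from the formula for $\alpha_i$, and dispatching the bad cases by contradiction with the maximality of $B$ (including the $i=k$ case via $Left(B)<1$). The final step deducing $Left(B)=L^*(B)$ from $Left(B)\in H^*$ is also the paper's conclusion.

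There is one sub-case you gloss over. You rule out the fallback value $\alpha_i = Right(\ell_{i-1})$ (no high-priority point in $\ell_{i-1}\setminus\ell_{i-2}$), but $L(\ell_{i-1}\setminus\ell_{i-2}) = \inf\bigl(H\cap(\ell_{i-1}\setminus\ell_{i-2})\bigr)$ need not be the \emph{left endpoint of a segment} of $H$ even when the set meets $H$: if a high-priority segment straddles $Right(\ell_{i-2})$, the infimum is $Right(\ell_{i-2})$, an interior point of that segment. Condition~3 of Lemma~\ref{thm:rightshiftlaxcover} only gives you that $\alpha_i$ is a high-priority point, not that it is a left endpoint. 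The paper closes this case explicitly: $\alpha_i = Right(\ell_{i-2})$ would make $\ell_{i-2}$ and $\ell_i$ adjacent, again contradicting that $\ell_i$ is the leftmost lid of a maximal block (recall blocks need not consist of consecutively indexed lids). Your own maximality argument extends verbatim to cover this, so the fix is one sentence, but as written the claim ``in fact the left endpoint of a segment of $H$'' is not fully justified.
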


 \begin{proof} (Lemma~\ref{lem:blocks})
Let $\ell$ be the leftmost lid in $B$. Then in the construction of $\ESEll{k}$, $\ell$ could be either 
\begin{itemize}
\item $\ell_1$, in which case $Left(B)$ is the left endpoint $0$ of a segment of $H^*$,
\item $\ell_2$, in which case $Left(B)$ is $L(\ell_1)$, which is either the left endpoint of a segment in $H$ (hence $H^*$) 
or is the right endpoint of $\ell_1$ in $\ESEll{k}$ (contradicting the maximality of $B$),
\item $\ell_j$ (for $j\ge 3$), in which is divided into two cases:
\begin{itemize}
\item $\ell_{j-1}\neq \ell_{j-2}$, in which case $Left(B) = Left(\ell_j) = L(\ell_{j-1}\setminus \ell_{j-2})$, which is either the left endpoint of a region in $H$ or the right endpoint of either $\ell_{j-1}$ or $\ell_{j-2}$ (contradicting the maximality of $B$).
\item $\ell_{j-1}=\ell_{j-2}$, in which case $Left(B)$ is $\min\{1,Right(\ell_{j-1})\}$. However, since $Left(B)<1$, $Left(B) = Right(\ell_{j-1})$. But then $Left(B)$ is the right-endpoint of another lid in $\ESEll{k}$ (contradicting the maximality of $B$).
\end{itemize}
\end{itemize}
In each of the non-contradictory cases, $Left(B)$ is the left endpoint of a segment of $H^*$, and hence $Left(B) = L^*(B)$. 
\end{proof}

 Similarly, we can show that $\ESErr{k}$  has a critical block. In the following lemma, we show an essential relation within the left and right shifted double lid cover. 

\begin{lemma}\label{lemma:smaller}
For every $\elll_{i} \in \ESEll{k}$ and $\ellrr{i} \in \ESErr{k}$, $Left(\elll_{i}) \ge Left(\ellrr{i})$.
\end{lemma}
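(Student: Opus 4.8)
The plan is to prove Lemma~\ref{lemma:smaller} by induction on the index $i$, exploiting the explicit left-to-right construction of $\ESEll{k}$ given in Lemma~\ref{thm:rightshiftlaxcover} together with the symmetric (right-to-left) construction of $\ESErr{k}$. The key observation is that $\ESEll{k}$ is the ``greedy'' cover that pushes every lid as far right as the double-covering constraint allows while still being constructed from the left, and $\ESErr{k}$ is its mirror image; intuitively the $i$-th lid of $\ESEll{k}$ cannot be forced further left than the $i$-th lid of $\ESErr{k}$ because $\ESEll{k}$ only ever commits to a leftmost position when the double-cover requirement on the already-placed prefix demands it, whereas $\ESErr{k}$ has to reach all the way to the left endpoint $0$ using its last lids.

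First I would set up notation: write $\elll_1,\dotsc,\elll_k$ and $\ellrr{1},\dotsc,\ellrr{k}$ for the two covers ordered left to right, with left endpoints $\alpha_1^{\rightarrow}\le\dotsb\le\alpha_k^{\rightarrow}$ and $\alpha_1^{\leftarrow}\le\dotsb\le\alpha_k^{\leftarrow}$ respectively, both using lid length $\Lambda_k$. The claim is $\alpha_i^{\rightarrow}\ge\alpha_i^{\leftarrow}$ for all $i$. The base case $i=1$ is immediate: by construction $\elll_1=[0,\Lambda_k]$, so $\alpha_1^{\rightarrow}=0$, and certainly $\alpha_1^{\leftarrow}\le 0$ is impossible as a left endpoint of a lid in $C$, so $\alpha_1^{\leftarrow}=0$ too (or one directly argues $\ESErr{k}$ must also cover $0$, forcing its leftmost lid to start at $0$). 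For the inductive step, assuming $\alpha_j^{\rightarrow}\ge\alpha_j^{\leftarrow}$ for all $j\le i$, I would show $\alpha_{i+1}^{\rightarrow}\ge\alpha_{i+1}^{\leftarrow}$. The engine is that $\ESEll{k}$ places $\elll_{i+1}$ at the \emph{rightmost} position consistent with the prefix $\elll_1,\dotsc,\elll_i$ still double-covering everything in $H^*$ that lies left of where $\elll_{i+1}$ begins to contribute — i.e. $\alpha_{i+1}^{\rightarrow}$ is as large as possible — so any feasible cover, in particular a suitably truncated/re-indexed piece of $\ESErr{k}$, whose first $i$ lids double-cover the same initial portion, must have its $(i+1)$-st lid starting no further right than $\alpha_{i+1}^{\rightarrow}$. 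Turning this intuition into a clean argument requires showing that the first $i$ lids of $\ESErr{k}$, shifted appropriately, cannot double-cover more of the left portion than the first $i$ lids of $\ESEll{k}$ do, which is exactly where the induction hypothesis $\alpha_j^{\rightarrow}\ge\alpha_j^{\leftarrow}$ feeds back in.

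An alternative and possibly cleaner route is an exchange/extremality argument: suppose for contradiction that $\alpha_{i+1}^{\rightarrow}<\alpha_{i+1}^{\leftarrow}$ for some minimal $i+1$. Then the lids $\elll_1,\dotsc,\elll_i$ have left endpoints $\ge$ those of $\ellrr{1},\dotsc,\ellrr{i}$ and lid $\elll_{i+1}$ starts strictly left of $\ellrr{i+1}$; I would argue that one can then slide $\elll_{i+1}$ (and consequently the tail $\elll_{i+2},\dotsc$) slightly to the right without violating the double-cover of $H^*$ or full coverage of $C$, contradicting the defining property of $\ESEll{k}$ as the left-shifted cover (which is characterized by each lid being as far right as possible given the construction, equivalently by minimality of lid length being impossible to improve — cf. the shrinking arguments in Lemmas~\ref{lem:criticalblock} and~\ref{lem:dcritical}). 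The feasibility of this slide is where the hypothesis $\alpha_j^{\rightarrow}\ge\alpha_j^{\leftarrow}$ is used: it guarantees that the portion of $C$ and $H^*$ lying to the left of $\elll_{i+1}$ that needs double coverage is already handled by $\elll_1,\dotsc,\elll_i$ with room to spare, precisely because $\ESErr{k}$ (which is feasible) handles at least as much of that left portion with its even-further-left lids $\ellrr{1},\dotsc,\ellrr{i+1}$.

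The main obstacle I anticipate is making the coverage bookkeeping rigorous across the two different constructions: the recursive rule for $\alpha_i$ in Lemma~\ref{thm:rightshiftlaxcover} branches on $L(\ell_{i-1}\setminus\ell_{i-2})$ versus $\min\{1,Right(\ell_{i-1})\}$ and on whether consecutive lids coincide, so the induction has to handle the ``block'' structure — blocks of touching lids versus overlapping lids — uniformly, and I must be careful about the special role of the endpoints $0,1$ and of $H^*$ versus $H$. I would organize this by proving a slightly stronger statement carrying along, for each $i$, both $\alpha_i^{\rightarrow}\ge\alpha_i^{\leftarrow}$ and a statement of the form ``$R^*$ of the prefix $\{\elll_1,\dotsc,\elll_i\}$ is $\ge$ $R^*$ of $\{\ellrr{1},\dotsc,\ellrr{i}\}$'' (i.e. the left-shifted cover always reaches at least as far to the right with its first $i$ lids as the right-shifted cover does, measured in terms of the last point it double-covers), which is the natural quantity that the recursion in Lemma~\ref{thm:rightshiftlaxcover} updates and which makes the branch analysis mechanical.
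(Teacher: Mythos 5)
Your plan hinges on a claim that the paper never establishes and that you do not prove: that $\ESEll{k}$ is extremal in the sense that $\elll_{i+1}$ sits at the \emph{rightmost} position consistent with any feasible completion, so that every other strong double cover must have its $(i+1)$-st lid start no further right. But $\ESEll{k}$ is \emph{defined} by the recursive construction of Lemma~\ref{thm:rightshiftlaxcover} ($\alpha_{i}=L(\ell_{i-1}\setminus\ell_{i-2})$, etc.), not by a variational property; the ``as far right as possible'' characterization is itself a statement of essentially the same strength as Lemma~\ref{lemma:smaller} and would need its own proof. This makes your second route circular as written: sliding $\elll_{i+1}$ to the right does not contradict ``the defining property of $\ESEll{k}$,'' because no such maximality property has been defined or proved, and the shrinking arguments of Lemmas~\ref{lem:criticalblock} and~\ref{lem:dcritical} concern optimality of the lid \emph{length}, not the position of an individual lid within a fixed-length cover. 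Your first route has the same hole plus the bookkeeping problem you yourself flag (comparing how much of $H^*$ the two prefixes double-cover, where double coverage can be shared between prefix and suffix), which you propose to handle by a strengthened invariant but do not carry out.

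The paper avoids all of this by running the induction from the right and never comparing prefixes. It builds $\ESErr{k}$ out of $\ESEll{k}$ by successive replacement: set $\ellrr{k}=[1-\Lambda_k,1]$ (clearly $Left(\elll_k)\ge 1-\Lambda_k$, and the hybrid set still strong double covers $C$), and inductively replace $\elll_j$ by $\ellrr{j}=[R(\elll_j)-\Lambda_k,\,R(\elll_j)]$, using $R(\elll_j)\ge Left(\elll_{j+1})\ge Left(\ellrr{j+1})$ to check that the hybrid $\{\elll_1,\dotsc,\elll_{j-1},\ellrr{j},\dotsc,\ellrr{k}\}$ remains a strong double cover and that $Left(\elll_j)\ge Left(\ellrr{j})$. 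The only extremality used is optimality of the lid length (to see the result has $k$ lids), which the paper has already established. If you want to salvage your forward induction, you would first have to prove the rightmost-placement lemma for $\ESEll{k}$ against arbitrary feasible covers; as it stands, the proposal assumes the key step rather than proving it.
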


\begin{proof} 
Given $\ESEll{k}$, we will construct $\ESErr{k}$ with the desired properties. First consider $\elll_{k} \in \ESEll{k}$. Let $\ellrr{k}  =  [1 - \Lambda_{k}, 1]$. It is easy to see that $Left(\elll_{k}) \ge Left(\ellrr{k})$. Moreover, the set $$\{\elll_{1}, \elll_{2}, ... ,\elll_{k-1},\ellrr{k} \}$$ strong double covers $C$. Inductively, suppose that the set $$\{\elll_{1}, \elll_{2}, ...,\elll_{j},\ellrr{j+1},...,\ellrr{k} \}$$ strong double covers $C$ such that $Left(\elll_{i}) \geq Left(\ellrr{i})$ for all $i > j$. Let $\ellrr{j} = [R(\elll_{j}) - \Lambda_{k}, R(\elll_{j})]$. Since $R(\elll_{j}) \geq  Left(\elll_{j+1}) \geq Left(\ellrr{j+1})$, the set $\{\elll_{1}, \elll_{2}, ...,\elll_{j-1},\ellrr{j},...,\ellrr{k} \}$ still strong double covers $C$ and $Left(\elll_{j}) \geq Left(\ellrr{j})$. Observe that  $\ESErr{k}$ consists of $k$ lids since otherwise  $\ESEll{k}$ would not be optimal. The lemma follows.  
\end{proof}

\section{Lower-Bound}~\label{sec:lower}
We use the concept of single lid cover and double lid cover to derive the lower bound when $k$ robots are being used. Recall that $\lambda_{k-1}$ is the minimum lid length that accepts a single lid cover with $k-1$ lids and $\Lambda_{2k}$ is the minimum lid length that accepts a strong double lid cover with $2k$ lids. We compare $\lambda_{k-1}$ and $\Lambda_{2k}$  and use the properties to show the lower bound. More specifically, we show that the lower bound for the patrolling problem with  $k$ robots is at least  $2\min(\lambda_{k-1}, \Lambda_{2k})$. 

We define the indicator function that determines if two consecutive lids cover at least one high priority point as follows:
$$\mathbb{I}_i = 
\begin{cases}
1 & \mbox{ if } \ell_i \cap \ell_{i+1} \cap H   \not= \emptyset \\ 
0 & \mbox{ otherwise }
\end{cases}
$$
A \emph{component} of a double cover is a set $W=\{\ell_i, \ell_{i+1}, .., \ell_{w}\}$ of consecutive lids of the cover such that  
$\mathbb{I}_i, \mathbb{I}_{i+1}, .., \mathbb{I}_{w-1}$ are all one.  We say that $W$ is a \emph{maximal component} if either $\mathbb{I}_{i-1} = 0$ or $\mathbb{I}=1$  and either $\mathbb{I}_{w} = 0$ or $w=2k$. For each component $W$, let $f(W)$ be the minimum number of lids of length $\Lambda_k$ needed to singly cover the priority segments in $W$. The following lemma is useful for determining when $\lambda_{k-1} > \Lambda_{2k}$

\begin{lemma} \label{lem:f}
Let $W$ be a component of $\mathcal{S}_{2k}$. Then $f(W) = \lfloor |W|/ 2 \rfloor.$
\end{lemma}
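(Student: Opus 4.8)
The plan is to prove the two inequalities $f(W) \ge \lfloor |W|/2 \rfloor$ and $f(W) \le \lfloor |W|/2 \rfloor$ separately, exploiting that $W = \{\ell_i,\dotsc,\ell_w\}$ is a \emph{maximal component}: consecutive lids share a high-priority point, so there is a ``chain'' of overlaps that a single-lid cover cannot break cheaply. Let me write $m = |W|$ for the number of lids in $W$.

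For the lower bound $f(W)\ge \lfloor m/2\rfloor$, I would extract from the chain of overlaps a set of roughly $m/2$ high-priority points that are pairwise far enough apart that no $\Lambda_{2k}$-lid can cover two of them. Concretely, since $\mathbb{I}_j = 1$ for each $j$ in the range, pick a high-priority point $p_j \in \ell_j \cap \ell_{j+1} \cap H$ for each such $j$. The key geometric fact is that $p_j$ and $p_{j+2}$ lie in disjoint-or-barely-touching regions: $p_j \le Right(\ell_{j+1})$ while $p_{j+2} \ge Left(\ell_{j+2})$, and because in $\mathcal{S}_{2k}$ (using Lemma~\ref{thm:rightshiftlaxcover}) the overlap structure is controlled, one shows $p_{j+2} - p_j$ cannot be covered by one lid of length $\Lambda_{2k}$ together with... — here I would need the precise separation estimate. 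Taking the subsequence $p_i, p_{i+2}, p_{i+4}, \dotsc$ gives $\lceil (m-1)/2 \rceil = \lfloor m/2 \rfloor$ points (when $m$ lids give $m-1$ indicators), each needing its own covering lid, so $f(W) \ge \lfloor m/2 \rfloor$.

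For the upper bound $f(W) \le \lfloor m/2 \rfloor$, I would give an explicit construction: group the lids of $W$ into consecutive overlapping pairs $(\ell_i,\ell_{i+1}), (\ell_{i+2},\ell_{i+3}),\dotsc$, and for each pair produce a single $\Lambda_{2k}$-lid that covers all the high-priority points in $\ell_j \cup \ell_{j+1}$. This is where the maximality and the overlap $\mathbb{I} = 1$ are used: because $\ell_j$ and $\ell_{j+1}$ share a high-priority point, the union of their high-priority portions has "high-priority diameter" at most $\Lambda_{2k}$ — a single lid of length $\Lambda_{2k}$ placed to span from the leftmost to the rightmost high-priority point of $\ell_j\cup\ell_{j+1}$ fits, since each lid individually has length $\Lambda_{2k}$ and they overlap. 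When $m$ is odd there is one leftover lid $\ell_w$; but in a maximal component the leftover lid's high-priority points are already contained in the overlap with its neighbour $\ell_{w-1}$ (by condition 3 of Lemma~\ref{thm:rightshiftlaxcover}, overlap segments containing high-priority points start at a high-priority point), so it is absorbed into the last pair's lid for free, yielding exactly $\lfloor m/2 \rfloor$ lids.

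The main obstacle I anticipate is the lower-bound separation estimate: making precise why the points $p_i, p_{i+2}, \dotsc$ genuinely require distinct covering lids, i.e. why no $\Lambda_{2k}$-lid straddles two of them. This rests on the structural facts from Lemma~\ref{thm:rightshiftlaxcover} — that a high-priority point is covered at most three times and that triple coverage happens only at lid endpoints — which bound how much two non-adjacent lids of $\mathcal{S}_{2k}$ can overlap in the high-priority set; I would need to turn "$\ell_j$ and $\ell_{j+2}$ overlap only at priority-segment endpoints, if at all" into the quantitative statement that the relevant priority points are more than $\Lambda_{2k}$ apart in the sense that matters. The parity bookkeeping ($m-1$ indicators giving $\lfloor m/2\rfloor$ rather than $\lceil m/2\rceil$) and the treatment of components at the boundary ($w = 2k$) are routine but must be stated carefully to match the floor in the lemma.
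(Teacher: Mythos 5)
Your upper bound argument has a genuine flaw in its key step. You claim that because $\ell_j$ and $\ell_{j+1}$ share a high-priority point, a single lid of length $\Lambda_{2k}$ can cover all high-priority points of $\ell_j\cup\ell_{j+1}$. This is false for interior pairs of a component: the leftmost priority point of $\ell_j$ may lie near $Left(\ell_j)$ (inside the overlap $\ell_{j-1}\cap\ell_j$) and the rightmost priority point of $\ell_{j+1}$ may lie near $Right(\ell_{j+1})$ (inside $\ell_{j+1}\cap\ell_{j+2}$), so the priority points of the pair can be spread over nearly $2\Lambda_{2k}$; the mere existence of a shared priority point bounds nothing. The paper's argument is different and simpler: every priority point of $\bigcup W$ is doubly covered, and since all lids have the same length and are ordered left to right, any doubly covered point lies in two \emph{consecutive} lids of $W$; hence retaining every other lid of $W$ (starting so that the first retained lid is the second lid of the first overlap) keeps one lid from each consecutive pair and therefore still singly covers all priority points of $W$, using exactly $\lfloor |W|/2\rfloor$ lids. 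The point is that you are not obliged to cover a pair's priority points with that pair's replacement lid --- the neighbouring retained lids pick up the overlap regions at the pair boundaries. Your per-pair construction cannot be repaired locally.

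On the lower bound: the obstacle you flag is real and, as set up, the separation argument does not go through --- nothing in the definition of a component prevents the overlap points $p_j$ and $p_{j+2}$ from being within $\Lambda_{2k}$ of each other, so a single lid may straddle several overlaps. You should be aware, though, that the paper's own proof does not establish the lower bound either; it only gives the removal argument and asserts equality. Moreover, only the inequality $f(W)\le\lfloor|W|/2\rfloor$ is used later (in Lemma~\ref{lem:max}, to produce a single cover with fewer than $k$ lids when some component is odd), so the direction you spent most of your effort on is the one that is neither proved in the paper nor needed. Concentrate on getting the removal/double-coverage argument right; that is the substance of the lemma.
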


\begin{proof}
We can remove every other lid, starting with the leftmost lid for the first overlap of $W$ and still single cover the region in the overlaps of $W$. When $|W| = 2a$ (is even), the number of lids needed to cover the high priority points is $a=|W|/2$. When $|W|=2a+1$ (is odd), the number of lids needed to cover the high priority points is $a=(|W|-1)/2$. 
\end{proof}

An essential property of strong double lid covers when $\Lambda_{2k} <  \lambda_{k-1}$ is that the low priority segments must be of a short length and uniformly distributed.  In other words, every lid must cover at least one high priority point as we show in the following lemma. 

\begin{lemma}\label{lem:all}
If $\Lambda_{2k} <  \lambda_{k-1}$, then every lid of  $\mathcal{S}_{2k}$ covers at least one point of $H$.
\end{lemma}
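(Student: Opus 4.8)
The plan is to argue by contradiction: suppose $\Lambda_{2k} < \lambda_{k-1}$ yet some lid of $\mathcal{S}_{2k}$ covers no point of $H$. I will exhibit from $\mathcal{S}_{2k}$ a \emph{single} lid cover of $H$ using only $k-1$ lids of length $\Lambda_{2k}$, which contradicts the minimality of $\lambda_{k-1}$ (since then $\mathbb{W}_{k-1}(\Lambda_{2k})\neq\emptyset$ would force $\lambda_{k-1}\le\Lambda_{2k}$).

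First I would set up the bookkeeping via components. Decompose $\mathcal{S}_{2k}$ into its maximal components $W_1,\dots,W_m$ (as defined just above), so the $2k$ lids are partitioned into these components together with the ``singleton'' lids — lids $\ell_i$ for which both $\mathbb{I}_{i-1}=0$ and $\mathbb{I}_i=0$, i.e.\ lids that share no high-priority point with either neighbor. By Lemma~\ref{lem:f}, a maximal component $W$ of size $|W|$ contains the priority segments in its overlaps and these can be singly covered by $\lfloor |W|/2\rfloor$ lids of length $\Lambda_{2k}$ drawn from $W$ itself (taking every other lid). The key point is that a maximal component with $|W|\ge 2$ ``wastes'' at least one lid: it uses $|W|$ lids of $\mathcal{S}_{2k}$ but only $\lfloor|W|/2\rfloor\le |W|-1$ are needed for a single cover of its high-priority content. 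A singleton lid that still carries a high-priority point needs $1$ lid — no saving there — but a singleton lid that carries \emph{no} high-priority point is pure waste: it can be discarded entirely from a single cover. So: let $s_0$ be the number of high-priority-free singleton lids, $s_1$ the number of singleton lids meeting $H$, and $W_1,\dots,W_m$ the maximal components. Then $2k = s_0+s_1+\sum_j |W_j|$, and $H$ is singly covered by a sub-collection of $\mathcal{S}_{2k}$'s lids (all of length $\Lambda_{2k}$) of size $N := s_1 + \sum_j \lfloor |W_j|/2\rfloor$. I need $N \le k-1$.

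Now estimate $N$. Each maximal component has $|W_j|\ge 2$, so $\lfloor|W_j|/2\rfloor \le |W_j| - \lceil|W_j|/2\rceil \le |W_j| - 1$; more usefully, $\lfloor|W_j|/2\rfloor \le |W_j|/2$. Hence
$$N = s_1 + \sum_{j=1}^m \left\lfloor \frac{|W_j|}{2}\right\rfloor \le s_1 + \frac{1}{2}\sum_{j=1}^m |W_j| = s_1 + \frac{2k - s_0 - s_1}{2} = k + \frac{s_1 - s_0}{2} - \frac{s_1}{1}\cdot 0,$$
wait — recompute cleanly: $N \le s_1 + \tfrac12(2k - s_0 - s_1) = k - \tfrac12 s_0 + \tfrac12 s_1$. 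This is $\le k-1$ only when $s_0 \ge s_1 + 2$, which is not automatic, so the crude bound is not enough and I will need to use the hypothesis that \emph{some} lid covers no point of $H$ more carefully — in particular to control $s_1$. The honest route: a singleton lid meeting $H$ carries a whole piece of $H$ isolated from the rest, but adjacent lids in $\mathcal{S}_{2k}$ must overlap it (the cover is contiguous, $\bigcup\ell_i = C$), and those neighbours, failing to share an $H$-point with it, lie essentially outside $H$ near that junction; a parity/double-cover argument on $H^*$ and the structure of $\ESEll{2k}$ should show that whenever $s_0 = 0$ the double-cover requirement already forces $N\le k-1$ unless $\mathcal{S}_{2k}$ has \emph{no} $H$-free lid at all — which is exactly the contrapositive we want. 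So the real argument is: assume the conclusion fails, i.e.\ $s_0 \ge 1$; show that the presence of one $H$-free lid lets us delete it and re-cover using at most $k-1$ of the remaining lids, by a local surgery that merges its two neighbouring components (or absorbs it into one), using double-coverage of $H$ to guarantee nothing is uncovered.

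\textbf{Main obstacle.} The crux is the combinatorial accounting: I must show that a single $H$-free lid is enough to bring the single-cover count down from $k$ (the naive ``one per lid up to halving'') all the way to $k-1$, simultaneously across all components. The danger is a configuration with many singleton $H$-meeting lids ($s_1$ large) where removing one $H$-free lid saves only locally. Resolving this will require the geometric fact, available from Lemmas~\ref{thm:rightshiftlaxcover}--\ref{lem:blocks}, that in a strong double cover the overlapping segments containing $H$-points are ``left-justified'' and that $C$ is contiguously covered, so an $H$-free lid forces its neighbours to overlap it in $L$ only; chasing through the block structure of $\ESEll{2k}$ to convert that into the count $N\le k-1$ is the part I expect to be delicate, and I would organize it by a direct construction (as in Lemma~\ref{thm:rightshiftlaxcover}) rather than by inequalities alone.
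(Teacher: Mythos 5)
Your proposal has the right target (extract a single cover of $H$ with $k-1$ lids of length $\Lambda_{2k}$ and contradict the minimality of $\lambda_{k-1}$), but as written it does not get there: you concede that your count $N\le k-\tfrac12 s_0+\tfrac12 s_1$ is insufficient, and you replace the remaining step with a sketch of a ``local surgery'' that you yourself expect to be delicate. That remaining step is the entire content of the lemma, so the proof is incomplete. The idea you are missing is an elementary consequence of the lids having equal length and being ordered left to right: if $p\in\ell_i\cap\ell_j$ with $i<j$, then $Left(\ell_i)\le Left(\ell_{i+1})\le Left(\ell_j)\le p$ and $p\le Right(\ell_i)\le Right(\ell_{i+1})$, so $p\in\ell_{i+1}$; iterating, every doubly covered point lies in the overlap of two \emph{consecutive} lids, hence is covered by at least one even-indexed and at least one odd-indexed lid. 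The paper's proof is then one line: the $k$ even lids single-cover $H$, the $k$ odd lids single-cover $H$, and deleting the $H$-free lid from its parity class leaves a single cover of $H$ with $k-1$ lids of length $\Lambda_{2k}$, forcing $\lambda_{k-1}\le\Lambda_{2k}$.

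The same observation also repairs your accounting, if you prefer that route: since every high-priority point sits in a consecutive overlap, any lid $\ell_i$ meeting $H$ has $\mathbb{I}_{i-1}=1$ or $\mathbb{I}_i=1$, so it is never a singleton and $s_1=0$. An $H$-free lid has both indicators equal to $0$, so the assumption for contradiction gives $s_0\ge1$, and your bound becomes $N\le k-s_0/2\le k-\tfrac12$, whence $N\le k-1$ because $N$ is an integer. So your component decomposition can be made to work, but only after the geometric fact you were groping for in the last paragraph is pinned down; without it the proposal has a genuine gap.
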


\begin{proof}
Given a double cover $\mathcal{S}_{2k}$, the set of even lids cover all the high priority points as well as the set of odd lids. Assume that $\ell_i$ is not covering any high priority point. Therefore, if $i$ is even, $k-1$ even lids cover all the high priority points and therefore, $\Lambda_{2k} \geq  \lambda_{k-1}$ contradicts the assumption. Similarly, if $i$ is odd. 
\end{proof}

The \emph{component partition} of $\mathcal{S}_{2k}$  is the set $\{W_1,W_2,\dotsc,W_m\}$ of all maximal components ordered from left to right. Next we show that if $\Lambda_{2k} \leq  \lambda_{k-1}$, the size of each maximal component $|W_i|$ is even.

\begin{lemma}\label{lem:max}
Let $\{W_{1}, W_{2}, \dotsc, W_{m} \}$ be the component partition of $\mathcal{S}_{2k}$. Then, $|W_i|$ is even for all $i \in [1,m]$ provided that $\Lambda_{2k} <  \lambda_{k-1}$.
\end{lemma}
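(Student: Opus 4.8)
The plan is to argue by contradiction: suppose some maximal component $W_i$ has odd size. I will show that this lets me build a single lid cover of $H$ with only $k-1$ lids of length $\Lambda_{2k}$, contradicting $\Lambda_{2k} < \lambda_{k-1}$ (since that inequality says no single cover with $k-1$ lids of length $\le \Lambda_{2k}$ exists). The key counting tool is Lemma~\ref{lem:f}: within each maximal component $W_j$, the high priority points can be singly covered by $f(W_j) = \lfloor |W_j|/2 \rfloor$ lids of length $\Lambda_{2k}$, by keeping every other lid. Combined with Lemma~\ref{lem:all} — which gives that \emph{every} lid of $\mathcal{S}_{2k}$ meets $H$ when $\Lambda_{2k} < \lambda_{k-1}$ — I also know that there are no "isolated" lids outside the components that carry priority points on their own; more precisely, every lid lies in some maximal component (a singleton lid with $\mathbb{I}_{i-1}=\mathbb{I}_i = 0$ would itself be a maximal component of size $1$, which is exactly the odd case we are trying to rule out, so the argument is uniform).

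First I would record the total count. Since the components $W_1,\dots,W_m$ partition the $2k$ lids, $\sum_{j=1}^m |W_j| = 2k$. The high priority points of $\mathcal{S}_{2k}$ are exactly the priority points lying in the overlaps — more carefully, because $\mathcal{S}_{2k}$ is a strong double cover, every point of $H$ is in two lids $\ell_a \cap \ell_b$ with $a \ne b$, and one checks these two lids are consecutive (using Lemma~\ref{thm:rightshiftlaxcover}, the overlapping segments of a left/right-shifted cover are between consecutive lids), hence $\ell_a,\ell_b$ lie in a common maximal component. Therefore the union of the priority-covering lids from each component singly covers all of $H$, and the total number of lids used is $\sum_{j=1}^m f(W_j) = \sum_{j=1}^m \lfloor |W_j|/2\rfloor$. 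If every $|W_j|$ is even this sum is exactly $k$; but if at least one $|W_j|$ is odd, then $\sum_j \lfloor |W_j|/2 \rfloor \le (2k - 1)/2$ rounded down... — here I need to be careful: an odd component contributes $(|W_j|-1)/2$, i.e. it "loses" $1/2$, and since parities of the $|W_j|$ must sum to an even number $2k$, the number of odd components is even, so if one exists there are at least two, and $\sum_j \lfloor |W_j|/2\rfloor \le k - 1$. That gives a single cover of $H$ with $\le k-1$ lids of length $\Lambda_{2k}$, so $\lambda_{k-1} \le \Lambda_{2k}$, contradicting the hypothesis.

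The main obstacle, and the step I would spend the most care on, is the claim that "the priority-covering lids selected from the components actually singly cover all of $H$." This requires knowing that every priority point of $H$ is double-covered by two \emph{consecutive} lids (so it sits inside a single maximal component and is picked up by that component's selected lids), and that within a component the "every other lid" selection of Lemma~\ref{lem:f} genuinely covers the priority overlaps — including the priority points that are endpoints of lids and may be triply covered (Lemma~\ref{thm:rightshiftlaxcover}(2)). I would handle this by invoking the structure of $\ESEll{2k}$: in the left-shifted cover each overlap between $\ell_i$ and $\ell_{i+1}$ has its left endpoint at a high priority point, and consecutive overlaps within a component are themselves adjacent, so keeping $\ell_i$ (the odd-indexed lids within the component) covers overlap $i{-}1$–to–$i$... one has to check the endpoints match up so that nothing priority falls in a gap. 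Once that covering claim is secured, the parity/counting argument above is routine. I should also double-check the boundary bookkeeping with $H^*$ versus $H$: the extra points $0,1$ are not in $H$, so they do not need to be singly covered by the $k-1$ lids, which is consistent with using $f(W_j)$ defined via $H$.
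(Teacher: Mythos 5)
Your proof is correct and follows essentially the same route as the paper: apply Lemma~\ref{lem:f} to each maximal component, use $\sum_j |W_j| = 2k$ to conclude that the number of odd components is even (hence at least two if any exist), and derive a single cover of $H$ with at most $k-1$ lids of length $\Lambda_{2k}$, contradicting $\Lambda_{2k} < \lambda_{k-1}$. You are in fact more explicit than the paper about why the selected lids genuinely cover all of $H$ (every doubly covered point lies in the overlap of two \emph{consecutive} lids, which follows from the left-to-right ordering), a detail the paper's terse counting statement leaves implicit.
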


\begin{proof}
Consider the indicative function 
$$\mathfrak{I}_W = 
\begin{cases}
1 &\mbox{ if } |W| \mbox{ is odd }\\
0 &\mbox{ otherwise }
\end{cases}
$$
Let $w_{odd} = \sum_{i=1}^m \mathfrak{I}_{W_i}$. Observe that $\sum_{i=1}^m f(W_i) + w_{odd} \leq k$ since at most $k$ lids cover $H$. Therefore, if $w_{odd} > 0$ we can cover the priority segments with less than $k$ lids of length $\Lambda_{2k}$. 
\end{proof}

The lower-bound proof is based on finding $k+1$ points far apart. Thus, $k$ robots are forced to move constantly to visit $k+1$ points.  The following theorem provides such a condition.

\begin{theorem}\label{generalLowerBound}
Suppose $C$ contains $p_1,p_2, .\dotsc, p_{k+1}$ points  where $k$ of the points are high priority and $p_{j+1}-p_j \ge x$ for each $j \in [1,k]$,  then, $I^*_k \geq 2x$ where  $x$ is a positive real number.
\end{theorem}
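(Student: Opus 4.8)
The plan is to exploit the speed bound of the robots together with the fact that low priority points are visited infinitely often, so that for any point $p_j$ there is a well-defined visiting time arbitrarily far in the future. The key physical constraint is that a single robot moving at speed at most one cannot visit two points $p_j$ and $p_{j+1}$ within a time window of length less than $2x$ centered appropriately, since $|p_{j+1}-p_j|\ge x$ forces a travel time of at least $x$ between them.

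First I would argue by contradiction: suppose some strategy $A$ attains $I_A < 2x$, say $I_A = 2x - 2\delta$ for some $\delta > 0$. Since the low priority points are visited infinitely often (and in particular $p_1,\dots,p_{k+1}$ are visited at arbitrarily large times when they are low priority, while the high-priority ones are visited within every window of length $I_A$), pick a large time $t_0$ at which some robot visits $p_1$. Now consider the time window $(t_0 - x + \delta,\ t_0 + x - \delta)$. Within this window, the robot that was at $p_1$ at time $t_0$ cannot reach $p_2$, since $|p_2 - p_1| \ge x$ and the window has half-width $x - \delta < x$. I would then track, window by window, which robot is "responsible" for each $p_j$: because each $p_j$ (for the $k$ high-priority indices) must be visited at least once inside the appropriate shifted window of length $2x - 2\delta$, and no single robot can be responsible for two consecutive $p_j$'s (the travel time between them exceeds the available half-window), a pigeonhole-type argument over the $k$ robots and $k+1$ points yields a contradiction.

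More carefully, I would set up the windows as a chain: define $t_0$ as above, and inductively, having found that some robot $r$ visits $p_j$ at a time $t_{j-1}$ inside a controlled window, observe that $r$ cannot be the robot that next visits $p_{j+1}$ within the corresponding sub-window, so a different robot must, and I track its visiting time $t_j$, which stays within $O(\delta)$-controlled distance of $t_0$. After $k$ steps I have used $k+1$ points but the visiting robots must all be distinct (any repetition would force one robot to traverse a distance $\ge 2x$ in time $< 2x$, contradicting the speed bound once the window widths are chosen with the right slack), contradicting that there are only $k$ robots. The bookkeeping of the window endpoints — ensuring the accumulated slack from the $\epsilon/\delta$ terms does not destroy the argument, exactly as in the two-robot warm-up in Section~\ref{tworobots} — is the part requiring care.

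The main obstacle I anticipate is making the "distinct robots" argument fully rigorous: it is intuitively clear that a robot committed to $p_j$ near time $t_0$ cannot also service $p_{j'}$ for $j' \ne j$ near $t_0$ when the $p$'s are spread out by $x$ each, but turning this into a clean inductive statement about which robot visits which point, with windows whose lengths and centers are chosen so the argument closes with room to spare as $\epsilon \to 0$, is the delicate step. I would handle it by choosing the window half-widths to be slightly less than $x$ (e.g. $x - \delta$) and noting that the total temporal spread needed to visit $p_1$ through $p_{k+1}$ in order at speed one is at least $kx$, while the union of the relevant windows has total temporal extent bounded in a way incompatible with $k$ robots each covering only non-adjacent points — then let $\delta \to 0$ to conclude $I^*_k \ge 2x$.
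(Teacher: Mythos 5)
Your chaining mechanism is essentially sound, and your worry about the ``distinct robots'' step is resolvable more cleanly than you suggest: if the designated visit times satisfy $|t_j - t_{j-1}| \le x - \delta$, then for any $a<b$ a single robot serving both $p_a$ at time $t_{a-1}$ and $p_b$ at time $t_{b-1}$ would have to cover distance at least $(b-a)x$ in time at most $(b-a)(x-\delta)$, which the unit speed bound forbids; so the visitors of all chained points are pairwise distinct and no further drift bookkeeping is needed. (The paper instead orders the robots by position and runs a one-directional induction showing that, at the current visit time, all robots at or to the left of the current point are too far from the next point to reach it within its window; your telescoping version is a legitimate alternative.)

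The genuine gap is your treatment of the low-priority point. Exactly one of $p_1,\dotsc,p_{k+1}$ is not high priority, and for that point the idle-time hypothesis gives no guarantee that it is visited inside the length-$(2x-2\delta)$ window around the previous visit time --- it is only guaranteed to be visited infinitely often, at times you do not control. You start your chain at $p_1$ and propagate through all $k+1$ points in order, so if the low-priority point is some interior $p_m$, your chain breaks at step $m$: you obtain only $m-1$ distinct robots for $m-1$ points and no contradiction. The paper's fix is to anchor the chain at the low-priority point: pick a time $t_0$ at which some robot visits it (such a time exists precisely because low-priority points must be visited infinitely often), and then propagate outward --- in the paper, in the single direction on which the high-priority points outnumber the robots available on that side; in your telescoping framework you could equally well propagate in both directions from $p_m$, since the bound $|s_b - s_a| < (b-a)(x-\delta)$ on designated visit times still holds when $a < m < b$. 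With that repair your argument closes.
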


\begin{proof}
Suppose toward a contradiction that there is a strategy with idle time less than $2x$. Let $p_i$ be the  point of low priority. Let $t_i$ be the time that one robot, say $r_j$, visits $p_i$. Since every point in $C$ is visited infinitely often such a time exists; see Figure~\ref{fig:kpoints}. Observe that at time $t_i$, $p_i$ splits the number of points and robots into  $i-1$ points and $j-1$ robots to the left and $k - i$ points and $k-j$ robots to the right. Either there are more priority points than robots to the right of $p_i$ (i.e.\ $j > i$), or there are at least as many priority points as robots to the left of $p_i$ (i.e.\ $j\le i$). We consider the case where $j> i$ (the case where $j\le i$ is symmetric). Since $p_i$ is at distance at least $x$ from $p_{i+1}$, the last time that $r_j$ could have visited $p_{i+1}$  was not later than $t_i - x$ and the earliest time that $r_j$ can visit $p_{i+1}$ is $t_i + x$. Therefore, another robot, say $r_{j+1}$, must visit $p_{i+1}$ at time  $ t_{i+1} \in [t_i - x, t_i + x]$. Observe that $j$ robots are on the left of $p_{i+1}$ and they cannot visit $p_{i+1}$ at time  $ t_{i+1} \in [t_i - x, t_i + x]$ as well. Similarly, $r_{j+1}$ cannot visit $p_{i+2}$ in the interval $[t_{i+1} - x, t_{i+1} + x]$. Therefore, another robot, say $r_{j+2}$, must visit $p_{i+2}$ at time $ t_{i+2} \in [t_{i+1} - x, t_{i+1} + x]$. Continuing this reasoning, the last robot, say $r_k$ cannot visit $p_{k+1}$ in the interval $[t_{k+1} - x, t_{k+1} + x]$  leaving $p_{k+1}$ unvisited for at least $2x$ time which contradicts the assumption that the idle time is less than $2x$. In the case where $j\le i$, we consider $r_j$ moving to the left of $p_i$ and symmetrically run into the same contradiction.
\begin{figure}[htbp]
\centering
\includegraphics[scale=1.0]{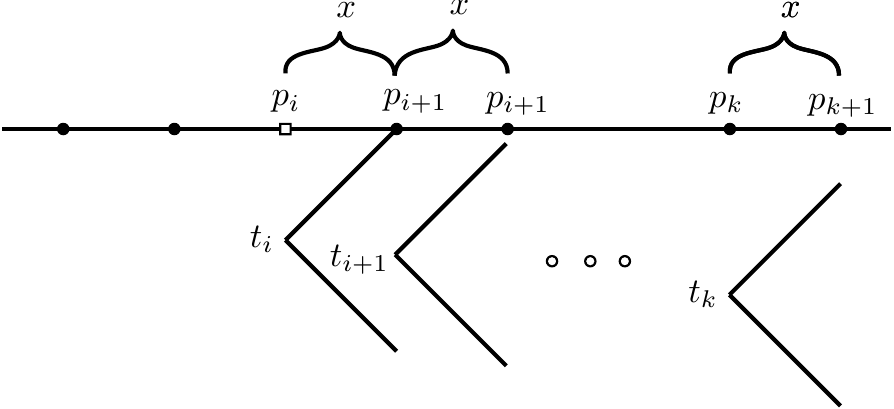}
\caption{The  minimum idle time for  $k$ robots with $k+1$  points with $k$ points being high priority at distance at least $x$ is at least $2x$.}
\label{fig:kpoints}
\end{figure}
\end{proof}

Let $\{W_{1}, W_{2}, \dotsc, W_{m} \}$ be the component partition of $\mathcal{S}_{2k}$.  From Lemma~\ref{lem:all}, every lid covers at least one high priority point if  $\Lambda_{2k} <  \lambda_{k-1}$. Therefore, for any two consecutive maximal consecutive lids, $W_{i}, W_{i+1}$  there exist $\ell \in W_{i}$ and $\ell' \in W_{i+1}$ such that $\ell \cap \ell' \cap H = \emptyset$.

We say that a set of points $P$ of $C$ is in \emph{general position} if for all $a,b,c,d\in P$, with $a\le b$ and $c\le d$, if $b-a$ is a rational multiple of $d-c$, then $a=c$ and $b=d$. We say that a high priority set $H$ is in \emph{general position} if the set of all endpoints of intervals in $H$ are in general position.

\begin{lemma}\label{lem:dcritical1}
Suppose that $\Lambda_{2k} < \lambda_{k-1}$ If $H$ is in general position, then $\ESEll{2k} \cap \ESErr{2k} = \{\elll_{a(1)},\elll_{a(2)}, ...,\elll_{a(b)}\} = \{\ellrr{a(1)},\ellrr{a(2)}, ..., \ellrr{a(b)}\}$ is the only critical block.
\end{lemma}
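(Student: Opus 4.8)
The plan is to show first that under the hypothesis $\Lambda_{2k} < \lambda_{k-1}$ together with general position, the left-shifted cover $\ESEll{2k}$ consists of a single maximal block, so that it is itself a critical block (its existence being guaranteed by Lemma~\ref{lem:dcritical}); the same argument applies to $\ESErr{2k}$. Then I would argue the two coincide and that this common block is the \emph{unique} critical block. For the first part, suppose $\ESEll{2k}$ had two or more maximal blocks. By the component machinery, $\Lambda_{2k} < \lambda_{k-1}$ forces (Lemma~\ref{lem:all}) every lid of $\mathcal{S}_{2k}$ to contain a high priority point, and (Lemma~\ref{lem:max}) every maximal component to have even size; combined with Lemma~\ref{lem:f} this pins down exactly $k$ lids doing the single covering with no slack. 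A block boundary in $\ESEll{2k}$ occurs precisely where two consecutive lids meet without overlap (equivalently, a component boundary where $\mathbb{I}_i = 0$). I would show that at such a boundary the construction in Lemma~\ref{thm:rightshiftlaxcover}/the definition of $\ESEll{2k}$ leaves ``room'' that contradicts the tightness forced by Lemma~\ref{lem:max}: if there were a genuine gap between blocks, one could shrink all lids slightly (as in the proof of Lemma~\ref{lem:dcritical}) and still strong-double-cover, contradicting optimality of $\Lambda_{2k}$ — unless the block structure is forced, and general position rules out the degenerate coincidences that would otherwise permit multiple tight blocks.

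Next I would use general position to upgrade ``$\ESEll{2k}$ is one block'' to ``the critical block is unique.'' The point of the general-position hypothesis is that distances between endpoints of $H$ are incommensurable unless they are literally the same pair; this prevents two distinct sub-blocks from each being critical with the same total length, and it prevents the left-shifted and right-shifted covers from being placed differently while both remaining optimal. Concretely, if $\ESEll{2k}$ and $\ESErr{2k}$ were distinct as single blocks, then by Lemma~\ref{lemma:smaller} we have $Left(\elll_i) \ge Left(\ellrr i)$ for every $i$, with strict inequality somewhere; combined with both covers having lids of the identical length $\Lambda_{2k}$ and both being single maximal blocks spanning $[0,1]$, the positions are determined by the high-priority endpoints they must hit (via $L(\cdot)$, $R(\cdot)$), and general position forces those determining equations to have a unique solution, so $\elll_i = \ellrr i$ for all $i$. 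Writing this common block as $\{\elll_{a(1)},\dots,\elll_{a(b)}\} = \{\ellrr{a(1)},\dots,\ellrr{a(b)}\}$ gives the stated equality.

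Finally, uniqueness of the critical block: any critical block $B$ has $Left(B)$ a left endpoint and $Right(B)$ a right endpoint of segments of $H$, and $|B|\cdot\Lambda_{2k} = Right(B) - Left(B)$ is then a specific difference of $H$-endpoints that, by general position, is realized by only one such pair and only one integer multiple of $\Lambda_{2k}$; since $\ESEll{2k}$ is a single block, $B$ must be that whole block. I expect the main obstacle to be the first part — rigorously ruling out a second maximal block in $\ESEll{2k}$ — because it requires carefully combining the counting identity $\sum f(W_i) + w_{odd} \le k$ from Lemma~\ref{lem:max} with the shrinking argument of Lemma~\ref{lem:dcritical} to show that any non-overlap between blocks would either violate optimality of $\Lambda_{2k}$ or, via a parity/incommensurability argument, be excluded by general position; the role of general position here (as opposed to in the later uniqueness steps) is the subtlest point and is where I would spend the most care.
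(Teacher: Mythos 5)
There is a genuine gap here, and part of your plan rests on a misreading of the definitions. A \emph{block} is a set of consecutive lids that abut exactly, $Right(\ell_{a(i)}) = Left(\ell_{a(i+1)})$; in a strong double cover consecutive lids typically \emph{overlap} (that is how high priority points get doubly covered), so $\ESEll{2k}$ is in general a union of many maximal blocks, most consisting of a single lid, and your opening claim that it ``consists of a single maximal block'' is false and is also not what the lemma asserts. (You also invert the definition when you say a block boundary occurs ``where two consecutive lids meet without overlap'' --- that is exactly the condition for two lids to lie \emph{inside} the same block --- and it is not equivalent to $\mathbb{I}_i = 0$, which concerns whether the overlap contains a high priority point.) The part of your argument that is sound, and that matches the paper, is the use of general position: the length of any critical block is an integer multiple of $\Lambda_{2k}$ and its two endpoints are endpoints of segments of $H$, so two distinct critical blocks would produce two differences of $H$-endpoints that are rational multiples of each other, contradicting general position. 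That alone gives uniqueness of the critical block as a geometric object and shows that $\ESEll{2k}$ and $\ESErr{2k}$ share the same critical block as a set of intervals; no ``single block'' claim is needed or used in the paper.

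What your proposal is missing is the second half of the lemma: that this common critical block occupies the \emph{same index positions} in both covers, i.e.\ $\elll_{a(i)} = \ellrr{a(i)}$ rather than merely $\{\elll_{a(i)}\}_i = \{\ellrr{a'(i)}\}_i$ for some shifted index sequence $a'$. This is precisely where the hypothesis $\Lambda_{2k} < \lambda_{k-1}$ does its work, and your proposal never deploys it for this purpose --- you only gesture at ``determining equations having a unique solution,'' which general position by itself does not deliver. The paper's argument is: by Lemma~\ref{lemma:smaller}, $Left(\elll_{j}) \ge Left(\ellrr{j})$ for all $j$, so a mismatch forces $a'(i) < a(i)$ for some $i$; one then splices the prefix $\elll_{1},\dotsc,\elll_{a(i)-1}$ of the left-shifted cover with the suffix $\ellrr{a'(i)+1},\dotsc,\ellrr{2k}$ of the right-shifted cover (the inequality guarantees the splice still strong double covers $C$) to obtain a strong double cover using strictly fewer than $2k$ lids of length $\Lambda_{2k}$, which via the parity argument behind Lemma~\ref{lem:all} yields $\Lambda_{2k} \ge \lambda_{k-1}$, a contradiction. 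Without some version of this counting step the displayed equality of index sequences in the statement is not established.
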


\begin{proof}
From Lemma~\ref{lem:dcritical}, $\ESEll{2k}$ and $\ESErr{2k}$ each has a critical block. The length of a critical block must be a multiple of the lid-length $\Lambda_{2k}$, and hence any two critical blocks (of potentially distinct double  $\Lambda_{2k}$-lid covers) would have lengths that are rational multiples of each other, violating the assumption that the priority set $H$ is in general position. Thus the critical block is unique, and must be the same for any two double $\Lambda_{2k}$-lid  covers. Therefore, $\{\elll_{a(1)},\elll_{a(2)}, ...,\elll_{a(b)}\} = \{\ellrr{a'(1)},\ellrr{a'(2)}, ..., \ellrr{a'(b)}\}$.

We need to show that $\elll_{a(i)} = \ellrr{a'(i)}$ for all $i$. Suppose by contradiction that $\elll_{a(i)} \not= \ellrr{a'(i)}$ for some $i$. From Lemma~\ref{lemma:smaller}, $Left(\elll_j) \geq Left(\ellrr{j})$ for all $j$. Therefore, $a'(i) <  a(i)$. Suppose that $a'(i) +1 =  a(i)$. Therefore,  $Left(\elll_{a(i)})\geq Left(\ellrr{a'(i)+1}) \geq Left(\ellrr{a'(i)})$ and  the set $\{\elll_{1}, \elll_{2}, .., \elll_{a(i)-1}, \elll_{a(i)} =  \ellrr{a'(i)+1}, ... \ellrr{2k}\}$ strong double covers $C$ with $2k-1$ lids. Therefore, from Lemma~\ref{lem:all}, $\Lambda_{2k} \geq \lambda_{k-1}$ which contradicts the assumption. 
\end{proof}

\begin{theorem}\label{thm:opt1}
If the priority set $H$ is in general position, then $I^*_k \geq  2\min(\Lambda_{2k}, \lambda_{k-1})$.
\end{theorem}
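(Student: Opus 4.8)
The plan is to reduce the theorem to Theorem~\ref{generalLowerBound} by exhibiting $k+1$ points of $C$, of which $k$ are high priority, with consecutive gaps at least $\min(\Lambda_{2k},\lambda_{k-1})$. I will split into the two cases $\lambda_{k-1}\le \Lambda_{2k}$ and $\Lambda_{2k}<\lambda_{k-1}$, since the two regimes use different structural facts. In the first case, set $x=\lambda_{k-1}$; in the second, set $x=\Lambda_{2k}$, and apply Theorem~\ref{generalLowerBound} once the points are in place.

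For the case $\Lambda_{2k}<\lambda_{k-1}$, I would work with the left-shifted cover $\ESEll{2k}$ and its component partition $\{W_1,\dots,W_m\}$. By Lemma~\ref{lem:all} every lid covers a high-priority point, and by Lemma~\ref{lem:max} each $|W_i|$ is even; by Lemma~\ref{lem:f}, $f(W_i)=|W_i|/2$, and since $\sum f(W_i)\le k$ while the $2k$ lids split among the components we get $\sum|W_i|=2k$, so $\sum f(W_i)=k$ exactly. The idea is: within each maximal component $W_i$ the overlapping high-priority regions let me place high-priority points, and between consecutive components $W_i,W_{i+1}$ the indicator $\mathbb{I}$ is zero, giving a genuine ``gap lid'' whose left and right halves are separated. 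Concretely I would pick, for each of the $2k$ lids, one representative point (a high-priority point inside it, or an endpoint), alternating so that between a lid's representative and the next-but-one lid's representative there is slack $\ge\Lambda_{2k}$ — essentially the same thinning argument as in Lemma~\ref{lem:f}. This yields $k$ high-priority points with consecutive gaps $\ge\Lambda_{2k}$; I then need one extra point. Here I would invoke Lemma~\ref{lem:dcritical1}: since $H$ is in general position there is a unique critical block $B^*$, common to $\ESEll{2k}$ and $\ESErr{2k}$. Outside $B^*$ the two shifted covers differ (Lemma~\ref{lemma:smaller}), and this discrepancy produces room on one side of $C$ — either to the left of $Left(B^*)=0$-side or to the right — to append an extra point (low priority, an endpoint such as $0$ or $1$) at distance $\ge\Lambda_{2k}$ from the nearest chosen high-priority point, using that a single $\Lambda_{2k}$-cover of the non-critical part would otherwise use fewer than $k-1$ lids, contradicting $\Lambda_{2k}<\lambda_{k-1}$. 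Then Theorem~\ref{generalLowerBound} with $x=\Lambda_{2k}$ finishes this case.

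For the case $\lambda_{k-1}\le\Lambda_{2k}$, I would use $\Wl_{k-1}$, the left-shifted optimal single cover with $k-1$ lids, which by Lemma~\ref{lem:criticalblock} has a critical block. The goal now is $k+1$ points with gaps $\ge\lambda_{k-1}$, $k$ of them high priority. I would take the $k-1$ lids of $\Wl_{k-1}$, pick in each a high-priority point, and exploit that an optimal single cover is ``tight'' — each lid is forced, so consecutive chosen points are separated by at least $\lambda_{k-1}$ (if two high-priority points distinct lids must cover were closer than $\lambda_{k-1}$, one lid could be deleted or the cover shrunk, contradicting optimality/minimality of $k-1$ and $\lambda_{k-1}$). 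That gives $k-1$ high-priority points; I then need two more points (one more high priority, one of any priority) at the far ends. Because $\lambda_{k-1}$ is the minimum length admitting a $(k-1)$-lid single cover, a $(k-2)$-lid cover is impossible, so there is high-priority ``mass'' outside any $k-2$ of the lids, forcing a $k$-th high-priority point at distance $\ge\lambda_{k-1}$ from the others; a boundary point ($0$ or $1$) supplies the $(k+1)$-st point with the remaining gap, using the critical-block endpoints and $a\le 1-b$-type extremality. Theorem~\ref{generalLowerBound} with $x=\lambda_{k-1}$ then gives $I^*_k\ge 2\lambda_{k-1}=2\min(\Lambda_{2k},\lambda_{k-1})$.

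The main obstacle I anticipate is the bookkeeping that produces the ``extra'' $(k+1)$-st point cleanly in each case — i.e., converting the counting facts ($\sum f(W_i)=k$, and minimality of $\lambda_{k-1}$ over $(k-1)$- vs $(k-2)$-lid covers) into an actual point of $C$ at the correct distance, while simultaneously verifying all $k$ consecutive gaps are $\ge x$ and not merely ``most'' of them. Getting the general-position hypothesis to do exactly the work needed — guaranteeing the critical block is unique so that the left- and right-shifted covers pin down where the slack lives — is the delicate point; without it the extra point could fail to exist, which is presumably why the hypothesis is present. I would handle this by carefully choosing representatives component-by-component (respecting Lemma~\ref{thm:rightshiftlaxcover}'s normalization that overlaps with high-priority points begin at a high-priority point) so the slack accumulates at one identifiable end of $C$.
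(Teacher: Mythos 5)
Your overall plan (produce $k+1$ points with consecutive gaps at least $x$, $k$ of them high priority, and invoke Theorem~\ref{generalLowerBound}) is exactly the paper's, and your sketch for the case $\Lambda_{2k}<\lambda_{k-1}$ follows the paper's route (components, Lemmas~\ref{lem:f}, \ref{lem:all}, \ref{lem:max}, and the unique critical block of Lemma~\ref{lem:dcritical1}), though you defer precisely the parity bookkeeping that the paper's three claims carry out: the paper takes \emph{all} $b+1$ lid endpoints of the critical block as chosen points, then alternating left endpoints of $\ESEll{2k}$ to its left and alternating right endpoints of $\ESErr{2k}$ to its right, and the single low-priority point is either $0$, $1$, or the unique inter-component point inside the critical block. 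That part is incomplete but repairable.

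The genuine gap is in your case $\lambda_{k-1}\le\Lambda_{2k}$. Your argument for the two extra points rests only on facts that hold irrespective of how $\lambda_{k-1}$ compares to $\Lambda_{2k}$ (existence of a critical block in $\Wl_{k-1}$, minimality of $\lambda_{k-1}$, impossibility of a $(k-2)$-lid cover of that length). No such argument can work: if $k+1$ points with gaps $\ge\lambda_{k-1}$ and $k$ high-priority points always existed, Theorem~\ref{generalLowerBound} would give $I^*_k\ge 2\lambda_{k-1}$ unconditionally, contradicting the upper bound $2\Lambda_{2k}$ of Theorem~\ref{thm:upper} in the regime $\Lambda_{2k}<\lambda_{k-1}$. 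So the hypothesis $\lambda_{k-1}\le\Lambda_{2k}$ must be used essentially, and your sketch never uses it. Concretely, there is no reason a boundary point $0$ or $1$ is at distance $\ge\lambda_{k-1}$ from the nearest chosen high-priority point, nor that the ``mass'' forcing a $k$-th lid sits $\ge\lambda_{k-1}$ away from the points already chosen. The paper handles this by contradiction: assuming no admissible extra point exists, it packs additional lids of length $\lambda_{k-1}$ around the common critical block of $\mathcal{W}_{k-1}^{\rightarrow}$ and $\mathcal{W}_{k-1}^{\leftarrow}$ and assembles from them a strong double cover of $C$ with exactly $2k$ lids of length $\lambda_{k-1}$ having no critical block, hence shrinkable, which contradicts $\Lambda_{2k}\ge\lambda_{k-1}$. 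You would need to supply an argument of this kind (or otherwise invoke the case hypothesis) to close the second case.
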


\begin{proof}
Let $\ESEll{2k} = \{\elll_1, \elll_2, \dotsc, \elll_{2k}\}$  and $\ESErr{2k}  = \{\ellrr{1}, \ellrr{2}, \dotsc, \ellrr{2k}\}$ be a strong shift left and right double $\Lambda_{2k}$-lid covers of $C$ with $2k$ lids.Let $\mathcal{W}_{k-1}^\rightarrow = \{\elll_1, \elll_2, ..., \elll_{k-1}\}$ and $\mathcal{W}_{k-1}^\leftarrow = \{\ellrr{1}, \ellrr{2}, ..., \ellrr{k-1}\}$ be the lids of the left shift  and right shift  $\lambda_{k-1}$-lid covers of $H$ with $k-1$ lids.

First consider the case where $\Lambda_{2k} <  \lambda_{k-1}$. Let $\{W_{1}^\rightarrow, W_{2}^\rightarrow, \dotsc, W_{m}^\rightarrow \}$ be the component partition of $\ESEll{2k}$ and $\{W_{1}^\leftarrow, W_{2}^\leftarrow, \dotsc, W_{m}^\leftarrow \}$ be the component partition of $\ESErr{2k}$. From Lemma~\ref{lem:max}, $W^\leftarrow_{i}$ and  $W^\rightarrow_{i}$ have even number and from Lemma~\ref{lem:all}, every lid covers at least one point in  $H$.

Recall that from Lemma~\ref{lem:dcritical}, $\ESEll{2k}$ and $\ESErr{2k}$ have a critical block each. Further, since the high priority segments are in general position, the critical blocks are unique from Lemma~\ref{lem:dcritical1}. Let $B = \{\ell_{a(1)}, \ell_{a(2)}, \dotsc, \ell_{a(b)}\}$ be the critical block (of both $\ESEll{2k}$ and $\ESErr{2k}$). The proof is based on three claims. In the first claim we show how to select the points in the segment determined by $B$, meanwhile in the second and third claims we show how to select the points on the left and right of $B$ such that all points are at a distance at least $\Lambda_{2k}$ from each other and $k$ of them are high priority.

\begin{claim}  
Let $P_B =  \{ p : p = left(\ell_{a(l)}) \mbox{  or } p=Right(\ell_{a(l)})  \forall \ell_{a(l)} \in B\}$. Every two  points in $P_B$ are at distance at least $\Lambda_{2k}$, $|P_B| = 1 + b $ and if
\begin{compactenum}
\item $i = j$ and $a(1)$ and $a(b)$ are even  then all the points in $P_B$  are in $H \cup \{1\}$. 
\item $i = j$ and $a(1)$ and $a(b)$ are odd  then all the points in $P_B$  are in $H \cup \{0\}$. 
\item $i \not= j$ then $j = i + 1$,  $a(1)$ is even, $a(b)$ is odd and  $P_B$ has $b$  high priority points.
\end{compactenum}
\end{claim}

\begin{proof}  (Claim 1.)
We use the main observation that the common point $p$ of every two consecutive lids $\ell_{a(l)}$ and $\ell_{a(l')}$ in a block is either high priority which implies that there is a high priority segment that starts on the left of $p$ and finishes on the right of $p$, or a low priority point which implies that $\ell_{a(l)}$ and $\ell_{a(l')}$ belong to two different components. In the former case, the high priority segment must be covered by a lid not in the block. Observe that since $B$ is a critical block, $Left(\ell_{a(1)})$  and $Right(\ell_{a(b)})$ are in $H^*$.

From the construction of $P_B$, it is not difficult to see that every two points in $P_B$ are at distance at least $\Lambda_{2k}$ and $|P_B| = 1 + b $. We show each case separately. 
\begin{compactenum}
\item Since all the lids in the block are maximal and in the same component, there must be a lid not in $B$ in-between every two consecutive lids in the block. That implies that $a(1)$ and $a(l)$ are even for all lids in $B$. Therefore, $Left(B)$ is high priority since
from Lemma~\ref{lem:f}, each component $W^\rightarrow_{i}$ has even number of lids. Further, $Right(B) \in H \cup \{1\}$ since $B$ is a critical block.

\item Similar as before, since all the lids in the block are in the same component, there must be a lid not in $B$ in-between every two consecutive lids in the block. That implies that $a(1)$ and $a(l)$ and are odd for all lids in $B$. Therefore, $Right(B)$ is high priority since from Lemma~\ref{lem:f}, each component $W^\rightarrow_{i}$ has even number of lids. Further, $Left(B) \in H \cup \{0\}$ since $B$ is a critical block.

 \begin{figure}[htbp]
\centering
\includegraphics[scale=0.71]{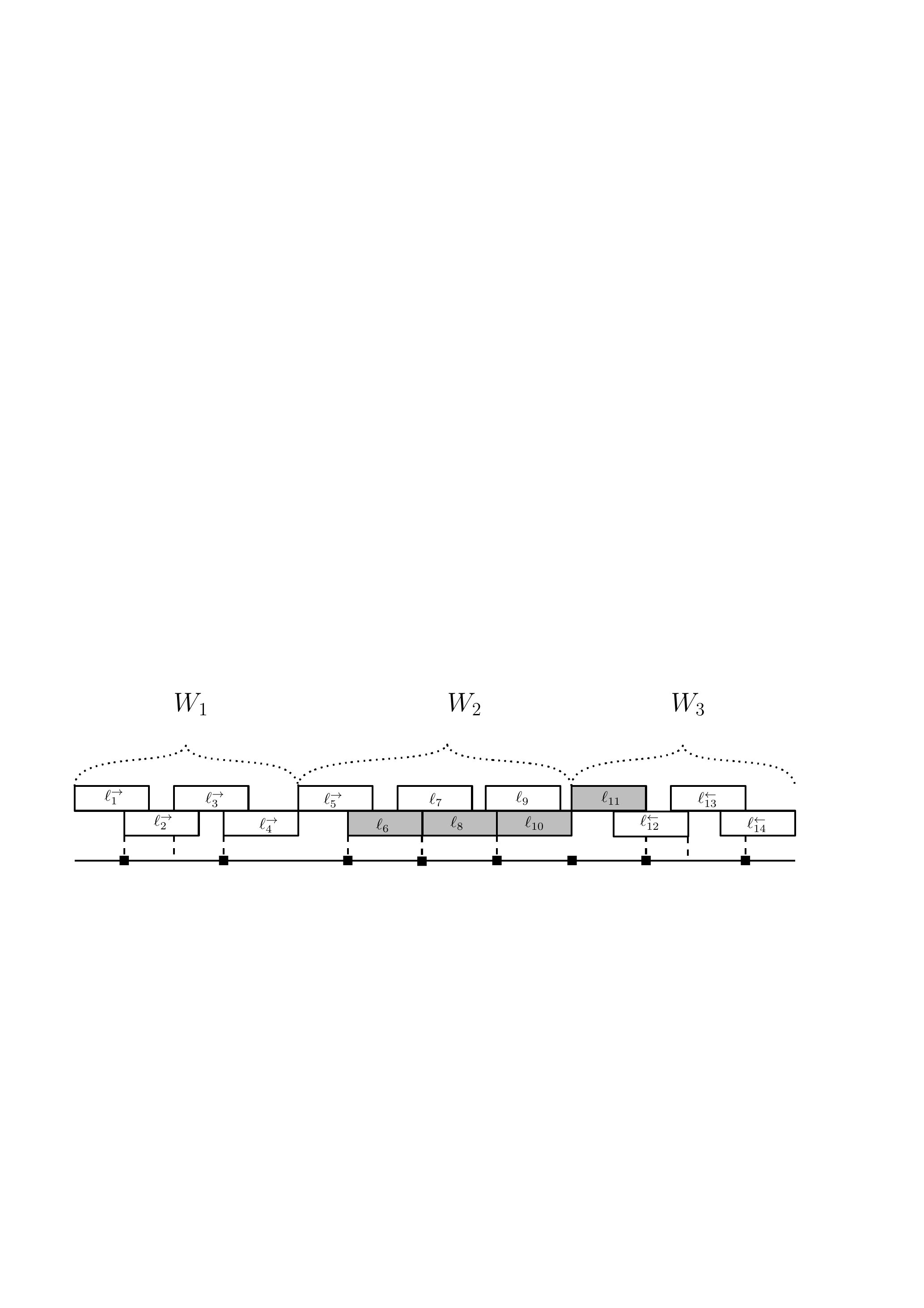}
\caption{Block $B=\{\ell_6, \ell_8, \ell_{10}, \ell_{11}\}$. (Vertical dashed lines represent high priority point and black square the chosen of points.) }
\label{fig:kpoints1}
\end{figure}

\item Since $i \not= j$, there must exist a lid $\ell_{a(l)}$ that is the right end lid of the component $W_i$. Therefore, $a(l)$ is even since from Lemma~\ref{lem:f}, $W^\rightarrow_{i}$ has even number of lids; Refer to Figure~\ref{fig:kpoints1}. In consequence $a(1)$ is also even since both are in $W^\rightarrow_{i}$. Suppose that $j > i +1$. Therefore, the last lid in the block in the component  $W^\rightarrow_{i+1}$ is odd since $a(l)+1$ is odd which contradicts that  every component is even. Therefore, $j= i+1$. It is not difficult to see that $Right(\ell_{a(l)}) = Left(\ell_{a(l)+1})$ is the unique low priority point.
\end{compactenum}
\end{proof}
 
\begin{claim} 
We can select a set $P_B^-$ that consists of $\lfloor \frac{a(1)-1}{2}\rfloor $ points at distance at least $\Lambda_{2k}$ from each other and from every point in $P_B$ such that  if $a(1)$ is even, $|P_B^-|$  points are  high priority, otherwise  $|P_B^-|-1$ points are high priority.
\end{claim}

\begin{proof} (Claim 2) We prove each case independently:
\begin{compactitem}
\item $a(1)$ is even:  Let $P_B^- = \{ p : Left(\elll_{2l}) \forall l \in [1, \frac{a(1)-2}{2}]\} $. Clearly, $|P_B^-| = \frac{a(1) -2}{2}= \lfloor \frac{a(1)-1}{2}\rfloor $. Since each component is even,   $Left(\elll_{2l})$ covers a high priority point for each lid $\elll_{2l}$. Therefore, $|P_B^-|$ points in $P_B^-$ are  high priority. Further, they are at distance at least $\Lambda_{2k}$ from each other and from every point in $P_B$; see Figure~\ref{fig:kpoints2}.
 \begin{figure}[htbp]
\centering
\includegraphics[scale=0.7]{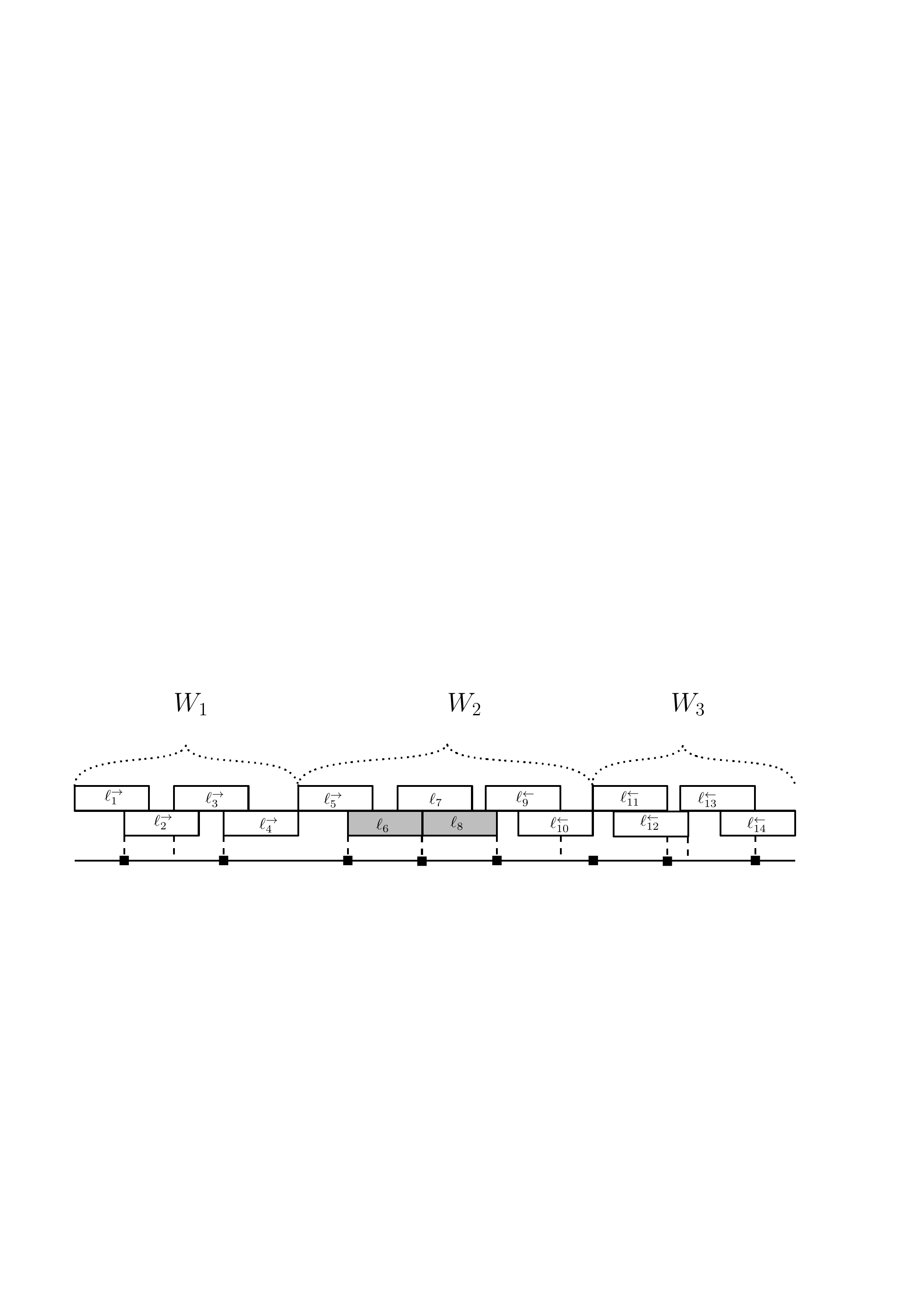}
\caption{Block $B=\{\ell_6, \ell_8\}$. (Vertical dashed lines represent high priority point and black square the chosen of points.) }
\label{fig:kpoints2}
\end{figure}

\item $a(1)$ is odd:   
Suppose that $\elll_c$ is the leftmost lid in $W_i$. Since each component has even number of lids,  $c$ is odd.  Let 
$$ P_B^- =  \left\{p : 
\begin{cases}
Left(\elll_{2l}) & \forall l \in [1, \frac{c-1}{2}]  \mbox{ \bf or } \\
 Left(\elll_{2l-1}) & \forall l \in [\frac{c+1}{2}, \frac{a(1)-1}{2}] 
\end{cases}
 \right\}$$
 Thus, the number of points is $(\frac{a(1)-1}{2} -\frac{c+1}{2} +1)  + (\frac{c-1}{2} - 1 +1)= \frac{a(1)-1}{2}$.  Observe that  only $Left(\elll_{c})$ is low priority since $c$ is the leftmost lid in $W_i$. Therefore,   $|P_B^-| -1$ points in $P_B^-$ are high priority;
 see Figure~\ref{fig:kpoints3}.
  \begin{figure}[htbp]
\centering
\includegraphics[scale=0.7]{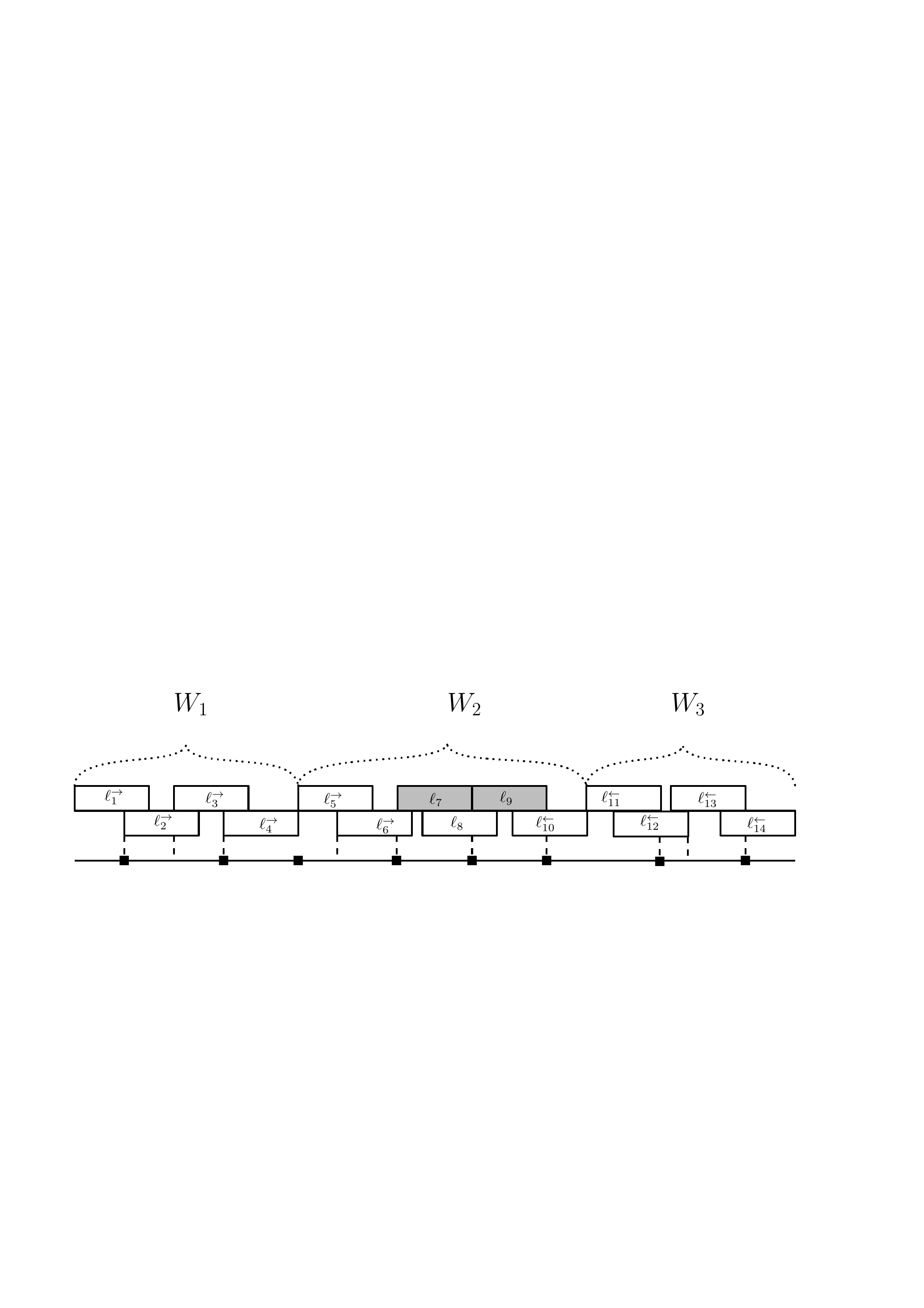}
\caption{Block $B=\{\ell_7, \ell_9\}$. (Vertical dashed lines represent high priority point and 
black square the chosen of points.) }
\label{fig:kpoints3}
\end{figure}
\end{compactitem}
\end{proof}

\begin{claim}  
We can select  a set $P_B^+$ that consists of  $\lfloor \frac{2k - a(b)}{2} \rfloor$ points at distance at least $\Lambda_{2k}$ from each other and from every point in $P_B$ such that if $a(b)$ is even, $|P_B^-|-1$  points are  high priority, otherwise  $|P_B^-|$ points are high priority.
\end{claim} 

\begin{proof} (Claim 3) We prove each case independently:
\begin{compactitem}
\item $a(b)$ is even:  
Suppose that $\ellrr{c}$ is the right most lid in $W_i$. Since each set of consecutive lids is even,  $c$ is even. Let 
$$P_B^+ =  \left\{p : 
\begin{cases}
Right(\ellrr{2l-1}) & \forall l \in [\frac{c+2}{2}, k]  \mbox{ \bf or }\\
Right(\ellrr{2l}) & \forall l \in [\frac{a(b)+2}{2}, \frac{c}{2}, ] 
\end{cases}
 \right\}$$
Thus,  $|P_B^+|= (k -  \frac{c+2}{2} +1)+  (\frac{c}{2} -\frac{a(b)+2}{2} + 1) = \frac{2k - a(b)}{2} $; see Figure~\ref{fig:kpoints2}.  Observe that  only $Right(\ellrr{c})$ is low priority since $c$ is the right most lid in $W_i$. Therefore,  $|P_B^+|-1$  points in $P_B^-$ are high priority. 
\item $a(b)$ is odd:  Let $P_B^+ = \{ p : Right(\ellrr{2l-1}) \forall l \in [\frac{a(b)+3}{2}, k]\}$. Clearly, $|P_B^-| = k - \frac{a(b)-3}{2} + 1 = \frac{2k - a(b)-1}{2} $; see  Figure~\ref{fig:kpoints3}. Since $Right(\ellrr{2l})$ covers a high priority point for each lid $\elll_{2l+1}$, 
$|P_B^+|$ points in $P_B^+$ are  high priority. Further, they are at distance at least $\Lambda_{2k}$ from each other and from every point in $P_B$.
\end{compactitem}
\end{proof}

Now we show that $P = P_B^- \cup P_B \cup P_B^+$ consists of $k+1$ points at distance at least $\Lambda_{2k}$ from each other where $k$ are high prioirity.  From Claim~1,~2~and~3,  $|P| = (\lfloor \frac{a(1)-1}{2}\rfloor)  + (\lceil \frac{a(b) - a(1)}{2} \rceil + 2) + (\lfloor \frac{2k - a(b)}{2} \rfloor)$. We consider three cases:

\begin{compactitem}
\item $a(1)$  and $a(b)$ are odd: 
$$
\begin{array}{rl}
|P| & = \frac{a(1)-1  + a(b) - a(1) + 4 + 2k - a(b)-1}{2} \\
    & = \frac{2k + 2}{2} \\
    &  =  k+1. 
\end{array}
$$
And the number of critical points are  $$|P_B^-|-1 + |P_B|  + |P_B^+| = k.$$

\item $a(1)$  and $a(b)$ are even: 
$$
\begin{array}{rl}
|P| &= \frac{a(1)-2  + a(b) - a(1) + 4 + 2k - a(b)}{2}  \\
&=  \frac{2k +2}{2}  \\
&= k+1.
\end{array}
$$ 
And the number of critical points are $$|P_B^-| + |P_B|  + |P_B^+|-1 = k.$$
\item $a(1)$  is even and $a(b)$ is odd: 
$$
\begin{array}{rl}
|P| &= \frac{a(1)-2  + a(b) - a(1) +1 + 4 + 2k - a(b)-1}{2}\\
& =  \frac{2k+2}{2} \\
& = k+1.
\end{array}
$$
And the number of critical points are $$|P_B^-| + |P_B| -1 + |P_B^+| = k.$$
\end{compactitem}

The theorem follows when  $\Lambda_{2k}<  \lambda_{k-1}$ since we can select $k+1$ points at distance a least $\Lambda_{2k}$ such that $k$ are high priority.

Now we consider $\Lambda_{2k} \geq  \lambda_{k-1}$. First, we show in the next claim that there exists a common critical block in the right and left shifted single $\lambda_{k-1}$-lid cover.

\begin{claim} 
$\mathcal{W}_{k-1}^\rightarrow$ and $\mathcal{W}_{k-1}^\leftarrow$ has a common maximal critical block.
\end{claim}

\begin{proof} (Claim 4)
Let $B^\rightarrow = \{\elll_{a(1)},  ..., \elll_{a(b)}\}$ be the critical block in $\mathcal{W}_{k-1}^\rightarrow$ and let $B^\leftarrow = \{\ellrr{a'(1)}, ..., \ellrr{a'(b')}\}$ be the critical block in $\mathcal{W}_{k-1}^\leftarrow$. Suppose that $B^\rightarrow \not=  B^\leftarrow$. Thefore, there must exist a lid $\ellrr{i}$ such that $Left(\ellrr{i}) \leq Right(\elll_{a(b)}) < Right(\ellrr{i})$, otherwise $B^\rightarrow =  B^\leftarrow$ since $Right(\elll_{a(b)}) = Right(\ellrr{a'(b')})$. Observe that $Left(\elll_{j}) \geq Left(\ellrr{j})$. Therefore, $\lambda_{k-1}$ is not minimum since $$\{\elll_{1}, ...,\elll_{a(1)}, ..., \elll_{a(b-1)}, \ellrr{i}, ..., \ellrr{k-1}\}$$ has not critical block. 
\end{proof}

We prove by contradiction. Suppose that there are not $k+1$ points at distance at least $\lambda_{k-1}$ from each other where $k$ points are high priority. Let $\{\elll_{a(1)}, ..., \elll_{a(b)}\} =\{\ellrr{a(1)},  ..., \ellrr{a(b)}\}$ be the critical block. First, we consider the set $P$ of $k$ high priority points as follows:
$$
 p_j = \begin{cases}
Left(\elll_{j})  & \text{if } j \in [1, a(b)]\\
Right(\ellrr{j})  & \text{if } j \in [a(b), k-1]\\
 \end{cases}
 $$
 
Observe that $p_{j-1} -p_{j} \geq \lambda_{k-1}$. Therefore,  if  either $p_{j-1} -p_{j} \geq 2\lambda_{k-1}$ or $p_{1} \geq \lambda_{k-1}$ or $p_{k} \leq 1 -\lambda_{k-1}$, then we can select a point that is at distance at least $\lambda_{k-1}$ from every point in $P$. Let us assume that there does not exist any point at distance at least $\lambda_{k-1}$ from every point in $P$.

Let $\ell_1 = [0, \lambda_{k-1}]$ and $\ell_{j} = [L(\elll_{j-1} \setminus \ell_{j-1}), L(\elll_{j-1} \setminus \ell_{j-1}) + \lambda_{k-1}]$ for $1 < j \leq  a(1)$; see Figure~\ref{fig:kcontradition1}. Suppose that there exists $j$ such that $Right(\ell_{j})  \leq  Left(\elll_{j})$. Let $j < a(1)$ be the largest index such that $Left(\ell_j)$ is low priority point. Consider the set of  $k+1$  points as follows:
$$
 p'_j = \begin{cases}
Left(\elll_{l})  & \text{if } l \in [1, j-1]\\
Left(\ell_{l})  & \text{if } l \in [j, a(b)]\\
Right(\ellrr{l})  & \text{if } l \in [a(b), k-1]\\
 \end{cases}
 $$
 Observe that $p'_{j-1} -p'_{j} \geq \lambda_{k-1}$  and $k$ are high priority (only $Left(\ell_j)$) is a low priority points). Therefore, we can assume that $Right(\ell_{j})  >  Left(\elll_{j})$ for all $j$.

Let $\ell_k = [1-\lambda_{k-1}, 1]$ and $\ell_{j} = [R(\ellrr{j} \setminus \ell_{j+1}), R(\elll{j} \setminus \ell_{j+1}) + \lambda_{k-1}]$ for $a(b) \leq  j <  k$. Similarly we can show that  $Left(\ell_j)  > Right(\ellrr{j-1})$.

Thus, we can assume that  $Right(\ell_{j})  <  Left(\elll_{j})$ and  $Left(\ell_j) > Right(\ellrr{j-1})$. Let, $$\mathcal{S} = \{\{\elll_{l} | l \leq a(b)\} \cup  \{\ellrr{l} | l \geq a(1)\} \cup \{\ell_{l} | l \leq a(1)\} \cup \{\ell_{l} | l > a(b)\}  \}.$$ Observe that $S$ is a valid strong double cover with $a(b)  + (k-1 + a(1) +1) + a(1) + (k-a(b)) = 2k$ lids. Furthermore, $B$ is not longer a critical block since since $R(\elll_{l}) <  Right(\elll_{l})$ and $R(\ell_{l}) <  Right(\ell_{l})$ for $l$ less or equal to $a(1)$ as well as $L(\ellrr{l}) > Left(\ellrr{l})$  and $L(\ell_{l}) > Left(\ell_{l})$ for $l \geq a(b)$, which contradicts the fact that $\Lambda_{2k} \geq \lambda_{k-1}$. The theorem follows. 

\begin{figure}
\centering
\includegraphics[scale=0.8]{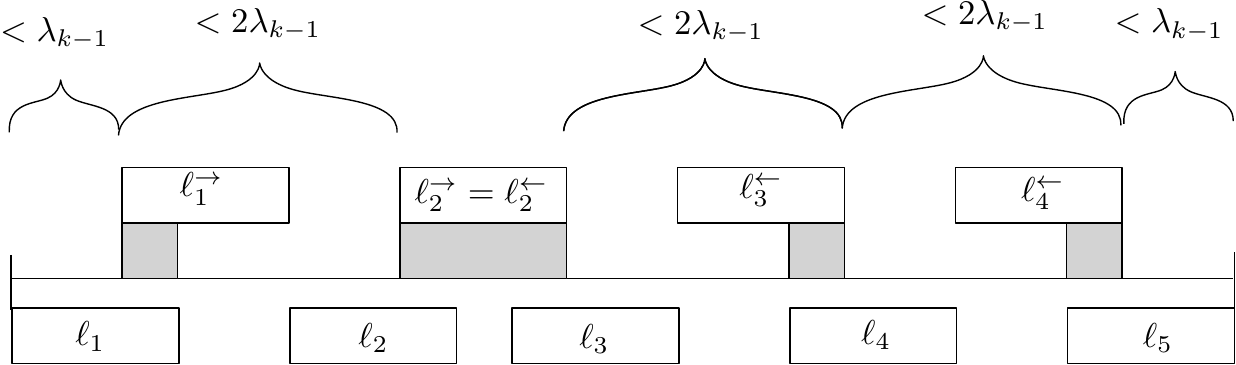}
\caption{Double strong $\lambda_{k-1}$-lid cover with $2k$ lids obtained when there is not a set of high priority points $P$ such that $p_j - p_{j-1}  < 2\lambda_{k-1}$,  $p_1  < \lambda_{k-1}$ and  $p_{k}  > 1 - \lambda_{k-1}$.}
\label{fig:kcontradition1}
\end{figure}
\end{proof}

\section{Upper bound}~\label{sec:upper}

In this section, we provide upper bounds for the idle time. Given $k$ robots, our approach considers a weak cover of $H$ with $k-1$ lids of length  $\lambda_{k-1}$ as well as a strong double cover of $H$ with $2k$ lids of length $\Lambda_{2k}$. First, we show a simple strategy that attains an idle time of $2\lambda_{k-1}$ where we split the unit line into $k-1$ segments of equal length. Then, each robot patrols a segment and one robot moves back and forth in the unit segment. The $2\lambda_{k-1}$ idle time is attained when all robots start synchronously on the leftmost point.  Second, we provide a strategy that attains a $1.5$ approximation of the optimal idle time. We split the unit line into $k$ segments of equal length. Then each robot patrols a segment. The $3\Lambda_{k}$ idle time is attained when all robots start synchronously on the leftmost point. Finally, we show a more complex strategy that attains optimal idle time, i.e.,  $2\Lambda_{2k}$ where robots are assigned to segments of different length.

\begin{strategy} 
Let $\mathcal{W}_{k-1} = \{\ell_1, \ell_2, ..., \ell_{k-1}\}$. Let robot $r_i$ move back and forth at maximum speed in lid $\ell_i$ and robot $r_k$ move back and forth on the unique line segment. Refer to Figure~\ref{fig:upper1}.
\end{strategy}

\begin{figure}[htbp]
\centering
\includegraphics[scale=0.7]{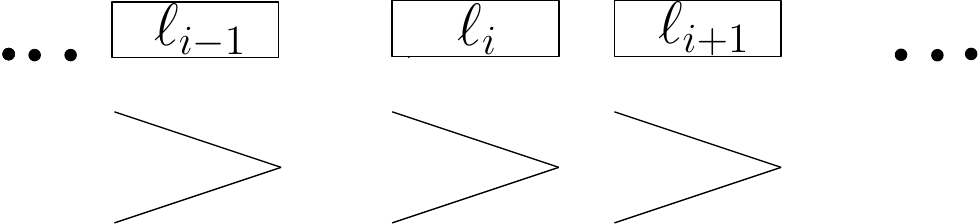}
\caption{Strategy 1.}
\label{fig:upper1}
\end{figure}

\begin{theorem}\label{thm:uppweak}
Strategy~1 attains idle time  $2\lambda_{k-1}$.
\end{theorem}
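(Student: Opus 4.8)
The plan is to analyze the idle time of each robot's trajectory separately and take the maximum. There are two kinds of robots: the $k-1$ robots $r_1,\dots,r_{k-1}$ that patrol the lids $\ell_1,\dots,\ell_{k-1}$ of the single cover $\mathcal{W}_{k-1}$, and the robot $r_k$ that sweeps the whole unit segment $[0,1]$. First I would observe that a robot moving back and forth at maximum speed $1$ on a segment of length $\ell$ visits each interior point exactly once per half-period and hence revisits any point within time at most $2\ell$; the worst case (the supremum in the definition of idle time) is attained at a point arbitrarily close to an endpoint, where the robot has just turned around and must traverse the full length and come back. Applying this to $r_i$ on the lid $\ell_i$ of length $\lambda_{k-1}$ gives that every point of $\ell_i\cap H$ is revisited within time $2\lambda_{k-1}$, and applying it to $r_k$ on $[0,1]$ gives that every point of $[0,1]$ — in particular every point of $H$ — is revisited within time $2\cdot 1 = 2 \ge 2\lambda_{k-1}$ is not quite what we want, so the role of $r_k$ here is only to guarantee the low-priority points are visited infinitely often, not to improve the idle time bound.

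The key step is then to combine these: since $\{\ell_1,\dots,\ell_{k-1}\}$ is a single $\lambda_{k-1}$-lid cover of $H$, every high priority point $p\in H$ lies in at least one lid $\ell_i$, so $p$ is visited by $r_i$ within every window of length $2\lambda_{k-1}$, giving $I_p(A)\le 2\lambda_{k-1}$ for all $p\in H$, hence $I_A \le 2\lambda_{k-1}$. Conversely, to see the idle time is not smaller than $2\lambda_{k-1}$ (so that the value is exactly $2\lambda_{k-1}$), I would take a point $p$ that is an endpoint of a high priority segment realizing the tightness of the cover — by optimality of $\lambda_{k-1}$ and Lemma~\ref{lem:criticalblock}, in $\Wl_{k-1}$ there is a critical block whose endpoints are endpoints of high priority segments, and such an endpoint is covered only by lids that have it essentially at their boundary, so the idle time at points approaching it from inside the block is $2\lambda_{k-1}$ in the limit. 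One has to be slightly careful: $r_k$ also passes through $p$, but $r_k$'s contribution only helps if its timing interleaves favorably, and in the worst-case synchronization (all robots starting at the left endpoint simultaneously, as the statement specifies) the visits of $r_k$ and $r_i$ to $p$ coincide, so no improvement is gained.

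The main obstacle I anticipate is the lower-bound direction — arguing that the strategy genuinely achieves exactly $2\lambda_{k-1}$ and not less — because it requires pinpointing a high priority point where both the assigned lid robot and the global robot $r_k$ fail to do better than $2\lambda_{k-1}$ under the prescribed synchronized start. This needs the critical block structure from Lemma~\ref{lem:criticalblock}: at a left endpoint $Left(B)$ of a critical block, any lid containing a neighborhood to the right of it must have $Left(B)$ as (essentially) its left endpoint, so a point $Left(B)+\epsilon$ is seen by that lid's robot only once per period of length $2\lambda_{k-1}$, and taking $\epsilon\to 0$ gives idle time $2\lambda_{k-1}$. If the theorem as stated only claims the upper bound "attains idle time $2\lambda_{k-1}$" in the sense of $\le$, then the proof collapses to the first two paragraphs and the critical-block argument is unnecessary; I would present the clean upper-bound argument first and then remark on tightness. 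Either way the computation is routine once the back-and-forth revisit bound $2\ell$ is established, so I would not belabor it.
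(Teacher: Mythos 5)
Your upper-bound argument is exactly the paper's proof: each high-priority point lies in some lid $\ell_i$ of length $\lambda_{k-1}$, the robot sweeping that lid back and forth revisits it within $2\lambda_{k-1}$, and $r_k$'s only job is to visit the low-priority points infinitely often. The theorem is used in the paper only as an upper bound (e.g.\ in Corollary~\ref{cor:weak}), so your tightness digression is unnecessary — and, as you yourself suspect, not fully rigorous since Strategy~1 does not pin down the robots' phases — but you correctly identified that the proof collapses to the first two paragraphs in that case.
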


\begin{proof} 
Observe that robot $r_k$ visits all $C$ infinitely often. Further, the idle time of the high priority point in each lid is bounded by two times the length of the lid, i.e., $2\lambda_{k-1}$. 
\end{proof}

In the second strategy we consider the minimum strong double lid cover $\mathcal{S}_{2k} = \ell_1, \ell_2, ..., \ell_{2k}$. We define a cap $c_i$ as the union of two consecutive lids $\ell_{2i -1}, \ell_{2i}$, i.e., $c_{i} =  \{\ell_{2i -1}, \ell_{2i} \}$ for all $i \in [1, k]$.

\begin{strategy} 
For each $i$ let $c_i =  \{\ell_{2i -1}, \ell_{2i} \}$ be the $i$-th cap, and let $c^* = \max_i(\len(c_i))$, where $\len(c_i)$ is the length of the $i$-th cap.  For every cap $c_i$, let $x_i$ be the center of $c_i$. Place robot $r_i$ at $x_i + c^*/2$ and let it move back and forth at maximum speed in the segment $[x_i - c^*/2, x_i + c^*/2]$. Refer to Figure~\ref{fig:upper2}.
\end{strategy} 

\begin{figure}[htbp]
\centering
\includegraphics[scale=0.9]{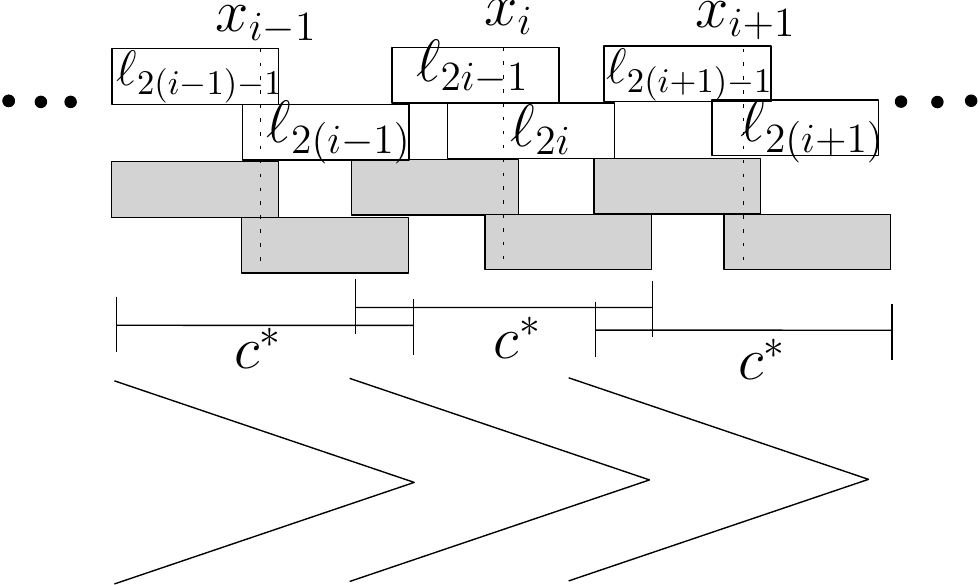}
\caption{Strategy 2.}
\label{fig:upper2}
\end{figure}

\begin{theorem}\label{thm:aprx}
Strategy~2 attains idle time of at most $3\Lambda_{2k}$. 
\end{theorem}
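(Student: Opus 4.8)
\textbf{Proof proposal for Theorem~\ref{thm:aprx}.}

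The plan is to bound two quantities: the length $c^*$ of the largest cap, and the idle time of Strategy~2 in terms of $c^*$. Since each robot $r_i$ moves back and forth over a segment of length exactly $c^*$, the idle time contributed at any point inside that segment is at most $2c^*$; we must also check that the high-priority points of cap $c_i$ are genuinely inside $r_i$'s segment $[x_i-c^*/2, x_i+c^*/2]$ (which holds because this segment contains $c_i$, as $c_i$ is centered at $x_i$ and has length $\len(c_i)\le c^*$), and that every low-priority point is visited infinitely often (which holds because consecutive segments overlap, since $\mathcal{S}_{2k}$ fully covers $C$ and the enlarged segments only make coverage easier). So the whole theorem reduces to showing $c^* \le \tfrac{3}{2}\Lambda_{2k}$, giving idle time $2c^*\le 3\Lambda_{2k}$.

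To bound $c^*$: the cap $c_i = \ell_{2i-1}\cup\ell_{2i}$ is the union of two lids of length $\Lambda_{2k}$ ordered left to right, so $\len(c_i) = Right(\ell_{2i})-Left(\ell_{2i-1}) = 2\Lambda_{2k} - (\text{overlap of }\ell_{2i-1}\text{ and }\ell_{2i})$. Thus $c^*\le\tfrac32\Lambda_{2k}$ is equivalent to: for every $i$, the two lids $\ell_{2i-1},\ell_{2i}$ within a cap overlap by at least $\tfrac12\Lambda_{2k}$. Here I would invoke Lemma~\ref{thm:rightshiftlaxcover}: working with a left-shifted strong double cover $\ESEll{2k}$, the lids are placed so that (by the construction in that lemma) $\alpha_{i} = L(\ell_{i-1}\setminus\ell_{i-2})$, and more to the point every high-priority point is doubly covered while $C$ is fully covered. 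If some cap had overlap less than $\tfrac12\Lambda_{2k}$, the two lids in it would jointly span more than $\tfrac32\Lambda_{2k}$; I would argue that the ``slack'' can be pushed around to produce a strong double cover with strictly smaller lid length, contradicting minimality of $\Lambda_{2k}$. Concretely, if $\ell_{2i-1}$ and $\ell_{2i}$ overlap little, then either the region they are responsible for double-covering is short (so shorter lids suffice there) or the single-cover burden elsewhere is light; in either case a uniform rescaling of all lids by a factor slightly less than $1$, re-anchored, still yields a valid strong double $l$-lid cover with $l<\Lambda_{2k}$.

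The main obstacle I anticipate is precisely this last step: pairing up lids into caps $(\ell_{2i-1},\ell_{2i})$ is an arbitrary parity choice, and a priori nothing forces the \emph{within-cap} overlap (as opposed to the \emph{between-cap} overlap $\ell_{2i}\cap\ell_{2i+1}$) to be large — a strong double cover could in principle have all its overlap ``between'' caps and none ``within.'' So the honest version of the argument must either (a) show that one of the two parity pairings of the lids into caps always achieves within-cap overlap $\ge\tfrac12\Lambda_{2k}$ on average and then handle the worst cap, or (b) more likely, re-examine the construction: since the robots' segments are all taken to have the common length $c^*$ and are re-centered, what actually needs bounding is $\max_i \len(c_i)$ for the \emph{specific} cover used, and one should choose that cover (e.g.\ the left-shifted $\ESEll{2k}$, or a symmetrized variant) so that no single pair of paired lids is too spread out. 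I would therefore spend most of the effort pinning down which canonical strong double cover to feed into Strategy~2 and proving the $\tfrac12\Lambda_{2k}$ within-pair overlap bound for it; once that is in hand, the idle-time bound $2c^*\le 3\Lambda_{2k}$ and the infinitely-often visiting of low-priority points are immediate.
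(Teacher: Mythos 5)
There is a genuine gap, and it sits exactly where you suspected: the inequality $c^* \le \tfrac{3}{2}\Lambda_{2k}$, to which your whole argument reduces, is false. The paper's own two-robot example (Section~2.1, Observation~\ref{obse:1}) is a counterexample: for a single high-priority segment $[a,b]$ with, say, $a=0.3$ and $b=0.7$, the optimal strong double cover with $2k=4$ lids has $\Lambda_{4}=\alpha=(1-a)/2=0.35$, and its first cap is $\ell_1\cup\ell_2=[0,\alpha]\cup[a,\beta]=[0,(1+a)/2]=[0,0.65]$, of length $0.65>\tfrac32\Lambda_4=0.525$. In general the within-cap overlap can be arbitrarily small (the two lids of a cap may merely touch), so the only available bound is $c^*\le 2\Lambda_{2k}$, and your blanket estimate ``idle time at most $2c^*$ for every point of the robot's segment'' then yields only $4\Lambda_{2k}$. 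No choice of canonical cover rescues the $\tfrac32\Lambda_{2k}$ claim: the left-shifted cover constructed in Lemma~\ref{thm:rightshiftlaxcover} produces the very same cap $[0,(1+a)/2]$ in the example above.

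The missing idea is a case split on \emph{which} high-priority points must rely on a single robot. The paper observes that the intracap region $\ell_{2i-1}\cap\ell_{2i}$ lies within distance $\Lambda_{2k}/2$ of the cap's center $x_i$ (since $Left(\ell_{2i})=Right(\ell_{2i})-\Lambda_{2k}\ge x_i-\Lambda_{2k}/2$, and symmetrically for $Right(\ell_{2i-1})$, using $\Lambda_{2k}\le\len(c_i)\le 2\Lambda_{2k}$). Hence such a point is at distance at most $c^*/2+\Lambda_{2k}/2\le\tfrac32\Lambda_{2k}$ from either endpoint of $r_i$'s segment, giving idle time at most $3\Lambda_{2k}$ even though the segment itself has length up to $2\Lambda_{2k}$ and its endpoints have idle time up to $4\Lambda_{2k}$. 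All remaining high-priority points lie in intersections $\ell_{2i}\cap\ell_{2i+1}$ of lids from \emph{adjacent} caps; these are visited by both $r_i$ and $r_{i+1}$, whose synchronous, equal-length trajectories guarantee that each reaches such a point within $c^*\le 2\Lambda_{2k}$ of the other's visit. Your reduction discards precisely this distinction between intracap and intercap points, which is why it cannot get below $4\Lambda_{2k}$ without the false overlap claim.
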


\begin{proof} 
It is easy to see that the strategy 2 covers the unit interval since the strong double cover already covers it. Let $c_j =  \{\ell_{2j -1}, \ell_{2j} \}$ be any cap.  First consider the high priority points in the intracap, i.e., $H \cap \ell_{2j -1} \cap \ell_{2j}$. Observe that  $r_j$ visits $Left( \ell_{2j})$ after visiting $x_j + c^*/2$.Therefore, the maximum idle time of $Left( \ell_{2j})$ is: $$2(x_j + c^*/2 - Left(\ell_{2j})) \leq 2(x_j + c^*/2 - (x_j - \Lambda_{2k} / 2)) \leq 3\Lambda_{2k} $$ since $c^* \leq  2\Lambda_{2k}$. Similarly, $r_j$ visits $Right( \ell_{2j-1})$ after visiting  $x_j - c*/2$.Therefore, the maximum idle time of  $Right( \ell_{2j-1})$ is $$2(Right(\ell_{2j-1}) - (x_j -  c^*/2))  \leq  2(x_j + \Lambda_{2k} /2-  (x_j - c^*/2))) \leq 3\Lambda_{2k} .$$

Now consider the high priority points in the intercaps, in other words the intersection of two lids of different caps. Let $c_i = \{\ell_{2i -1}, \ell_{2i} \}$ and $c_{i+1} = \{\ell_{2i + 1}, \ell_{2(i+1)} \}$ be two consecutive caps. Consider any point $p$ in $H \cap c_i \cap c_{i+1} =  H \cap \ell_{2i} \cap \ell_{2i+1}$, and assume $p\geq x_j$ (otherwise $p$ would also in the intracap region $\ell_{2i-1}\cap \ell_{2i}$, and the previous case applies). Both $r_i$ and $r_{i+1}$ will visit $p$, as their paths span all $c_i$ and $c_{i+1}$ (and possibly more). Observe that after $r_i$ visits $p$, it will take at most $c^*$ time for $r_{i+1}$ to reach $p$. Similarly, after $r_{i+1}$ visits $p$, it will take at most $c^*$ time for $r_{i}$ to reach $p$. The theorem follows since $c^*\le 2\Lambda_{2k}$. 
\end{proof}

From Theorems \ref{thm:uppweak} and \ref{thm:aprx} we obtain a 1.5 approximation  to the optimal idle time.  

\begin{corollary}\label{cor:weak}
$I^* \leq \min(2\lambda_{k-1},3\Lambda_{2k}) \leq  \frac{3\min(\lambda_{k-1},\Lambda_{2k})}{2} $.
\end{corollary}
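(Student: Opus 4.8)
The plan is to derive Corollary~\ref{cor:weak} directly from the two upper-bound theorems just proved, together with a short elementary inequality comparing $\min(2\lambda_{k-1},3\Lambda_{2k})$ to $\tfrac{3}{2}\min(\lambda_{k-1},\Lambda_{2k})$. Since Strategy~1 is always available and attains idle time $2\lambda_{k-1}$ (Theorem~\ref{thm:uppweak}), and Strategy~2 is always available and attains idle time at most $3\Lambda_{2k}$ (Theorem~\ref{thm:aprx}), the optimal idle time $I^*$ is at most the better of the two, i.e. $I^*\le\min(2\lambda_{k-1},3\Lambda_{2k})$. This gives the first inequality for free.

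For the second inequality, I would split into the two cases determined by which of $\lambda_{k-1}$, $\Lambda_{2k}$ is smaller. If $\lambda_{k-1}\le\Lambda_{2k}$, then $\min(\lambda_{k-1},\Lambda_{2k})=\lambda_{k-1}$ and $2\lambda_{k-1}=\tfrac{3}{2}\cdot\tfrac{4}{3}\lambda_{k-1}$; since $\tfrac{4}{3}\le 2$ this is strictly worse than $\tfrac{3}{2}\lambda_{k-1}$ — wait, that goes the wrong way, so one must instead observe that $\min(2\lambda_{k-1},3\Lambda_{2k})\le 2\lambda_{k-1}$ is not tight enough. The correct argument is: in this case $\min(2\lambda_{k-1},3\Lambda_{2k})\le 3\Lambda_{2k}$ is also useless; rather, use $\min(2\lambda_{k-1},3\Lambda_{2k})\le 2\lambda_{k-1}=2\min(\lambda_{k-1},\Lambda_{2k})$, which exceeds $\tfrac{3}{2}\min$. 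So the honest route is the other case split: if $\Lambda_{2k}\le\lambda_{k-1}$, then $\min(2\lambda_{k-1},3\Lambda_{2k})\le 3\Lambda_{2k}=3\min(\lambda_{k-1},\Lambda_{2k})$, still too big. The resolution — and the real content — is that whenever $\lambda_{k-1}\ge\tfrac{3}{2}\Lambda_{2k}$ we bound by $3\Lambda_{2k}=\tfrac{3}{2}\cdot 2\Lambda_{2k}\le\tfrac{3}{2}\lambda_{k-1}$... this still needs care. The clean statement to prove is: $\min(2\lambda_{k-1},3\Lambda_{2k})\le\tfrac{3}{2}\min(\lambda_{k-1},\Lambda_{2k})$ holds because the worst ratio of $\min(2a,3b)$ to $\min(a,b)$ over $a,b>0$ is exactly $\tfrac{3}{2}$, attained at $2a=3b$, i.e. $b=\tfrac{2}{3}a\le a$ so $\min(a,b)=b$ and $\min(2a,3b)=3b=\tfrac{3}{2}\cdot 2b$; and one checks $2b\le a$ fails there, so in fact $\min(a,b)=b=\tfrac{2}{3}a$ and $\tfrac{3}{2}\min(a,b)=a<\tfrac{3}{2}a$...

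Because this elementary optimization is where all the subtlety hides, I would carry it out carefully as a single lemma-free paragraph: parametrize $t=\lambda_{k-1}/\Lambda_{2k}\in(0,\infty)$, write $\min(2\lambda_{k-1},3\Lambda_{2k})/\min(\lambda_{k-1},\Lambda_{2k})$ as a function of $t$, and check it is maximized at $t=3/2$ with value $3/2$ (for $t\le 1$ the ratio is $\min(2t,3)/t=2$ if... no, $=\min(2,3/t)$; for $t\ge 1$ it is $\min(2t,3)$). I expect the main obstacle is purely bookkeeping in this case analysis — making sure the constant $3/2$ is genuinely an upper bound on the ratio rather than a value attained only in a limit, and presenting the two regimes $\lambda_{k-1}\le\Lambda_{2k}$ and $\lambda_{k-1}>\Lambda_{2k}$ cleanly. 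Given the difficulty above, I would in fact just prove the weaker but correct chain $I^*\le\min(2\lambda_{k-1},3\Lambda_{2k})$ and then note $\min(2\lambda_{k-1},3\Lambda_{2k})\le\tfrac{3}{2}\max(\lambda_{k-1},\Lambda_{2k})$ is false too, so the intended reading must be $\min(2\lambda_{k-1},3\Lambda_{2k})\le 2\min(\lambda_{k-1},\Lambda_{2k})$ combined with the fact that when $\Lambda_{2k}$ is the minimum one uses the $3\Lambda_{2k}$ bound — overall a one-line case split once the correct target inequality is pinned down, which is the only genuinely careful step.
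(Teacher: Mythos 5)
Your first inequality, $I^* \le \min(2\lambda_{k-1},3\Lambda_{2k})$, is exactly what the paper does: its proof of the corollary consists solely of citing Theorems~\ref{thm:uppweak} and~\ref{thm:aprx}, and since both strategies are always available, taking the better of the two is immediate. On that half there is nothing to add.

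The second inequality is where your proposal never lands, and the reason is that the inequality is false as literally written: with $\lambda_{k-1}=\Lambda_{2k}=c$ the left-hand side is $2c$ while the right-hand side is $\tfrac{3}{2}c$. Writing $a=\lambda_{k-1}$ and $b=\Lambda_{2k}$, the ratio $\min(2a,3b)/\min(a,b)$ equals $2$ when $a\le b$ and equals $\min(2a/b,3)$ when $a>b$, so it always lies in $[2,3]$; in particular your intermediate claim that the worst ratio is ``exactly $\tfrac{3}{2}$, attained at $2a=3b$'' is wrong, and the alternative reading $\min(2\lambda_{k-1},3\Lambda_{2k})\le 2\min(\lambda_{k-1},\Lambda_{2k})$ that you float at the end fails whenever $\lambda_{k-1}>\Lambda_{2k}$. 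The statement that is actually true, and that matches the abstract's claimed $1.5$-approximation, is $\min(2\lambda_{k-1},3\Lambda_{2k})\le 3\min(\lambda_{k-1},\Lambda_{2k})=\tfrac{3}{2}\cdot 2\min(\lambda_{k-1},\Lambda_{2k})$, proved by the one-line case split (bound by $2\lambda_{k-1}\le 3\lambda_{k-1}$ when $\lambda_{k-1}$ is the smaller quantity, and by $3\Lambda_{2k}$ otherwise); combined with Theorem~\ref{thm:opt1}'s lower bound $I^*\ge 2\min(\lambda_{k-1},\Lambda_{2k})$ this yields the intended factor-$\tfrac{3}{2}$ guarantee. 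So the corollary's right-hand side is missing a factor of $2$, and your write-up, while correctly sensing that something is off, neither identifies the counterexample nor commits to and proves the corrected inequality; as submitted, the second half is a gap.
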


\begin{proof}
The corollary follows from Theorems~\ref{thm:aprx}~and~\ref{thm:uppweak} 
\end{proof}

We improve the result of Corollary~\ref{cor:weak} with a more complex strategy where we also consider the minimum strong double $\Lambda_{2k}$-lid cover $\mathcal{S}_{2k} = \ell_1, \ell_2, ..., \ell_{2k}$. In the third strategy, robots do not move synchronously. Instead, robots cover all the odd lids, and eventually, they switch to the even lids one by one and return to the odd lids eventually. That process is executed infinitely often.

We say that a lid with endpoints $p$ and $q$ (with either $p<q$ or $q<p$) is \emph{periodically covered} by a robot $r$ if $r$ is returning to $p$ having just traveled from $p$ to $q$.

\begin{strategy}  
Each robot first covers (travels back and forth) the odd lids until (at time $t_0$) for each $i$ the lid $\ell_{2i-1}$ is periodically covered by $r_i$. Then, the first time after $t_0$ that $r_1$ turns right, $r_1$ continues right to cover $\ell_2$. Once $\ell_2$ is periodically covered and $r_2$ turns right, $r_2$ will continue to cover $\ell_4$, and so on until all the even lids are periodically covered. Then the process is reversed, with $r_k$ switching to covering the odd lid $\ell_{2k-1}$, and so on until finally $r_1$ switches to cover $\ell_1$, and we repeat. Refer to Figure~\ref{fig:upper3}.
\end{strategy}

\begin{figure}[htbp]
\centering
\includegraphics[scale=0.85]{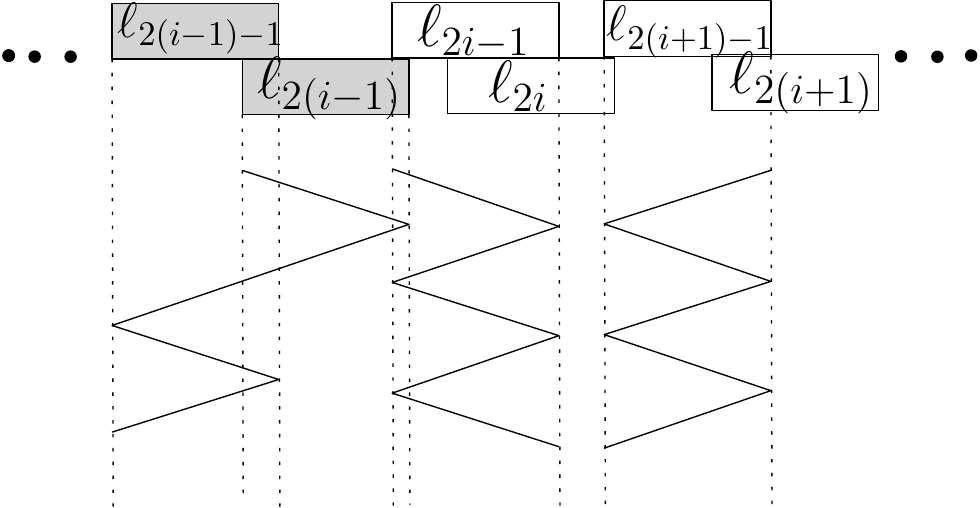}
\caption{Strategy 3.}
\label{fig:upper3}
\end{figure}

\begin{theorem}\label{thm:upper}
Strategy~3 attains idle time  $2\Lambda_{2k}$.
\end{theorem}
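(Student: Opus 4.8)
\textbf{Proof proposal for Theorem~\ref{thm:upper}.}
The plan is to analyze Strategy~3 by tracking, for each high priority point $p$, the two lids of $\mathcal{S}_{2k}$ that cover it (which exist by the definition of a strong double cover) and showing that between any two consecutive visits to $p$ at most $2\Lambda_{2k}$ time elapses. First I would set up the phase structure of the strategy precisely: there is an ``odd phase'' in which every $r_i$ periodically covers $\ell_{2i-1}$, a ``switch-forward'' transition $r_1\to r_2\to\cdots\to r_k$ in which each robot one-by-one extends its patrol rightward onto the next even lid, an ``even phase'' in which every $r_i$ periodically covers $\ell_{2i}$, and a symmetric ``switch-backward'' transition. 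The key quantitative observation is that whenever a robot is periodically covering a lid of length $\Lambda_{2k}$ and is simultaneously servicing a hand-off at an overlap with a neighbouring lid, the round-trip it makes has length at most $2\Lambda_{2k}$, so the revisit time at any point it is currently responsible for is at most $2\Lambda_{2k}$.

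The main case analysis splits high priority points into \emph{intracap} points (covered by an overlapping pair $\ell_{2i-1},\ell_{2i}$, i.e.\ points in $\ell_{2i-1}\cap\ell_{2i}\cap H$) and \emph{intercap} points (covered by $\ell_{2i}\cap\ell_{2i+1}\cap H$). For an intracap point $p\in \ell_{2i-1}\cap \ell_{2i}$: during the odd phase $r_i$ covers $\ell_{2i-1}$ and revisits $p$ every $\le 2\Lambda_{2k}$; during the even phase $r_i$ covers $\ell_{2i}$ and again revisits $p$ every $\le 2\Lambda_{2k}$; and the transition is arranged (via the ``turns right''/``turns left'' triggers) precisely so that $r_i$ sweeps continuously across the union $\ell_{2i-1}\cup\ell_{2i}$ when it switches, so $p$ is not abandoned for longer than one such sweep, which has length $\le 2\Lambda_{2k}$ because consecutive lids in a block share an endpoint and the relevant round trip is bounded by twice a single lid — here I would invoke Lemma~\ref{thm:rightshiftlaxcover}(3) to know the overlap structure is clean. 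For an intercap point $p\in \ell_{2i}\cap \ell_{2i+1}$: $p$ is serviced by $r_{i+1}$ while the even lids are being covered and by $r_i$ only transiently; the crucial point is that $r_i$ reaches $p$ during the switch-forward transition (when it extends onto $\ell_{2i}$) and $r_{i+1}$ reaches $p$ during the switch-backward transition and during the even phase, and the staggering of the one-by-one switches guarantees the gap between a visit of $r_i$ and the next visit of $r_{i+1}$ (or vice versa) is at most $2\Lambda_{2k}$. I would make this rigorous by giving, for each $p$, an explicit periodic schedule of which robot visits it and bounding each inter-visit gap by the length of one $\le 2\Lambda_{2k}$ round trip.

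The step I expect to be the main obstacle is the bookkeeping at the transitions: one must verify that the triggering conditions (``the first time after $t_0$ that $r_1$ turns right'', ``once $\ell_2$ is periodically covered and $r_2$ turns right'') actually produce a well-defined infinite schedule with no deadlock, and that no high priority point falls through the cracks during the brief windows when robot $r_i$ has left $\ell_{2i-1}$ but $r_{i+1}$ has not yet taken over the relevant overlap (or symmetrically on the way back). The cleanest way I see to handle this is to prove an invariant of the form: at every time $t$, every high priority point $p$ lies in some lid currently being periodically covered, or lies on the segment currently being swept by exactly the robot that just left it, and in either case the next visit to $p$ occurs within $2\Lambda_{2k}$. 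Establishing that invariant is preserved across each of the four phase types, together with the elementary fact that a round trip over one or two adjacent lids of length $\Lambda_{2k}$ has duration $\le 2\Lambda_{2k}$, yields the bound $I_{A}\le 2\Lambda_{2k}$; combined with Theorem~\ref{thm:opt1} this strategy is optimal.
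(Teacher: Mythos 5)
Your plan matches the paper's proof in substance: both rest on the observation that the odd lids and the even lids each form a single $\Lambda_{2k}$-lid cover of $H$ (so each stable phase has idle time at most $2\Lambda_{2k}$), and both control the staggered one-robot-at-a-time transitions by an invariant/induction showing that each handoff gap is a single round trip of length at most $2\Lambda_{2k}$ (the paper phrases this as induction on the index of the robot that has switched). One small slip worth fixing: for an intercap point $p\in\ell_{2i}\cap\ell_{2i+1}$ the roles are reversed from what you wrote --- $r_{i+1}$ services $p$ during the odd phase via $\ell_{2i+1}$ and $r_i$ services it during the even phase via $\ell_{2i}$ --- but this relabeling does not affect the argument.
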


\begin{proof} 
The first observation is that when all the robots are either covering the even or odd lids, all high priority points are covered since each set of odd and even lids form a single $\Lambda_{2k}$-lid cover. Indeed, if a high priority point $p$ is covered by only one set, say the odd lids, then two odd lids must cover $p$ which would contradict the order of the lids. Moreover, the idle time is bounded by $2\Lambda_{2k}$. We show by induction that the idle time of the high priority points does not increase when the robots switch from odd to even. The case when the robots switch from even to odd is analogous and left to the reader as an exercise. The inductive hypothesis is that the first $i$ robots on the left are covering the even lids, the $k-i-1$ are covering the odd lids, and the idle time is bounded by $2\Lambda_{2k}$. For the base case,  consider $i=1$. Since  $\ell_{1}$ does not intersect a lid on the left,  when robot $r_i$ switches to  $\ell_{2}$, it can return to $Left(\ell_{2})$ in time $2\Lambda_{2k}$. Assume that the inductive hypothesis is true for some $i-1$ greater than one.  Consider robot $r_i$. Observe that the high priority points in $\ell_{2(i-1)} \cap \ell_{2i-1} $ are being covered by robot $r_{i-1}$ with idle time bounded by $2\Lambda_{2k}$. When robot $r_i$ switches to lid $\ell_{2i}$ it returns to $Left(\ell_{2i})$ in time $2\Lambda_{2k}$. Since the union of the odd and even lids covers $C$, the low priority points are being visited infinitely often. The theorem follows. 
\end{proof}

From Theorems \ref{thm:uppweak} and \ref{thm:upper} we obtain an algorithm that attains optimal idle time. 

\begin{corollary}
$I^* \leq 2 \min(\lambda_{k-1}, \Lambda_{2k})$.
\end{corollary}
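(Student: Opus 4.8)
The final statement to prove is the Corollary $I^* \leq 2\min(\lambda_{k-1}, \Lambda_{2k})$. The plan is straightforward: this should follow immediately from combining Theorem~\ref{thm:uppweak} (Strategy~1 attains idle time $2\lambda_{k-1}$) and Theorem~\ref{thm:upper} (Strategy~3 attains idle time $2\Lambda_{2k}$). Since both strategies are valid patrolling strategies (each visits every point of $C$ infinitely often while bounding the idle time of high priority points), and since $I^*$ is defined as the infimum over all strategies of the idle time, $I^*$ is at most the idle time of either strategy. Taking the better of the two gives $I^* \leq \min(2\lambda_{k-1}, 2\Lambda_{2k}) = 2\min(\lambda_{k-1}, \Lambda_{2k})$.

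First I would invoke Theorem~\ref{thm:uppweak} to assert that Strategy~1 is a valid strategy for the $k$ robots whose idle time is $2\lambda_{k-1}$; the robot $r_k$ covering all of $C$ guarantees the low priority points are visited infinitely often, and each lid of the single $\lambda_{k-1}$-lid cover of $H$ being traversed back and forth by a dedicated robot bounds every high priority point's idle time by twice the lid length. Next I would invoke Theorem~\ref{thm:upper} for Strategy~3, which uses a strong double $\Lambda_{2k}$-lid cover with $2k$ lids and achieves idle time $2\Lambda_{2k}$; again the union of odd and even lids covering $C$ guarantees low priority points are visited infinitely often, and the inductive argument in that theorem shows no high priority point is left unvisited for more than $2\Lambda_{2k}$. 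Then, since $I^*_k = \inf_{\forall A}(I_A)$, we have $I^*_k \le I_{\text{Strategy 1}} = 2\lambda_{k-1}$ and $I^*_k \le I_{\text{Strategy 3}} = 2\Lambda_{2k}$, so $I^*_k \le \min(2\lambda_{k-1}, 2\Lambda_{2k}) = 2\min(\lambda_{k-1}, \Lambda_{2k})$.

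There is essentially no obstacle here; the corollary is a one-line consequence of the two theorems already established, precisely analogous to how Corollary~\ref{cor:weak} was derived from Theorems~\ref{thm:aprx} and~\ref{thm:uppweak}. The only thing to be slightly careful about is that both strategies genuinely respect the speed constraint (robots move at speed at most one) and genuinely visit the low priority points infinitely often, so that the idle-time bounds from the theorems apply to bona fide strategies in the sense of the problem definition — but this is already handled inside the proofs of Theorems~\ref{thm:uppweak} and~\ref{thm:upper}. Combined with the matching lower bound of Theorem~\ref{thm:opt1}, this establishes that $I^*_k = 2\min(\lambda_{k-1}, \Lambda_{2k})$ when $H$ is in general position, which is the main result of the paper.

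\begin{proof}
The corollary follows immediately from Theorems~\ref{thm:uppweak}~and~\ref{thm:upper}. By Theorem~\ref{thm:uppweak}, Strategy~1 is a strategy with idle time $2\lambda_{k-1}$, and by Theorem~\ref{thm:upper}, Strategy~3 is a strategy with idle time $2\Lambda_{2k}$. Since $I^*_k = \inf_{\forall A}(I_A)$ is taken over all strategies, $I^*_k \leq \min(2\lambda_{k-1}, 2\Lambda_{2k}) = 2\min(\lambda_{k-1}, \Lambda_{2k})$.
\end{proof}
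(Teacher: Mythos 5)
Your proof is correct and matches the paper's approach exactly: the corollary is obtained by combining Theorem~\ref{thm:uppweak} (Strategy~1, idle time $2\lambda_{k-1}$) with Theorem~\ref{thm:upper} (Strategy~3, idle time $2\Lambda_{2k}$) and taking the minimum, just as the paper does. No issues.
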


\section{Computing Optimal Lid Covers}~\label{sec:algorithm}

In this section, we show how to efficiently compute a strong double lid cover with $2k$ lids of minimum length and a single lid cover with $k-1$ lids of minimum length. We show that they can be computed in $O(\max(k,n) \log n)$ time where $n$ is the number of high priority sections. A feasible solution is a solution where at least one block is critical. First, we show that given a length $l$, we can determine in $O(\max(k,n))$ time whether the unit segment accepts a strong double cover and a single cover with $2k$ and $k-1$ lids of length $l$, respectively. Then we obtain the lid length of a feasible solution to find the optimal value using a binary search. 

We store the high priority sections in an array $H$ of dimension $n$ where $H_i =\{left, right\}$, i.e., the initial point and final point of the $i$-high priority section. The set of lids is store in the array $L$ with the appropriate  dimension  where $L_i = \{left, right, h\}$ such that $L_i.left$ and $L_i.right$ indicates the initial and final position of the $i$-lid and $L_i.h$ is the index of the right most high priority section such that $H_{L_i.h}.left \leq right$.

Given $k$ and the lid length $l$, we can decide whether $C$ accepts a  strong double $l$-lid cover with $k$ lids in time $O(\max(k, n))$.

\begin{lemma}\label{lem:alg1}
There exists an algorithm that decides if $C$ accepts a right shifted strong double $l$-lid cover with $k$ lids in time $O(\max(k, n))$ where $n$ is the number of high priority sections.
\end{lemma}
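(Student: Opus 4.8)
The plan is to give a greedy, left-to-right sweep that constructs the (unique) left-shifted strong double $l$-lid cover $\ESEll{k}$ from Lemma~\ref{thm:rightshiftlaxcover}, and then to argue that $C$ admits a right shifted strong double $l$-lid cover with $k$ lids if and only if this greedy process places all required lids within $C$ (i.e.\ never runs past $1$ before the priority regions and $C$ are fully covered, and uses at most $k$ lids). Since the lemma asks specifically about the right shifted cover, I would symmetrically run the construction of Lemma~\ref{thm:rightshiftlaxcover} in reverse (scanning from $1$ toward $0$); by the uniqueness of $\ESErr{k}$ noted after that lemma, this reversed greedy is canonical, so feasibility of \emph{any} $l$-lid cover with $k$ lids is equivalent to feasibility of the greedy one. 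Concretely: set $\ell_1=[1-l,1]$ (anchored at the right end), and having placed $\ell_1,\dots,\ell_{i-1}$, set the right endpoint of $\ell_i$ to $R(\ell_{i-1}\setminus \ell_{i-2})$ (the rightmost high-priority point in the part of $\ell_{i-1}$ not already double-covered by $\ell_{i-2}$), falling back to $Left(\ell_{i-1})$ when that part contains no high-priority point; stop once the running left frontier reaches $0$ \emph{and} every high-priority point to the right has been double covered. Accept iff this happens using $\le k$ lids.

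The correctness argument has two directions. For soundness, the construction is exactly the (reversed) one in Lemma~\ref{thm:rightshiftlaxcover}, so whatever it outputs is a genuine strong double $l$-lid cover; one only needs to check the terminating/counting condition matches "fits in $k$ lids inside $C$." For optimality (completeness), I would prove a standard exchange/greedy-stays-ahead claim: among all strong double $l$-lid covers, the right shifted one pushes each lid as far right as possible, so if $k$ lids of length $l$ suffice at all, the greedy cover finishes no later and uses no more lids — this is essentially Lemma~\ref{lemma:smaller} specialized to the case where we additionally allow non-optimal $l$, plus the observation in the proof of Lemma~\ref{thm:rightshiftlaxcover} that the construction never wastes a lid. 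So the decision is correct.

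For the running time, the key point is that the scan visits the $n$ high-priority segments monotonically and places at most $k$ lids, spending $O(1)$ amortized work per step, provided the queries "$R(\ell_{i-1}\setminus \ell_{i-2})$" and "which high-priority segment contains a given point / is the next one to the left" are answered in $O(1)$ amortized time. I would maintain two pointers into the array $H$ (the current segment being covered and the next segment whose right endpoint has not yet been double covered), each of which only moves leftward over the whole run, giving a total of $O(n)$ pointer advances; combined with the at most $k$ lid placements this yields $O(\max(k,n))$. The precomputed field $L_i.h$ described before the lemma is what makes the "rightmost high-priority section with left endpoint $\le$ a given value" lookup $O(1)$.

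I expect the main obstacle to be the bookkeeping in the greedy step itself — specifically, correctly identifying $\ell_{i-1}\setminus\ell_{i-2}$, the uncovered-on-the-right portion, and handling the low-priority fallback case $\min\{1,\dots\}$ (here $\max\{0,\dots\}$) and the boundary lids $\ell_1$ and the last lid cleanly, so that the two monotone pointers never overshoot and the termination test fires exactly when $0$ is reached with all priority points double covered. Once that invariant is stated precisely, the amortized analysis and the greedy-exchange optimality argument are routine; the substance is packaging the construction of Lemma~\ref{thm:rightshiftlaxcover} as an online algorithm with $O(1)$-amortized steps and verifying it never does more than $O(\max(k,n))$ total work.
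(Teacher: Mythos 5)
Your proposal is essentially the paper's proof: a greedy sweep that anchors the first lid at an endpoint of $C$ and pushes each subsequent lid as far as possible in the sweep direction (using the frontier $\ell_{i-1}\setminus\ell_{i-2}$ with a fallback when it contains no high-priority point), maintaining the invariant that the swept prefix is doubly covered on $H$ and singly covered on $C$, with the same amortized monotone-pointer analysis giving $O(\max(k,n))$; the only difference is that you sweep from $1$ toward $0$ while the paper's Algorithm~1 sweeps from $0$ toward $1$, which is just the mirror image. You are in fact slightly more careful than the paper in explicitly noting that the completeness direction (greedy fails $\Rightarrow$ no cover exists) needs an exchange/greedy-stays-ahead argument, which the paper leaves implicit.
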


\begin{proof} 
For the proof, we consider  two sentinel high priority sections $H_0$ and  $H_{n+1}$ where $H_0.left = H_0.right = 0$ and $H_{n+1}.left = H_{n+1}.right = 1$. We also consider one  sentinel lid $L_0$ where $L_0.h = 0$. 

We prove by induction on $i$. The inductive hypothesis is that for every $i \in [1,2k]$,   the segments $[0, L_{i-1}.right]$, $[0, L{i}.right]$  are strong double covered and single covered with $i$ lids (not necessarily disjoint), respectively and $L_i.h$ is the index of the right most high priority section such that $H_{L_i.h}.left \leq L_i.right$. Consequently, $H_{L_i.h+1}.left > L_i.right$.

Base step ($i=1$):  Let $L_i =  \{0, l, h\}$ where $h$ is the index of the high priority section such that $H_h.left  \leq L_1.right$ and $H_{h+1}.left > L_i.right$. Trivially,  the segments $[0, L_0.right]$ and $[0, L_1.right]$ are double covered and single covered, respectively.

Inductive step ($i>1$): Assume that the inductive hypothesis is true for $i-1$. Therefore, the segment $[0, L_{i-2}.right]$ is double covered and the segment $[0, L_{i-1}.right]$ is single covered. Moreover, $H_{L_{i-1}.h}.left \leq L_{i-1}.right$ and $H_{L_{i-1}.h+1}.left > L_{i-1}.right$.

\begin{algorithm}[ht!]
  \caption{Computes  right shifted strong double $l$-lid cover with $k\geq 2$ lids if $C$ 
  accepts it}
  \label{alg:acceptdouble}
  \TitleOfAlgo{DoubleLidCover}

    \SetKwInput{InOut}{Input/Output}        
        \KwIn{$H$: The set of disjoint high priority segments}
        \KwIn{$k$: The number of lids}    
    \KwIn{$l$: Lid length}            
        \KwOut{$True$:  If $C$ accepts a strong double $l$-lid cover}
      \KwOut{$L$:  The set of lids}

  $L_1 = \{0, l, h : H_h.left  \leq l$ and $H_{h+1}.left > l\}$\;
   \For{$i=2 \to k$}
   {
       \If{$H_{L_{i-2}.h}.right \leq L_{i-2}.right$}
       {  
        \If{$H_{L_{i-1}.h}.left > L_{i-2}.right$}
           {  
               $left \gets H_{L_{i-1}.h}.left$ \;
        }
        \Else{
            $left \gets L_{i-1}.right$ \;
        }       
    }
       \Else{
           $left \gets H_{L_{i-2}.h}.right$\;
    }
      $L_i = \{left, left+l, h : H_h.left  \leq left+l$ { \bf and } $H_{h+1}.left > left+l\}$\;
   }
   \Return  ($L_{k}.right  \geq 1 \mbox{ \bf and } L_{k-1}.right  \geq H_n.right$, $L$)\;
\end{algorithm}

 We consider  two cases:

\begin{compactenum}
\item 
The rightmost point of $H_{L_{i-2}.h}$ is at most $L_{i-2}.right$, i.e., $H_{L_{i-2}.h}.right \leq L_{i-2}.right$. We consider two sub cases. 
\begin{compactenum}
\item $L_{i-1} \setminus  L_{i-2}$ contains high priority points, i.e., $L_{i-2}.right < H_{L_{i-1}.h}.left \leq L_{i-1}.right$.  Let $left  = H_{L_{i-1}.h}.left$.
\item $L_{i-1} \setminus  L_{i-2}$ does not contain any one high priority point, i.e., $H_{L_{i-1}.h}.left > L_{i-1}.right$.   Let $left = L_{i-1}.right$.
\end{compactenum}
\item The rightmost point of $H_{L_{i-2}.h}$ is beyond $L_{i-2}.right$, i.e., $H_{L_{i-2}.h}.right > L_{i-2}.right$.  Let $left = L_{i-2}.right$.
\end{compactenum}

Let $L_i = \{left, left + l, h\}$ where $H_h.left  \leq L_i.right$ and $H_{h+1}.left > L_i.right$. The inductive hypothesis holds since the segment $[0, L_{i-1}.right]$ is double cover and the segment $[0, L_i.right]$ is single covered.

$C$ accepts a strong double $l$-lid cover if $C$ is fully covered, i.e.,  $L_k.right \geq 1$ and all the high priority segments are in the segment $[0,  L_{k-1}.right]$, i.e., $H_n.right \leq L_{k-1}.right$.

Regarding the time complexity, computing the index $h$ for lid $i$ takes $O(L_i.h - L_{i-1}.h)$.  Hence, computing $h$ for all  lids takes  $O(n)$. Further, computing all lids takes time $O(k)$. Therefore, the time complexity is  $O(\max(n, k))$. 
\end{proof}

The pseudo-code of Lemma~\ref{lem:alg1} is presented in Algorithm~\ref{alg:acceptdouble} .
Similarly, given  $k$ and the lid length we can decide whether $C$ accepts a single lid cover with $k$ lids in time $O(\max(k, n))$. 

\begin{lemma} \label{lem:acceptsdouble}
There exists an algorithm that decides if $C$ accepts a right shifted single $l$-lid cover with $k$ lids in time $O(\max(k, n))$ where $n$ is the number of high priority sections.
\end{lemma}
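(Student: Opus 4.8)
\medskip
\noindent\textbf{Proof sketch (plan).}
The plan is to reuse the greedy left-to-right sweep of Lemma~\ref{lem:alg1}, which is considerably simpler here because a single cover needs each high-priority point in only one lid and does not need to cover all of $C$. As before I would prepend and append the sentinel sections $H_0$ with $H_0.left=H_0.right=0$ and $H_{n+1}$ with $H_{n+1}.left=H_{n+1}.right=1$, and maintain a pointer to the leftmost high-priority point not yet covered. When lid $L_i$ is placed, let $p$ be that point and set $L_i=[p,\,p+l]$; this is the rightmost admissible position of a lid that still covers $p$ (note $p$ is always itself a high-priority point, being either the left endpoint of a section or the point $p'+l$, where $p'$ was the previous uncovered point and a section juts past its lid), so the family produced is exactly the right-shifted canonical form. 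Then advance the pointer: with $h$ the largest index satisfying $H_h.left\le p+l$, the new leftmost uncovered point is $H_{h+1}.left$ if $H_h.right\le p+l$, and $p+l$ otherwise. The algorithm outputs the lid array $L$ and accepts iff all of $H$ has been covered within $k$ lids, conveniently tested as $L_k.right\ge H_n.right$.

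For correctness I would prove, by induction on $i$, the invariant that after $L_1,\dots,L_i$ are placed the covered region is $H\cap[0,g_i]$ with $g_i$ maximal among all families of $i$ lids of length $l$; the inductive step is the standard exchange argument, replacing the leftmost lid of any sorted cover by $L_1$ (which only enlarges the covered prefix) and recursing. Hence a single $l$-lid cover with $k$ lids exists iff the greedy one does, and by the shifting observation preceding Lemma~\ref{lem:criticalblock} this is equivalent to $C$ admitting a right-shifted single $l$-lid cover with $k$ lids, as required.

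The running time follows from the same amortized accounting as in Lemma~\ref{lem:alg1}: determining $h$ for $L_i$ costs $O(L_i.h-L_{i-1}.h)$, so the total pointer advancement over the sweep is $O(n)$, while there are $O(k)$ lids each processed with $O(1)$ further work, giving $O(\max(k,n))$ overall.

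The step I expect to be the main obstacle is the case where a priority section extends past the lid that covers its left endpoint, forcing the next lid to begin in the interior of that section; this is the single-cover analogue of the second case of Algorithm~\ref{alg:acceptdouble}, and I would need to verify both that the pointer is then advanced to $p+l$ rather than to the next section, and that the exchange argument still closes when a competing optimal cover has a lid ending inside such a section. Everything else is routine bookkeeping.
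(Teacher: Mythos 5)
Your proposal is correct and matches the paper's proof essentially step for step: the same greedy right-shifted placement $L_i=[p,p+l]$ at the leftmost uncovered high-priority point, the same two-case pointer advancement (to $H_{h+1}.left$ if the current section is finished, to $p+l$ if it juts past the lid), the same acceptance test $L_k.right\ge H_n.right$, and the same amortized $O(\max(k,n))$ accounting. Your explicit exchange argument for greedy optimality is slightly more detailed than the paper's terse inductive claim, but it is the same underlying idea.
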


\begin{proof}
For the proof, we consider  one sentinel high priority sections $H_0$ where $H_0.left = H_0.right = 0$. We also consider one  sentinel lid $L_0$ where $L_0.h = 0$. The proof is constructive. Suppose that at step $i$, the segment $[0, L_{i-1}.right]$ is being covered with $i-1$ lids of length $l$. Therefore, if 
\begin{compactenum}
\item $H_{L_{i-1}.h}.right > L_{i-1}.right$. Let $L_i.left = L_{i-1}.right$. 
\item $H_{L_{i-1}.h}.right \leq L_{i-1}.right$.  Let $L_i.left = H[H_{L_{i-1}.h + 1}.left$. 
\end{compactenum}
Clearly the segment $[0, L_i.right]$ is being covered with $i$ lids. Inductively, the lemma follows.

$C$ accepts a  single $l$-lid cover  if $H_n.right \leq L_k.right$. Regarding the time complexity, observe that computing $L_i.h$ takes at most $O(L_i.h -  L_{i-1}.h)$ time. Therefore, the time complexity is  $O(\max(n, k))$. 
\end{proof}

\begin{algorithm}[ht!]
  \caption{Computes  right shifted  single $l$-lid cover with $k\geq 1$ lids if $C$ 
  accepts it}\
    \label{alg:acceptcover1}
    \TitleOfAlgo{LidCover}

  \SetKwInput{InOut}{Input/Output}        
        \KwIn{$H$: The set of disjoint high priority segments}
        \KwIn{$k$: The number of lids}    
    \KwIn{$l$: Lid length}            
        \KwOut{$True$:  If $C$ accepts a single $l$-lid cover}

       \For{$i=1 \to k$}
       {
           \If{$H_{L_{i-1}.h}.right > L_{i-1}.right$}
           {  
            $left \gets L_{i-1}.right$ \;
        }       
           \Else{
               $left \gets H_{L_{i-1}.h+1}.left$\;
        }
         $L_1 = \{left, left + l, h : H_h.left  \leq left + l$ { \bf and } $H_{h+1}.left > left + l\}$\;
       }
       \Return  $H_n.right \leq L_k.right$\;
\end{algorithm}

The proof and the pseudo-code (Algorithm~\ref{alg:acceptcover1}).
Recall that a block $B$ is the set of lids $\{\ell_{a(1)},\ell_{a(2)}, ..., \ell_{a(b)}\}$ such that $Right(\ell_{a(i)}) = Left(\ell_{a(i+1)})$. We say that a (strong double) $l$-lid cover is feasible if there exists a block  $B$ such that $Left(B)$ and $Right(B)$ are the endpoints of high priority sections. Next we show that if $C$ admits a (strong double) $l$-lid cover, then we can obtain in linear time a new lid length $l' \leq l$ such that $C$ admits a feasible (strong double) $l'$-lid cover.

\begin{lemma} \label{lem:feasible}
Suppose $C$ admits a strong double $l$-lid cover. There exists an algorithm that computes the lid length $l'$ such that $C$ admits a feasible strong double $l'$-lid cover in $O(n)$ time such that $l' \leq l$.
\end{lemma}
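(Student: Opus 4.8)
The plan is to start from an arbitrary strong double $l$-lid cover and "tighten" it until some block becomes critical, tracking by how much the lid length can be reduced. By Lemma~\ref{thm:rightshiftlaxcover} (and the discussion following it) I may assume the cover is the left-shifted cover $\ESEll{2k}$, so every maximal block $B$ has $Left(B)$ equal to a left endpoint of a segment of $H^*$ by Lemma~\ref{lem:blocks}. If $\ESEll{2k}$ already has a critical block (which it does, by Lemma~\ref{lem:dcritical}), then it is feasible with $l' = l$ and we are done; so the interesting work is only a sanity check on the notion of feasibility, together with computing the reduction. The key quantity is, for each maximal block $B$, the slack $Right(B) - R^*(B) \ge 0$; a block is already critical exactly when this slack is $0$ for it and, simultaneously, $Left(B)$ is a left endpoint of a segment of $H$ (not merely of $H^*$).

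First I would walk the lids left to right in a single linear pass, grouping them into the maximal blocks $B_1 < B_2 < \dots < B_m$ of $\ESEll{2k}$, and for each $B_j$ record $|B_j|$, $Left(B_j)$, $Right(B_j)$, and the innermost covered high-priority endpoints $L(B_j), R(B_j)$ (equivalently $L^*(B_j), R^*(B_j)$); this is $O(n)$ since the lid data structure already stores, for each lid, the index of the relevant high-priority section. Second, define
\[
d_{\min} \;=\; \min_{j=1}^{m} \frac{Right(B_j) - R(B_j)}{|B_j|},
\]
exactly as in the proof of Lemma~\ref{lem:dcritical}. If $d_{\min} = 0$, the cover is already feasible and $l' = l$. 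Otherwise shrink every lid by $d_{\min}$, re-aligning the shrunken lids within each maximal block so they still abut, and keeping the left endpoint of each maximal block fixed; by the argument in Lemma~\ref{lem:dcritical} this still strong-double-covers $C$, every high-priority point still double covered, every low-priority point still covered, and now the block $B_{j^\ast}$ achieving the minimum has $Right(B_{j^\ast}) - R(B_{j^\ast}) = 0$, i.e.\ $Right(B_{j^\ast}) = R(B_{j^\ast})$ is a right endpoint of a segment of $H$; combined with $Left(B_{j^\ast})$ being a left endpoint of $H^*$, and after possibly passing to a sub-block that starts at a genuine $H$-endpoint exactly as in Lemma~\ref{lem:dcritical}, we get a critical block. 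Set $l' = l - d_{\min} < l$ and output $l'$ together with the shrunken cover. The whole computation — one pass to build the blocks, one pass to compute $d_{\min}$, one pass to apply the shift — is $O(n)$ (the number of lids $2k$ being absorbed into $O(\max(k,n))$, but since here $k$ is fixed and the bound claimed is $O(n)$, note $2k = O(n)$ once every lid carries a high-priority point, and otherwise Lemma~\ref{lem:all}-type reasoning / the feasibility reduction still runs in $O(\max(k,n))$; for the stated $O(n)$ bound one assumes $k \le n$ or folds it into $\max$).

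The correctness of the shrink step is essentially Lemma~\ref{lem:dcritical} re-used verbatim, so the only genuine novelty is (i) checking that the reduced value $l'$ is still an admissible lid length for a strong double cover — which is immediate since we exhibit the cover — and (ii) checking that one pass suffices, i.e.\ that after a single uniform shrink by $d_{\min}$ we do reach feasibility rather than needing to iterate. The second point is where I would be most careful: shrinking all blocks by the same relative-to-length amount guarantees that the minimizing block hits slack $0$, but I must confirm no other block's left endpoint "slides past" a high-priority endpoint in a way that destroys the $L^*$-alignment invariant of Lemma~\ref{lem:blocks} — this is handled by fixing each maximal block's left endpoint during the shrink, so the construction is self-consistent.

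The main obstacle I anticipate is bookkeeping rather than conceptual: precisely defining "shrink each lid and re-align within its block" on the array representation $L_i = \{left, right, h\}$ so that the updated $h$ indices are recomputed in amortized linear total time (each $h$ only moves left, so a two-pointer sweep works), and making sure the degenerate cases — blocks consisting of a single lid, blocks with no high-priority point at all (so $L(B) = Right(B)$ and the slack is the whole block, never the minimum unless everything is priority-free, which contradicts $C$ needing a double cover of a nonempty $H$), and the block containing the point $1$ — are all subsumed. None of these should obstruct the $O(n)$ bound or the existence claim.
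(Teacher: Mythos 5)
There is a genuine gap, and it sits exactly where you said you would ``be most careful.'' Your shrink amount is the per-block quantity $d_{\min}=\min_j (Right(B_j)-R(B_j))/|B_j|$ borrowed from Lemma~\ref{lem:dcritical}, but the paper's proof of the present lemma deliberately uses a finer, per-lid quantity: for the $i$-th lid of a block it takes $(\ell_i.right - H_{\ell_i.h}.right)/i$ when the lid reaches past the right end of the last high-priority segment it touches, and $(\ell_i.right - H_{\ell_i.h}.left)/i$ otherwise, minimizing over all lids of all blocks. The first consequence of the difference is that $Right(B)-R(B)=0$ does \emph{not} imply $B$ is critical: if $Right(B)$ lies in the interior of a high-priority segment, then $R(B)=Right(B)$, your $d_{\min}$ is $0$, and your algorithm returns $l'=l$ declaring the cover feasible even though no block ends at a right endpoint of a segment of $H$. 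The paper's second case (measuring the slack down to $H_{\ell_i.h}.left$ when $\ell_i.right < H_{\ell_i.h}.right$) exists precisely so that shrinking continues in this situation.

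The second consequence concerns your one-pass claim. Your quantity pins only the last lid of each block; an intermediate lid $\ell_i$ moves left by $i\,d_{\min}$, which can carry $\ell_i.right$ past $H_{\ell_i.h}.left$, changing which lids touch which high-priority segments and hence the block decomposition itself. Your conclusion that ``the minimizing block reaches slack zero and becomes critical'' presupposes that the block structure survives the shrink, and the per-block $d_{\min}$ does not guarantee that; the paper's induction on blocks, which re-establishes $Left(B_j)=Left(B'_j)$ together with the per-lid invariants after the shrink, is exactly the step you replace with ``Lemma~\ref{lem:dcritical} reused verbatim.'' Note also that Lemma~\ref{lem:dcritical} only needs the shrunken object to be \emph{some} valid strong double cover of strictly smaller lid length (to contradict optimality of $\Lambda_k$), whereas here you must exhibit a cover that is moreover feasible after a single shrink --- a strictly stronger conclusion that the coarser $d_{\min}$ does not deliver. (A smaller omission: criticality also requires $Left(B)$ to be a left endpoint of a segment of $H$, not merely of $H^*$; the sub-block argument of Lemma~\ref{lem:dcritical} repairs the right end of a block, not the left.)
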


\begin{proof} 
Let  $\mathcal{B}$ be  the set of the maximal blocks in a right shifted (strong double) $l$-lids cover. Let $\ell_i$ be the $i$-lid in the block $B$ and let 
$$
m^B(\ell_i)  = 
\begin{cases}
\frac{\ell_i.right - H_{\ell_i.h}.right}{i} &\mbox{ if } \ell_i.right \geq H[\ell_i.h].right \\
\frac{\ell_i.right - H_{\ell_i.h}.left}{i} &\mbox{ if } \ell_i.right < H[\ell_i.h].right\\
\end{cases}
$$
and let $$m = \min_{(\forall B \in \mathcal{B})}\min_{(\forall \ell_i \in B)}(m^B(\ell_i)).$$

Let $l' = l - m$. Now we show that  $C$ accepts a strong double $l'$-lid cover with at least one feasible block. Let $\mathcal{B}'$ denote the  blocks with lid length $l'$. We assume that $\mathcal{B}$ and $\mathcal{B}'$ are sorted.

We show that for all  $B_j \in \mathcal{B}$ and $B'_j \in \mathcal{B}'$ $Left(B_j) = Left(B'_j)$ and  for  each lid $\ell_i \in B_j$ such that 
\begin{compactitem}
\item $\ell_i.right \geq H_{\ell_{i}.h}.right$, we show that 
$$H_{\ell_{i}.h}.right \leq \ell'_i.right  = \ell_i.right - im.$$ 
\item  $\ell_i.right < H_{\ell_{i}.h}.right $, we show  that
$$H_{\ell_{i}.h}.left \leq \ell'_i.right =  \ell_i.right - im.$$
\end{compactitem}

We show for a block $B_j$ by induction on $j$. For the base case assume $j=1$. Trivially,  $Left(B_1) = Left(B'_1) = 0$.  Let $\ell_i \in B_1$ and $\ell'_i \in B'_1$.

If $\ell_i.right \geq H_{\ell_{i}.h}.right$, then $\ell_i.right - im \geq H_{\ell_i.h}.right$ since $$m \leq \frac{\ell_i.right - H_{\ell_i.h}.right}{i}.$$ 

If $\ell_i.right < H_{\ell_{i}.h}.right$ then $H_{\ell_{i}.h}.left \leq \ell_i.right - im$ since  $$m \leq \frac{\ell_i.right - H_{\ell_i.h}.left}{i}.$$

Assume that $Left(B_{j-1}) = Left(B'_{j-1})$ for some $j > 1$. Therefore, for  each lid $\ell_i \in B_{j-1}$,  
\begin{compactitem}
\item if $\ell_i.right \geq H_{\ell_{i}.h}.right$, $$H_{\ell_{i}.h}.right \leq \ell'_i.right  = \ell_i.right - im.$$ Therefore,  $Left(B_j) = Left(B'_j) = H_{\ell_{i}.h+1}.left$ and  each $H_{l}$ where $l \leq  \ell_{i}.h$ is double cover.
\item if  $\ell_i.right < H_{\ell_{i}.h}.right $, $$H_{\ell_{i}.h}.left \leq \ell'_i.right =  \ell_i.right - im.$$ Observe that $R(B'_j) = R(B_j)$ since $m \leq \frac{\ell_l.right - H_{\ell_{l}.h}.right}{i}$ where $\ell_l \in B_j$ is the rightmost lid. 
\end{compactitem}
 
Observe that at least one block is feasible.  It is not difficult to see that the running time is linear on the number of lids since the blocks are disjoint. 
\end{proof}

\begin{algorithm}[ht!]
  \caption{Given a set of maximal blocks of a strong double $l$-lid cover, it returns $m$ where 
  $C$ accepts a feasible strong double $(l-m)$-lid cover  }
  \label{alg:feasabledoublesolution}
   \SetKwInput{InOut}{Input/Output}
       \TitleOfAlgo{FeasibleSolution}
        
        \KwIn{$\mathcal{B}$: The set of maximal blocks of a strong double $l$-lid cover }
        \KwOut{$m$: such that $C$ accepts a feasible strong double $(l-m)$-lid cover}
  
   $m = \infty$\;
        \For{$B \in \mathcal{B}$}
        {
            \For{$\ell_i \in |B|$}
            {
                \If{$\ell_i.right > H_{\ell_i.h}.right$}
               {
                   $m = \min(m, (\ell_i.right - H_{\ell_i.h}.right)/i)$ \;
                    }
            \Else
            {
                  $m = \min(m, (\ell_i.right - H_{\ell_i.h}.left)/i)$ \;
            }
        }
       }
       \Return $m$\;
\end{algorithm}

The pseudo-code is presented in Algorithm~\ref{alg:feasabledoublesolution}). 
The case of the single cover is analogous and is left to the reader as an exercise. Given a strong double $l$-lid cover, we can obtain a feasible solution in linear time and use it in a binary search
to obtain an exact lid length in time $O(n \log n)$.

\begin{theorem}\label{thm:mainalg}
There exists an algorithm that finds the minimum length $\Lambda_{2k}$ such that $C$ accepts a strong double $\Lambda_{2k}$-lid cover in $O(\max(n, k)\log n)$ time.
\end{theorem}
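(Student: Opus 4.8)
The plan is to pair the linear-time decision procedure of Lemma~\ref{lem:alg1} with a binary search on the lid length, exactly as in \cite{collins2013optimal}, and to use Lemma~\ref{lem:feasible} so that the search lands on the true optimum rather than on a mere approximation. First I would record monotonicity of feasibility: if $C$ admits a strong double $l$-lid cover and $l'\ge l$, we may enlarge each lid to length $l'$ keeping its left endpoint fixed (clipping the rightmost lid at $1$), and neither full coverage of $C$ nor double coverage of $H$ is lost; combined with Lemma~\ref{thm:existenceOfMin} this shows $\{l:\mathbb{S}_{2k}(l)\neq\emptyset\}=[\Lambda_{2k},\infty)$. Hence the predicate ``\textsc{DoubleLidCover}$(H,2k,l)$ returns true'' is monotone in $l$ and flips at $l=\Lambda_{2k}$, so a binary search on $l$ is well defined and each query costs $O(\max(2k,n))=O(\max(k,n))$ by Lemma~\ref{lem:alg1}.

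Next I would pin $\Lambda_{2k}$ down combinatorially. By Lemma~\ref{lem:dcritical} the right-shifted cover $\ESErr{2k}$ has a critical block $B=\{\ell_{a(1)},\dots,\ell_{a(b)}\}$ whose lids lie end to end and cover exactly $[Left(B),Right(B)]$, with $Left(B)$ a left endpoint and $Right(B)$ a right endpoint of high-priority segments; thus $\Lambda_{2k}=(Right(B)-Left(B))/b$ for some $b\le 2k$, so $\Lambda_{2k}$ lies in the explicit finite set $\mathcal{L}=\{(H_j.right-H_i.left)/m:1\le i,j\le n,\ 1\le m\le 2k\}$ of rationals, and the same reasoning shows that every \emph{feasible} strong double $l'$-lid cover has $l'\in\mathcal{L}$. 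I would run the binary search over a (sorted, implicitly accessed) view of $\mathcal{L}$: after $O(\log|\mathcal{L}|)=O(\log n)$ rounds the live interval $(l_{lo},l_{hi}]$ has $l_{lo}$ infeasible, $l_{hi}$ feasible, and contains a single element of $\mathcal{L}$, which must be $\Lambda_{2k}$. Applying \textsc{FeasibleSolution} (Lemma~\ref{lem:feasible}, Algorithm~\ref{alg:feasabledoublesolution}) to the cover returned by the last successful query at $l_{hi}$ then yields in $O(n)$ time a feasible length $l'=l_{hi}-m\in\mathcal{L}$ with $l_{lo}<\Lambda_{2k}\le l'\le l_{hi}$, hence $l'=\Lambda_{2k}$. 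Summing, $O(\log n)$ queries at $O(\max(k,n))$ each plus one $O(n)$ extraction give the claimed $O(\max(n,k)\log n)$. The single-cover case is identical, using Lemma~\ref{lem:acceptsdouble} (Algorithm~\ref{alg:acceptcover1}), the single-cover critical-block statement of Lemma~\ref{lem:criticalblock}, and the single-cover analogue of Lemma~\ref{lem:feasible}, so $\lambda_{k-1}$ is obtained in the same bound.

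The main obstacle is the exact-termination step: one must argue carefully that $O(\log n)$ rounds already isolate $\Lambda_{2k}$ inside $\mathcal{L}$ — equivalently, that the search can be driven as a genuine binary search over an (implicitly accessed) ordering of $\mathcal{L}$ rather than over the reals, and that comparing a query $l$ against $\Lambda_{2k}$ via \textsc{DoubleLidCover} followed by the \textsc{FeasibleSolution} snap can never overshoot to a strictly larger member of $\mathcal{L}$. This is exactly where the structural facts about left/right shifted covers and critical blocks (Lemmas~\ref{lem:dcritical}, \ref{lem:blocks}, \ref{lem:feasible}) carry the argument; the remainder is routine bookkeeping of the endpoint indices maintained in the arrays $H$ and $L$.
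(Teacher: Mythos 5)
Your plan is the same as the paper's: use the linear-time decision procedure of Lemma~\ref{lem:alg1} inside a binary search, and use the \textsc{FeasibleSolution} snap of Lemma~\ref{lem:feasible} to force exact termination; your observation that $\Lambda_{2k}=(Right(B)-Left(B))/b$ for a critical block $B$, so that the optimum lies in a finite candidate set, is exactly the structural fact the paper leans on. The one step you explicitly leave open --- driving the search as a genuine binary search over the implicit candidate set $\mathcal{L}$ so that $O(\log n)$ queries suffice --- is precisely where the paper is also thinnest: the paper bisects the real interval $[1/(2k),1/k]$, snaps every feasible query length down to a feasible (critical-block) length, tests $l'-\epsilon$ to certify optimality, and then simply asserts that ``there are at most $n^2$ critical blocks and each step reduces the critical blocks by half.'' Neither your sketch nor the paper supplies the selection mechanism (e.g.\ a median of the candidates in the live interval) that would make the halving claim literally true, so you have correctly located the soft spot rather than introduced a new one. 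Two small points to tighten: your set $\mathcal{L}$ has size $O(n^2k)$, so a clean binary search over it gives $O(\log n+\log k)$ rounds, not $O(\log n)$ (the paper's ``$n^2$ critical blocks'' count has the same looseness, since a block on a fixed pair of endpoints can consist of $b$ lids for any $b\le 2k$); and your monotonicity argument should note that enlarging lids to the right can push them past $1$, which is harmless since lids are only required to intersect $C$, not be contained in it.
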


 \begin{proof} 
Observe that $1/(2k)  \leq \Lambda_{2k} \leq 1/k$ since $C$. We use these values to perform a binary search. In other words,  Let $upper =  1/k$,  $lower = 1/(2k)$ and $l = (upper + lower) / 2$. If $C$ does not admit a strong double $l$-lid cover (using Lemma~\ref{lem:alg1}) then the solution is in the interval $upper, mid$. Otherwise,  the solution is in  $mid, lower$. Observe that after enough number of rounds, the result is an approximation. To guarantee an exact solution, when $C$ accepts a strong double $l$-lid cover, we obtain $l'$ such that $C$ accepts a feasible strong double $l'$-lid cover using Algorithm~\ref{alg:feasabledoublesolution}. To check whether $l'$ is the optimal we check if $C$ accepts a strong double $(l' - \epsilon)$-lid cover and return $l'$ if $C$ does not  accept it.

For the time complexity, from Lemma~\ref{lem:acceptsdouble}, checking and computing the strong double lid cover takes linear time. We can compute the maximal blocks greedily in linear time since they are mutually exclusive. Obtaining a feasible solution also takes linear time from Lemma~\ref{lem:feasible}. The theorem follows since there are at most $n^2$ critical blocks and each step it reduces the critical blocks by half.  
\end{proof}

\begin{algorithm}[ht!]
  \caption{Computes the strong double $\Lambda_2k$-lid cover}
  \label{alg:mindoublelid}
  
         \KwIn{$H$: The set of disjoint high priority segments}
        \KwIn{$k$: The number of lids}    
  $upper = 1/k$\;
    $lower = 1/(2k)$\;

   \While{$true$}
  {
       $l \gets (upper+ lower)/2$\;    
     Let $L$ be the set of $l$ lids in $C$\;
     \If{$L$ strong double cover $C$ }{
        Let $\mathcal{B}$ be the set of maximal blocks\;
        Compute $l'$ such that $C$ accepts a feasible strong double $l'$-lid cover \;
        $upper \gets l'$\;
    
        \If{$C$ does not accept a feasible strong double $(l' - \epsilon)$-lid cover }
        {
            \Return $upper$\;
        }
     }
     \Else{
         $lower \gets l$\;
    }          
   }
   \Return $l$\;
 \end{algorithm}

The pseudo-code is presented in Algorithm~\ref{alg:mindoublelid}.

\section{Conclusion}~\label{sec:conclusion}
We study the problem of patrolling a unit line segment that consists of high priority segments that require the maximum protection and low priority segments that require to be visited infinitely often. 
We provide lower and tight upper bounds when $k$ robots are available using the concept of strong double lid cover and single lid cover. We also provide a $O(\max(n, k)\log n)$ running time algorithm that finds the strong double lid cover and single lid cover with optimal lid length. Future work includes different topologies such as trees and graphs as well as distributed online strategies \footnote{Many thanks to Joshua Sack for the valuable discussion.}.


\end{document}